\newcommand{\nc}{\newcommand}
\nc{\prox}{$\sim$}
\nc{\SL}{\mathcal{T}} 
\nc{\GL}{\mathcal{G}} 
\nc{\trans}[2]{S_{#1}(#2)} 
\nc{\extended}[2]{H_{#1}(#2)} 
\nc{\abstracted}[2]{{#1}_{#2}} 
\nc{\MAProg}{\mathcal{P}_{E, \simrel}}
\nc{\linear}[1]{#1_\ell}
\nc{\Set}{\mathcal{S}}
\nc{\transform}[1]{#1^{\mathcal{T}}}
\nc{\transQ}[1]{\transform{#1}}   
\nc{\transRen}[1]{\widehat{#1}}       
\nc{\transSim}[1]{{#1}_{S}}       
\nc{\transUni}[1]{{#1}_{\sim}}    
\nc{\transPay}[1]{{#1}_{\mathit{pay}}}    
\nc{\elimS}[1]{\mathrm{elim}_{\simrel}(#1)} 
\nc{\elimD}[1]{\mathrm{elim}_{\qdom}(#1)} 
\nc{\Con}[1]{\mbox{Con}_{#1}}  
\nc{\pc}[3]{\mathsf{#1}_{#2}(#3)} 
\nc{\qval}[1]{\pc{qVal}{}{#1}} 
\nc{\qbound}[1]{\pc{qBound}{}{#1}} 
\nc{\qvali}[2]{\pc{qVal}{#1}{#2}} 
\nc{\qboundi}[2]{\pc{qBound}{#1}{#2}} 
\nc{\encode}[1]{\ulcorner#1\urcorner} 
\nc{\spair}[2]{\llparenthesis\, #1, #2 \,\rrparenthesis}
\newcommand{\schemenp}[1]{\mbox{#1}} 
\newcommand{\scheme}[2]{\schemenp{#1}(#2)} 
\newcommand{\clp}[1]{\scheme{CLP}{#1}} 
\newcommand{\qclp}[2]{\scheme{QCLP}{#1,#2}} 
\newcommand{\sqclp}[3]{\scheme{SQCLP}{#1,#2,#3}} 
\newcommand{\cdom}{\mathcal{C}} 
\newcommand{\rdom}{\mathcal{R}} 
\newcommand{\qdom}{\mathcal{D}} 
\newcommand{\aqdomd}[1]{D_{#1} \setminus \{\bt\}} 
\newcommand{\bqdomd}[1]{(D_{#1} \setminus \{\bt\}) \uplus \{?\}} 
\newcommand{\aqdom}{\aqdomd{}} 
\newcommand{\bqdom}{\bqdomd{}} 
\newcommand{\bt}{\text{\bf{b}}} 
\newcommand{\tp}{\text{\bf{t}}} 
\newcommand{\dleq}{\trianglelefteqslant} 
\newcommand{\dlt}{\vartriangleleft} 
\newcommand{\dgeq}{\trianglerighteqslant} 
\newcommand{\B}{\mathcal{B}} 
\newcommand{\U}{\mathcal{U}} 
\newcommand{\W}{\mathcal{W}} 
\newcommand{\simrel}{\mathcal{S}} 
\newcommand{\sid}{\simrel_{\mathrm{id}}} 
\newcommand{\Prog}{\mathcal{P}} 
\newcommand{\Var}{\mathcal V\!ar} 
\newcommand{\War}{\mathcal W\!ar} 
\newcommand{\set}[2]{\mathrm{#1}(#2)} 
\newcommand{\domset}[1]{\set{dom}{#1}} 
\newcommand{\varset}[1]{\set{var}{#1}} 
\newcommand{\warset}[1]{\set{war}{#1}} 
\nc{\Exp}{\mbox{Exp}_{\bot}(\Sigma,B,\Var)} 
\nc{\TExp}{\mbox{Exp}(\Sigma,B,\Var)} 
\nc{\GExp}{\mbox{Exp}_{\bot}(\Sigma,B)} 
\nc{\TGExp}{\mbox{Exp}(\Sigma,B)} 
\nc{\Term}{\mbox{Term}_{\bot}(\Sigma,B,\Var)} 
\nc{\TTerm}{\mbox{Term}(\Sigma,B,\Var)} 
\nc{\GTerm}{\mbox{Term}_{\bot}(\Sigma,B)} 
\nc{\TGTerm}{\mbox{Term}(\Sigma,B)} 
\nc{\At}{\mbox{At}(\Sigma,B,\Var)} 
\nc{\GAt}{\mbox{GAt}(\Sigma,B)} 
\nc{\PAt}{\mbox{PAt}(\Sigma,B,\Var)} 
\nc{\GPAt}{\mbox{GPAt}(\Sigma,B)} 
\nc{\Atz}{\mbox{At}_{\Sigma}} 
\nc{\QAtz}{\mbox{At}_{\Sigma}(\qdom)} 
\nc{\sust}{\mbox{Subst}_\Sigma} 
\nc{\Sust}{\mbox{Subst}(\Sigma,B,\Var)} 
\nc{\GSust}{\mbox{GSubst}(\Sigma,B)} 
\nc{\Soln}[3]{\mbox{Sol}_{#1}^{#2}(#3)} 
\nc{\Sol}[2]{\Soln{#1}{}{#2}} 
\nc{\GSol}[2]{\mbox{GSol}_{#1}(#2)}
\nc{\Solc}[1]{\Sol{\cdom}{#1}}
\nc{\CAns}[2]{\mbox{C\!Ans}_{#1}(#2)} 
\newcommand{\qat}[2]{#1 \sharp #2} 
\newcommand{\cat}[2]{#1 \Leftarrow #2} 
\newcommand{\cqat}[3]{\qat{#1}{#2} \Leftarrow #3} 
\newcommand{\qgets}[1]{\xleftarrow{#1}} 
\nc{\closure}[1]{\mbox{cl}_{#1}} 
\newcommand{\model}[1]{~{\models_{#1}}~} 
\newcommand{\M}[1]{\mathcal{M}_{#1}} 
\newcommand{\Mp}{\M{\Prog}} 
\newcommand{\eqdef}{~{=_{\mathrm{def}}}~} 
\newcommand{\infi}{\bigsqcap} 
\newcommand{\entail}[1]{~{\succcurlyeq_{#1}}~} 
\newcommand{\sep}{\talloblong} 
\newcommand{\NAT}{\mathbb{N}}
\newcommand{\REAL}{\mathbb{R}}
\newcommand{\tup}[1]{\overline{#1}}   
\newcommand{\ntup}[2]{\tup{#1}_{#2}}  
\newcommand{\infx}[2]{\ {\vdash}_{\!#1}^{\!#2}\ } 
\newcommand{\CHL}{\mbox{CHL}} 
\newcommand{\chln}[2]{\infx{#1}{#2}} 
\newcommand{\chlc}{\chln{\cdom}{}} 
\newcommand{\chlcn}[1]{\chln{\cdom}{#1}} 
\newcommand{\QCHL}{\mbox{QCHL}} 
\newcommand{\qchln}[3]{\infx{#1,#2}{#3}} 
\newcommand{\qchldc}{\qchln{\qdom}{\cdom}{}} 
\newcommand{\qchldcn}[1]{\qchln{\qdom}{\cdom}{#1}} 
\newcommand{\SQCHL}{\mbox{SQCHL}} 
\newcommand{\sqchln}[4]{\infx{#1,#2,#3}{#4}} 
\newcommand{\sqchlrdc}{\sqchln{\simrel}{\qdom}{\cdom}{}} 
\newcommand{\sqchlrdcn}[1]{\sqchln{\simrel}{\qdom}{\cdom}{#1}} 
\title[A Transformation-based Implementation for SQCLP]
    {A Transformation-based Implementation\\ for CLP with Qualification and Proximity \thanks{This work has been partially supported by the Spanish projects STAMP (TIN2008-06622-C03-01), PROMETIDOS--CM (S2009TIC-1465) and GPD--UCM (UCM--BSCH--GR58/08-910502).}}
  \author[R. Caballero, M. Rodr\'iguez-Artalejo and C. A. Romero-D\'iaz]
    {R. CABALLERO, M. RODR\'IGUEZ-ARTALEJO and C. A. ROMERO-D\'IAZ\\
    Departamento de Sistemas Inform\'aticos y Computaci\'on, Universidad Complutense\\
    Facultad de Inform\'atica, 28040 Madrid, Spain\\
    \email{\{rafa,mario\}@sip.ucm.es, cromdia@fdi.ucm.es}}
\begin{document}

\maketitle

\begin{abstract}
\begin{center}
To appear
in Theory and Practice of Logic Programming (TPLP) \vspace*{0.4cm}
\end{center}

Uncertainty in logic programming has been widely investigated in the last decades,
leading to multiple extensions of the classical LP paradigm.
However, few of these are designed as
extensions of the well-established and powerful CLP scheme for Constraint Logic Programming.
In a previous work we have proposed the SQCLP ({\em proximity-based qualified constraint logic programming}) scheme as a
quite expressive extension of CLP with support for qualification values and proximity relations
as generalizations of uncertainty values and similarity relations, respectively.
In this paper we provide a transformation technique for transforming SQCLP programs and goals into semantically equivalent CLP programs and goals, and a practical Prolog-based implementation of some particularly useful instances of the SQCLP scheme.
We also illustrate, by showing some simple---and working---examples, how the prototype can be effectively used as a tool for solving problems where qualification values and proximity relations play a key role.
Intended use of SQCLP includes flexible information retrieval applications.

\end{abstract}

\begin{keywords}
Constraint Logic Programming,
Program Transformation,
Qualification Domains and Values,
Similarity and Proximity Relations,
Flexible Information Retrieval.
\end{keywords}

\newtheorem{defn}{Definition}[section]
\newtheorem{thm}{Theorem}[section]
\newtheorem{lem}{Lemma}[section]
\newtheorem{prop}{Proposition}[section]
\newtheorem{exmp}{Example}[section]
\newtheorem{cor}{Corollary}[section]


\section{Introduction}
\label{sec:introduction}


Many extensions of LP ({\em logic programming}) to deal with uncertain knowledge and uncertainty have been proposed in the last decades.
These extensions have been proposed from different and somewhat unrelated perspectives, leading to multiple approaches in the way of using uncertain knowledge and understanding uncertainty.


A recent work by us \cite{RR10} focuses on the declarative semantics of a new proposal for an extension of the CLP scheme supporting qualification values and proximity relations.
More specifically, this work defines a new generic scheme SQCLP ({\em proximity-based qualified constraint logic programming}) whose instances $\sqclp{\simrel}{\qdom}{\cdom}$ are parameterized by a proximity relation $\simrel$, a qualification domain $\qdom$ and a constraint domain $\cdom$.
The current paper is intended as a continuation of \cite{RR10} with the aim of providing a semantically correct program transformation technique that allows us to implement a sound and complete implementation of some useful instances of SQCLP on top of existing CLP systems like {\em SICStus Prolog} \cite{sicstus} or {\em SWI-Prolog} \cite{swipl}.
In the introductory section of \cite{RR10} we have already summarized some related approaches of SQCLP with a special emphasis on their declarative semantics and their main semantic differences with SQCLP.
In the next paragraphs we present a similar overview but, this time, putting the emphasis on the goal resolution procedures and system implementation techniques, when available.


Within the extensions of LP using annotations in program clauses we can find the seminal proposal of {\em quantitative logic programming} by \cite{VE86} that inspired later works such as the GAP ({\em generalized annotated programs}) framework by \cite{KS92} and our former scheme QLP ({\em qualified logic programming}).
In the proposal of van Emden, one can find a primitive goal solving procedure based on and/or trees (these are similar to the alpha-beta trees used in game theory), used to prune the search space when proving some specific ground atom for some certainty value in the real interval $[0,1]$.
In the case of GAP, the goal solving procedure uses constrained SLD resolution in conjunction with a---costly---computation of so-called {\em reductants} between variants of program clauses.
In contrast, QLP goal solving uses a more efficient resolution procedure called SLD($\qdom$) resolution, implemented by means of real domain constraints, used to compute the qualification value of the head atom based on the attenuation factor of the program clause and the previously computed qualification values of the body atoms.
Admittedly, the gain in efficiency of SLD($\qdom$) w.r.t. GAP's goal solving procedure is possible because QLP focuses on a more specialized class of annotated programs.
While in all these three approaches there are some results of soundness and completeness, the results for the QLP scheme are the stronger ones (again, thanks to its also more focused scope w.r.t. GAP).


From a different viewpoint, extensions of LP supporting uncertainty can be roughly classified into two major lines: approaches based on fuzzy logic \cite{Zad65,Haj98,Ger01} and approaches based on similarity relations.
Historically, Fuzzy LP languages were motivated by expert knowledge representation applications.
Early Fuzzy LP languages implementing the resolution principle introduced in \cite{Lee72} include Prolog-Elf \cite{IK85}, Fril Prolog \cite{BMP95} and F-Prolog \cite{LL90}.  More recent approaches such as the Fuzzy LP languages in \cite{Voj01,GMV04} and Multi-Adjoint LP (MALP for short) in the sense of \cite{MOV01a} use clause annotations and a fuzzy interpretation of the connectives
and aggregation operators occurring in program clauses and goals.
The Fuzzy Prolog system proposed in \cite{GMV04} is implemented by means of real constrains on top of a CLP($\rdom$) system, using a syntactic expansion of the source code during the Prolog compilation.
A complete procedural semantics for MALP using reductants has been presented in \cite{MOV01b}.
A method for translating a MALP like program into standard Prolog has been described in \cite{JMP09}.


The second line of research mentioned in the previous paragraph
was motivated by applications in the field of flexible query answering.
Classical LP is extended to Similar\-i\-ty-based LP (SLP for short),
leading to languages which keep the classical syntax of LP clauses but use a similarity relation over a set of symbols $S$ to allow ``flexible'' unification of syntactically different symbols with a certain approximation degree.
Similarity relations over a given set $S$ have been defined in \cite{Zad71,Ses02} and related literature as fuzzy relations represented by mappings $\simrel : S \times S \to [0,1]$ which satisfy reflexivity, symmetry and transitivity axioms analogous to those required for classical equivalence relations.
Resolution with flexible unification can be used as a sound and complete goal solving procedure for SLP languages as shown e.g. in \cite{AF02,Ses02}.
SLP languages include {\em Likelog} \cite{AF99,Arc02} and more recently {\em SiLog} \cite{LSS04},
which has been implemented by means of an extended Prolog interpreter and proposed as a useful tool for web knowledge discovery.


In the last years, the SLP approach has been extended in various ways. The SQLP ({\em similarity-based qualified logic programming}) scheme proposed in \cite{CRR08} extended SLP by allowing program clause annotations in QLP style and generalizing similarity relations to mappings $\simrel : S \times S \to D$ taking values in a qualification domain not necessarily identical to the real interval $[0,1]$. As implementation technique for SQLP,
\cite{CRR08} proposed a semantically correct program transformation into QLP, whose goal solving procedure has been described above.
Other related works on transformation-based implementations of SLP languages include \cite{Ses01,MOV04}.
More recently, the SLP approach has been generalized to work with {\em proximity relations} in the sense of \cite{DP80} represented by mappings $\simrel : S \times S \to [0,1]$ which satisfy reflexivity and symmetry axioms but do not always satisfy transitivity.
SLP like languages using proximity relations include {\sf Bousi$\sim$Prolog} \cite{JR09} and the SQCLP scheme \cite{RR10}.
Two prototype implementations of {\sf Bousi$\sim$Prolog} are available:
a low-level implementation \cite{JR09b} based on an adaptation of the classical WAM (called {\em Similarity WAM}) implemented in {\sc Java} and able to execute a Prolog program in the context of a similarity relation defined on the first order alphabet induced by that program;
and a high-level implementation \cite{JRG09} done on top of {\em SWI-Prolog} by means of a program transformation from {\sf Bousi$\sim$Prolog} programs into a so-called {\em Translated BPL code} than can be executed according to the weak SLD resolution principle by a meta-interpreter.


Let us now refer to approaches related to constraint solving and CLP.
An analogy of proximity relations in the context of partial constraint satisfaction can be found in \cite{FW92},
where several metrics are proposed to measure the proximity between the solution sets of two different constraint satisfaction problems.
Moreover, some extensions of LP supporting uncertain reasoning use constraint solving as implementation technique,
as discussed in the previous paragraphs.
However, we are only aware of three approaches which have been conceived as extensions of the classical CLP scheme proposed for the first time in \cite{JL87}.
These three approaches are:
\cite{Rie98phd} that extends the formulation of CLP by \cite{HS88} with quantitative LP in the sense of \cite{VE86} and adapts van Emden's idea of and/or trees to obtain a goal resolution procedure;
\cite{BMR01} that proposes a semiring-based approach to CLP, where constraints are solved in a soft way with levels of consistency represented by values of the semiring, and is implemented with {\tt clp(FD,S)} for a particular class of semirings which enable to use local consistency algorithms,
as described  in \cite{GC98};
and the SQCLP scheme proposed in our previous work \cite{RR10}, which was designed as a common extension of SQLP and CLP.


As we have already said at the beginning of this introduction, this paper deals with transformation-based implementations of the SQCLP scheme.
Our main results include: a) a transformation technique for transforming SQCLP programs into semantically equivalent CLP programs via two specific program transformations named elim$_\simrel$ and elim$_\qdom$; and b) and a practical Prolog-based implementation which relies on the aforementioned program transformations and supports several useful SQCLP instances.
As far as we know, no previous work has dealt with the implementation of extended LP languages for uncertain reasoning which are able to support clause annotations, proximity relations and CLP style programming.
In particular, our previous paper \cite{CRR08} only presented a transformation analogous to elim$_\simrel$ for a programming scheme less expressive than SQCLP,  which supported neither non-transitive proximity relations nor CLP programming.
Moreover, the transformation-based implementation reported in \cite{CRR08} was not implemented in a system.


The reader is assumed to be familiar with the semantic foundations of LP  \cite{Llo87,Apt90} and CLP \cite{JL87,JMM+98}.
The rest of the paper is structured as follows:
Section \ref{sec:sqclp} gives an abridged presentation of the  SQCLP scheme and its declarative semantics,
followed by an abstract discussion of goal solving intended to serve as a theoretical guideline for practical implementations.
Section \ref{sec:qclpclp} briefly discusses two specializations of SQCLP, namely QCLP and CLP, which are used as the targets
of the program transformations elim$_\simrel$ and elim$_\qdom$, respectively.
Section \ref{sec:implemen} presents these two program transformations along with mathematical results which prove their semantic correctness, relying on the declarative semantics of the SQCLP, QCLP and CLP schemes.
Section \ref{sec:practical} presents a {\tt Prolog}-based prototype system that relies on  the transformations proposed in the previous section and implements several useful SQCLP instances.
Finally, Section \ref{sec:conclusions} summarizes conclusions and points to some lines of planned future research.


\section{The Scheme SQCLP and its Declarative Semantics}
\label{sec:sqclp}

In this section we first recall the essentials of the SQCLP scheme and its declarative semantics,
which were developed in detail in previous works \cite{RR10,RR10TR}.
Next we present an abstract discussion of goal solving intended to serve as a theoretical guideline for practical implementations of SQCLP instances.

\subsection{Constraint Domains}
\label{sec:sqclp:cdom}

As in the CLP scheme, we will work with constraint domains related to signatures.
We assume an {\em universal programming signature} $\Gamma = \langle DC, DP
\rangle$ where $DC = \bigcup_{n \in \NAT}  DC^n$ and $DP = \bigcup_{n \in \NAT}  DP^n$ are
countably infinite and mutually disjoint sets of free function symbols (called {\em data constructors} in
the sequel) and {\em defined predicate} symbols, respectively, ranked by arities.
We will use {\em domain specific signatures}
$\Sigma = \langle DC, DP, PP \rangle$ extending $\Gamma$ with a disjoint set $PP = \bigcup_{n \in
\NAT}  PP^n$ of {\em primitive predicate} symbols, also ranked by arities. The idea is that
primitive predicates come along with constraint domains, while defined predicates are specified in
user programs. Each $PP^n$ may be any countable set of $n$-ary predicate symbols.

{\em Constraint domains} $\cdom$, sets of constraints $\Pi$ and their solutions, as well as terms, atoms and substitutions over a given $\cdom$ are well known notions underlying  the CLP scheme. In this paper we assume a relational formalization of constraint domains
as mathematical structures $\cdom$ providing a carrier set $C_\cdom$ (consisting of ground terms built from data constructors
and a given set $B_\cdom$ of $\cdom$-specific basic values) and an interpretation of various $\cdom$-specific primitive predicates.
For the examples in this paper we will use a constraint domain $\rdom$ which allows to work with arithmetic constraints over the real numbers,
and is defined to include:
\begin{itemize}
\item
The set of basic values $B_\rdom = \REAL$.
Note that $C_\rdom$ includes ground terms built from real values and data constructors, in addition to real numbers.
\item
Primitive predicates for encoding the usual arithmetic operations over $\REAL$.
For instance, the addition operation $+$ over $\REAL$  is encoded by a ternary primitive predicate $op_+$
such that, for any $t_1, t_2 \in C_\rdom$, $op_{+}(t_1,t_2,t)$  is true in $\rdom$ iff $t_1, t_2, t \in \REAL$ and $t_1 + t_2 = t$.
In particular, $op_{+}(t_1,t_2,t)$ is false in $\rdom$ if either $t_1$ or $t_2$ includes data constructors.
The primitive predicates encoding other arithmetic operations such as $\times$ and $-$ are defined analogously.
\item
Primitive predicates for encoding the usual inequality relations over $\REAL$.
For instance, the ordering $\leq$ over $\REAL$ is encoded by a binary primitive predicate $cp_{\leq}$
such that, for any $t_1, t_2 \in C_\rdom$, $cp_{\leq}(t_1,t_2)$ is true in $\rdom$ iff $t_1, t_2 \in \REAL$ and $t_1 \leq t_2$.
In particular,  $cp_{\leq}(t_1,t_2)$ is false in $\rdom$ if either $t_1$ or $t_2$ includes data constructors.
The primitive predicates encoding the other inequality relations, namely $>$, $\geq$ and $>$, are defined analogously.
\end{itemize}
We assume the following classification of atomic $\cdom$-constraints: defined atomic constraints $p(\ntup{t}{n})$, where $p$ is a program-defined predicate symbol; primitive constraints $r(\ntup{t}{n})$ where $r$ is a $\cdom$-specific primitive predicate symbol; and equations $t == s$.

We use $\Con{\cdom}$ as a notation for the set of all $\cdom$-constraints and
$\kappa$ as a notation for an atomic primitive constraint.
Constraints are interpreted by means of {\em $\cdom$-valuations} $\eta \in \mbox{Val}_{\cdom}$,
which are ground substitutions.
The set $\Solc{\Pi}$ of solutions of $\Pi \subseteq \Con{\cdom}$ includes all the valuations $\eta$ such that $\Pi\eta$
is true when interpreted in $\cdom$.
$\Pi \subseteq \Con{\cdom}$ is called {\em satisfiable} if $\Solc{\Pi} \neq \emptyset$ and {\em unsatisfiable} otherwise.
$\pi \in \Con{\cdom}$ {\em is entailed} by $\Pi \subseteq \Con{\cdom}$
(noted  $\Pi \model{\cdom} \pi$) iff $\Solc{\Pi} \subseteq \Solc{\pi}$.

\subsection{Qualification Domains}
\label{sec:sqclp:qdom}

\label{qd:explanations}
{\em Qualification domains} were inspired by \cite{VE86} and firstly introduced in \cite{RR08} with the aim of providing elements, called qualification values,  which can be attached to computed answers. They are  defined as structures $\qdom = \langle D, \dleq, \bt, \tp, \circ \rangle$ verifying the following requirements:
\begin{enumerate}
\item
$\langle D, \dleq, \bt, \tp \rangle$ is a lattice with extreme points $\bt$ (called {\em infimum} or {\em bottom} element) and $\tp$ (called {\em maximum} or {\em top} element) w.r.t. the partial ordering $\dleq$ (called {\em qualification ordering}). For given elements  $d, e \in D$, we  write $d \sqcap e$ for the {\em greatest lower bound} ($glb$) of $d$ and $e$, and $d \sqcup e$ for the {\em least upper bound} ($lub$) of $d$ and $e$. We also write $d \dlt e$ as abbreviation for $d \dleq e \land d \neq e$.
    \item $\circ : D \times D \rightarrow D$, called {\em attenuation operation}, verifies the following axioms:
        \begin{enumerate}
            \item $\circ$ is associative, commutative and monotonic w.r.t. $\dleq$.
            \item $\forall d \in D : d \circ \tp = d$ and $d \circ \bt = \bt$.
            \item $\forall d, e \in D  : d \circ e \dleq e$.
            \item $\forall d, e_1, e_2 \in D : d \circ (e_1 \sqcap e_2) = (d \circ e_1) \sqcap (d \circ e_2)$.
        \end{enumerate}
\end{enumerate}
For any $S = \{e_1, e_2, \ldots, e_n\} \subseteq D$, the $glb$ (also called {\em infimum} of $S$)
exists and can be computed as $\infi S =e_1 \sqcap e_2 \sqcap \cdots \sqcap e_n$
(which reduces to $\tp$ in the case $n = 0$).
The dual claim concerning $lub$s is also true.
As an easy consequence of the axioms, one gets the identity $d \circ \infi S =  \infi \{d \circ e \mid e \in S\}$.

%

\label{review2:response01}
Some of the axioms postulated for the attenuation operator---associativity, commutativity and monotonicity---are also required for t-norms in fuzzy logic, usually defined as binary operations over the real number interval $[0,1]$.
More generally, there are formal relationships between qualification domains and some other existing proposals of
lattice-based structures for uncertain reasoning,
such as the lower bound constraint frames proposed in \cite{Ger01},
the multi-adjoint lattices for fuzzy LP languages proposed in \cite{MOV01a,MOV01b} and the semirings for soft constraint solving proposed in \cite{BMR01,GC98}.
However, qualification domains are a class of mathematical structures that differs from all these approaches.
Their base lattices do not need to be complete and the axioms concerning the attenuation operator
require additional properties w.r.t. t-norms.
Some differences w.r.t. multi-adjoint algebras and the semirings from \cite{BMR01} have
been discussed in more detail in  \cite{CRR08} and \cite{RR10}, respectively.


Many useful qualification domains are such that $\forall d, e \in D \setminus \{\bt\} : d \circ e \neq \bt$.
In the sequel, any qualification domain $\qdom$ that verifies this property will be called {\em stable}.
More technical details, explanations and examples concerning qualification domains can be found in  \cite{RR10TR}.
Examples include three basic qualification domains which are stable, namely:
the qualification domain $\B$ of classical boolean values,
the qualification domain $\U$ of uncertainty values,
the qualification domain $\W$ of weight values.
Moreover,  Theorem 2.1 of  \cite{RR10TR} shows that
the ordinary cartesian product  $\qdom_1 \!\times \qdom_2$ of two qualification domains is again a qualification domain,
while the strict cartesian product $\qdom_1 \!\otimes \qdom_2$ of two stable qualification domains is a stable qualification domain.

\subsection{Expressing a Qualification Domain in a Constraint Domain}
\label{sec:sqclp:qdomexpress}

The SQCLP scheme depends crucially on the ability to encode qualification domains into constraint domains, in the sense defined below:
\begin{defn} [Expressing $\qdom$ in $\cdom$]
\label{dfn:expressible}
A qualification domain $\qdom$ is expressible in a constraint domain $\cdom$
if there is an injective mapping $\imath : \aqdom \to C$ (thought as an embedding of $ \aqdom$ into $C$)
and moreover:
  \begin{enumerate}
     \item
     There is a $\cdom$-constraint $\qval{X}$ with free variable $X$ such that
     $\Solc{\qval{X}}$ is the set of all $\eta \in \mbox{Val}_\cdom$  verifying  $\eta(X) \in ran(\imath)$. \\
     \emph{Informal explanation:} For each qualification value $x \in \aqdom$ we think of $\imath(x) \in C$ as
     the representation of $x$ in $\cdom$.
     Therefore, $ran(\imath)$ is the set of those elements of $C$ which can be used
     to represent qualification values, and $\qval{X}$ constraints the value of $X$ to
     be some of these representations.
     \item
    There is a $\cdom$-constraint $\qbound{X,Y,Z}$ with free variables $X$, $Y$ and $Z$ encoding ``$x \dleq y \circ z$'' in the following sense:
    any $\eta \in \mbox{Val}_\cdom$ such that $\eta(X) = \iota(x)$, $\eta(Y) = \iota(y)$ and $\eta(Z) = \iota(z)$
    verifies $\eta \in \Solc{\qbound{X,Y,Z}}$ iff $x \dleq y \circ z$. \\
    \emph{Informal explanation:} $\qbound{X,Y,Z}$ constraints the values of $X, Y, Z$ to be the representations of
    three qualification values $x, y, z \in \aqdom$ such that $x \dleq y \circ z$.
  \end{enumerate}
In addition, if $\qval{X}$ and $\qbound{X,Y,Z}$ can be chosen as existential constraints
of the form $\exists X_1 \ldots \exists X_n(B_1 \land \ldots \land B_m)$---where $B_j ~ (1 \leq j \leq m)$ are atomic---we say that $\qdom$ is {\em existentially expressible} in $\cdom$. \mathproofbox
\end{defn}

It can be proved that $\B$, $\U$, $\W$ and and any qualification domain built from these with the help of the strict cartesian product $\otimes$ are existentially expressible in any constraint domain $\cdom$ that includes the basic values and computational features of $\rdom$. The example below
illustrates the existential representation of three typical qualification domains in $\rdom$:


\begin{exmp}
\label{exmp:qdom-representation}
\begin{enumerate}
\item
$\U$ can be existentially expressed in $\rdom$ as follows:
$D_\U \setminus \{\bt\} = D_\U \setminus \{0\} = (0,1] \subseteq \REAL \subseteq C_\rdom$;
therefore $\imath$ can be taken as the identity embedding mapping from $(0,1]$ into $\REAL$.
Moreover,  $\qval{X}$ can be built as the existential $\rdom$-constraint $cp_<(0,X) \land cp_\leq(X,1)$
and $\qbound{X,Y,Z}$ can be built as the existential $\rdom$-constraint $\exists X' (op_\times(Y,Z,X') \land cp_\leq(X,X'))$.
\item
$\W$ can be existentially expressed in $\rdom$ as follows:
{$D_\W \setminus \{\bt\} = D_\W \setminus \{\infty\} = [0,\infty) \subseteq \REAL \subseteq C_\rdom$;
therefore $\imath$ can be taken as the identity embedding mapping from $[0,\infty)$ into $\REAL$.
Moreover,  $\qval{X}$ can be built as the existential $\rdom$-constraint $cp_\geq(X,0)$
and $\qbound{X,Y,Z}$ can be built as the existential $\rdom$-constraint $\exists X' (op_+(Y,Z,X') \land cp_\geq(X,X'))$.}
\item
$\U{\otimes}\W$ can be existentially expressed in $\rdom$ as follows:
$D_{\U{\otimes}\W} \setminus \{\bt\} =  (0,1] \times [0,\infty) \subseteq \REAL \times \REAL$;
therefore $\imath : D_{\U{\otimes}\W} \setminus \{\bt\} \to D_\rdom$ can bee defined as
$\imath(x,y)$ = {\sf pair}$(x,y)$, using a binary constructor {\sf pair} $\in DC^2$ to represent the ordered pair $(x,y)$
as an element of $D_\rdom$.
Moreover, taking into account the two previous items of the example:
\begin{itemize}
\item
$\qval{X}$ can be built as
$\exists X_1 \exists X_2 (X ==$ {\sf pair}$(X_1,X_2) \land  cp_<(0,X_1) \land cp_\leq(X_1,1) \land cp_\geq(X_2,0))$.
\item $\qbound{X,Y,Z}$ can be built as
$\exists X_1 \exists X'_1 \exists X_2 \exists X'_2 \exists Y_1 \exists Y_2 \exists Z_1 \exists Z_2
(X ==$ {\sf pair}$(X_1,X_2) \land Y ==$ {\sf pair}$(Y_1,Y_2) \land Z ==$ {\sf pair}$(Z_1,Z_2) \land
op_\times(Y_1,Z_1,X'_1) \land cp_\leq(X_1,X'_1) \land op_+(Y_2,Z_2,X'_2) \land cp_\geq(X_2,X'_2))$.
 \mathproofbox
\end{itemize}
\end{enumerate}
\end{exmp}

\subsection{Programs and Declarative Semantics}
\label{sec:sqclp:programs}


Instances $\sqclp{\simrel}{\qdom}{\cdom}$ of the SQCLP scheme are parameterized by so-called
{\em admissible triples} $\langle \simrel, \qdom, \cdom \rangle$ consisting of a constraint domain $\cdom$, a qualification domain $\qdom$ and a proximity relation $\simrel : S \times S \to D$---where $D$ is the carrier set of $\qdom$ and $S$ is the set of all variables, basic values  and signature symbols available in $\cdom$---satisfying the following properties:
\begin{itemize}
\item $\forall x\in S : \simrel(x,x) = \tp$ (reflexivity).
\item $\forall x,y\in S : \simrel(x,y) = \simrel(y,x)$ (symmetry).
\item $\simrel$ restricted to $\Var$ behaves as the identity --- i.e. $\simrel(X,X) = \tp$ for all $X \in \Var$ and $\simrel(X,Y) = \bt$ for all $X,Y \in \Var$ such that $X \neq Y$.
\item For any $x,y \in S$, $\simrel(x,y) \neq \bt$ can happen only if:
\begin{itemize}
\item $x = y$ are identical.
\item $x$ and $y$ are both: basic values; data constructor symbols with the same arity; or defined predicate symbols with the same arity.
\end{itemize}
In particular, $\simrel(p,p') \neq \bt$ cannot happen if $p$ and $p'$ are syntactically different primitive predicate symbols.
\end{itemize}
A proximity relation $\simrel$ is called {\em similarity} iff it satisfies the additional property $\forall x,y,z\in S : \simrel(x,z) \dgeq \simrel(x,y) \sqcap \simrel(y,z)$ (transitivity).
A given proximity relation $\simrel$ can be extended to work over terms, atoms and other syntactic objects in an obvious way.
The definition for the case of terms is as follows:
\begin{enumerate}
\item For any term $t$, $\simrel(t,t) = \tp$.
\item For $X\in\Var$ and for any term $t$ different from $X$, $\simrel(X,t) = \simrel(t,X) = \bt$.
\item For any two data constructor symbols $c$ and $c'$ with different arities, $\simrel(c(\ntup{t}{n}),$ $c'(\ntup{t'}{m})) = \bt$.
\item For any two data constructor symbols $c$ and $c'$ with the same arity, $\simrel(c(\ntup{t}{n}),$ $c'(\ntup{t'}{n})) = \simrel(c,c') \sqcap \simrel(t_1,t'_1) \sqcap \cdots \sqcap \simrel(t_n,t'_n)$.
\end{enumerate}
For the case of finite substitutions $\sigma$ and $\theta$ whose domain is a subset of a finite set of variables $\{X_1, \ldots, X_m\}$,
$\simrel(\sigma,\theta)$ can be naturally defined as
$\simrel(X_1\sigma,X_1\theta) \sqcap \ldots \sqcap \simrel(X_m\sigma,X_m\theta)$.


A $\sqclp{\simrel}{\qdom}{\cdom}$-program is a set $\Prog$ of  \emph{qualified
program rules} (also called \emph{qualified clauses})
$C : A \qgets{\alpha} \qat{B_1}{w_1}, \ldots, \qat{B_m}{w_m}$, where $A$ is a defined atom,
$\alpha \in \aqdom$ is  called the {\em attenuation factor} of the clause and
each $\qat{B_j}{w_j} ~ (1 \le j \le m)$ is an atom $B_j$
annotated with a so-called {\em threshold value} $w_j \in \bqdom$.
The intended meaning of $C$ is as follows:
if for all $1 \leq j \leq m$ one has $\qat{B_j}{e_j}$ (meaning that $B_j$ holds with qualification value $e_j$)
for some $e_j \dgeq^? w_j$,
then $\qat{A}{d}$ (meaning that $A$ holds with qualification value $d$)
can be inferred for any $d \in \aqdom$ such that $d \dleq \alpha \circ \infi_{j = 1}^m e_j$.
By convention, $e_j \dgeq^? w_j$ means $e_j \dgeq w_j$ if $w_j ~{\neq}~?$ and is identically true otherwise.
In practice threshold values equal to `?' and attenuation values equal to $\tp$ can be omitted.

\begin{figure}[ht]
\figrule
\footnotesize\it
\renewcommand{\arraystretch}{1.4}
\begin{tabular}{rl}
& \% Book representation: book( ID, Title, Author, Lang, Genre, VocLvl, Pages ). \\
\tiny 1 & library([ book(1, `Tintin', `Herg\'e', french, comic, easy, 65), \\
\tiny 2 & $\quad$ book(2, `Dune', `F.P.~Herbert', english,\ sciFi,\ medium,\ 345), \\
\tiny 3 & $\quad$ book(3, `Kritik der reinen Vernunft', `I.~Kant', german, philosophy, difficult, 1011), \\
\tiny 4 & $\quad$ book(4, `Beim Hauten der Zwiebel', `G.~Grass', german, biography, medium, 432) ]) \\[3mm]
& \% Auxiliary predicate for computing list membership: \\
\tiny 5 & member(B, [B$\mid$\_]) \\
\tiny 6 & member(B, [\_$\mid$T]) $\gets$ member(B, T) \\[3mm]
& \% Predicates for getting the explicit attributes of a given book: \\
\tiny 7 & getId(book(ID, \_Title, \_Author, \_Lang, \_Genre, \_VocLvl, \_Pages), ID) \\
\tiny 8 & getTitle(book(\_ID, Title, \_Author, \_Lang, \_Genre, \_VocLvl, \_Pages), Title) \\
\tiny 9 & getAuthor(book(\_ID, \_Title, Author, \_Lang, \_Genre, \_VocLvl, \_Pages), Author) \\
\tiny 10 & getLanguage(book(\_ID, \_Title, \_Author, Lang, \_Genre, \_VocLvl, \_Pages), Lang) \\
\tiny 11 & getGenre(book(\_ID, \_Title, \_Author, \_Lang, Genre, \_VocLvl, \_Pages), Genre) \\
\tiny 12 & getVocLvl(book(\_ID, \_Title, \_Author, \_Lang, \_Genre, VocLvl, \_Pages), VocLvl) \\
\tiny 13 & getPages(book(\_ID, \_Title, \_Author, \_Lang, \_Genre, \_VocLvl, Pages), Pages) \\[3mm]
& \% Function for guessing the reader level of a given book: \\
\tiny 14 & guessRdrLvl(B, basic) $\gets$ getVocLvl(B, easy), getPages(B, N), N $<$ 50 \\
\tiny 15 & guessRdrLvl(B, intermediate) $\qgets{0.8}$ getVocLvl(B, easy), getPages(B, N), N $\ge$ 50 \\
\tiny 16 & guessRdrLvl(B, basic) $\qgets{0.9}$ getGenre(B, children) \\
\tiny 17 & guessRdrLvl(B, proficiency) $\qgets{0.9}$ getVocLvl(B, difficult), getPages(B, N), N $\ge$ 200 \\
\tiny 18 & guessRdrLvl(B, upper) $\qgets{0.8}$ getVocLvl(B, difficult), getPages(B, N), N $<$ 200 \\
\tiny 19 & guessRdrLvl(B, intermediate) $\qgets{0.8}$ getVocLvl(B, medium) \\
\tiny 20 & guessRdrLvl(B, upper) $\qgets{0.7}$ getVocLvl(B, medium) \\[3mm]
& \% Function for answering a particular kind of user queries: \\
\tiny 21 & search(Lang, Genre, Level, Id)  $\gets$ library(L)\#1.0, member(B, L)\#1.0, \\
\tiny 22 & $\quad$ getLanguage(B, Lang), getGenre(B, Genre), \\
\tiny 23 & $\quad$ guessRdrLvl(B, Level), getId(B, Id)\#1.0 \\[3mm]
& \% Proximity relation $\simrel_s$: \\
\tiny 24 & $\simrel_s$(sciFi, fantasy) = $\simrel_s$(fantasy, sciFi) = 0.9 \\
\tiny 25 & $\simrel_s$(adventure, fantasy) = $\simrel_s$(fantasy, adventure) = 0.7 \\
\tiny 26 & $\simrel_s$(essay, philosophy) = $\simrel_s$(philosophy, essay) = 0.8 \\
\tiny 27 & $\simrel_s$(essay, biography) = $\simrel_s$(biography, essay) = 0.7 \\
\end{tabular}
\normalfont
\caption{$\sqclp{\simrel_s}{\,\U}{\rdom}$-program $\Prog_{\!s}$ ({\em Library with books in different languages})}
\label{fig:library}
\figrule
\vspace*{-4mm}
\end{figure}

Figure \ref{fig:library} shows a simple $\sqclp{\simrel_s}{\,\U}{\rdom}$-program $\Prog_{\!s}$ which illustrates the expressivity of the SQCLP scheme to deal with problems involving flexible information retrieval. Predicate \mbox{\it search} can be used to answer queries asking for books in the library matching some desired language, genre and reader level.
Predicate \mbox{\it guessRdrLvl} takes advantage of attenuation factors to encode heuristic rules to compute reader levels on the basis of vocabulary level and other book features. The other predicates compute book features in the natural way,
and the proximity relation $\simrel_s$ allows flexibility in any unification (i.e. solving of equality constraints) arising during the invocation of the program predicates.


The declarative semantics of a given $\sqclp{\simrel}{\qdom}{\cdom}$-program $\Prog$ relies on {\em qualified constrained atoms} (briefly {\em qc-atoms}) of the form $\cqat{A}{d}{\Pi}$, intended to assert that the validity of atom  $A$ with qualification degree $d \in D$ is entailed by the constraint set $\Pi$.
A qc-atom is called {\em defined}, {\em primitive} or {\em equational} according to the syntactic form of $A$; and it is called {\em observable} iff $d \in \aqdom$ and $\Pi$ is satisfiable.

Program interpretations are defined as sets of observable qc-atoms which obey a natural closure condition.
The results proved in \cite{RR10} show two equivalent ways to characterize declarative semantics: using a fix-point approach and a proof-theoretical approach.
For the purposes of the present paper it suffices to consider the proof-theoretical approach that relies on a formal inference system called {\em Proximity-based Qualified Constrained Horn Logic}---in symbols, $\SQCHL(\simrel,\qdom,\cdom)$---intended to infer observable qc-atoms from $\Prog$ and consisting of the three inference rules displayed in Figure \ref{fig:sqchl}. Rule {\bf SQEA} depends on a relation $\approx_{d,\Pi}$ between terms that is defined in the following way: $t \approx_{d,\Pi} s$ iff there exist two terms $\hat{t}$ and $\hat{s}$ such that $\Pi \model{\cdom} t == \hat{t}$, $\Pi \model{\cdom} s == \hat{s}$ and $\bt \neq d \dleq \simrel(\hat{t},\hat{s})$.
Recall that the notation $\Pi \model{\cdom} \pi$ makes sense for any $\cdom$-constraint $\pi$ and is a shorthand for $\Solc{\Pi} \subseteq \Solc{\pi}$, as explained in Subsection \ref{sec:sqclp:cdom}.
The relation $\approx_{d,\Pi}$ allows to deduce equations from $\Pi$ in a flexible way, i.e. taking the proximity relation $\simrel$ into account.
In the sequel we will use $t \approx_{d} s$ as a shorthand for $t \approx_{d,\emptyset} s$,
which holds iff $\bt \neq d \dleq \simrel(t,s)$.

\begin{figure}[ht]
  \figrule
  \centering
  \vspace{2mm}
  \begin{tabular}{l}
  \textbf{SQDA} ~ $\displaystyle\frac
    {~ (~ \cqat{(t'_i == t_i\theta)}{d_i}{\Pi} ~)_{i=1 \ldots n} \quad (~ \cqat{B_j\theta}{e_j}{\Pi} ~)_{j=1 \ldots m} ~}
    {\cqat{p'(\ntup{t'}{n})}{d}{\Pi}}$ \\ \\
    \hspace{13mm} if $(p(\ntup{t}{n}) \qgets{\alpha} \qat{B_1}{w_1}, \ldots, \qat{B_m}{w_m}) \in \Prog$\!,~
        $\theta$ subst.,~ $\simrel(p',p) = d_0 \neq \bt$, \\
    \hspace{13mm} $e_j \dgeq^? w_j ~ (1 \le j \le m)$ and
        $d \dleq \bigsqcap_{i = 0}^{n}d_i \sqcap \alpha \circ \bigsqcap_{j = 1}^m e_j$. \\
  \\
  \textbf{SQEA} ~ $\displaystyle\frac
    {}
    {\quad \cqat{(t == s)}{d}{\Pi} \quad}$ ~
  if $t \approx_{d, \Pi} s$.
  ~~ \textbf{SQPA} ~ $\displaystyle\frac
    {}
    {\quad \cqat{\kappa}{d}{\Pi} \quad}$ ~
  if $\Pi \model{\cdom} \kappa$. \\
  \end{tabular}
  \vspace{2mm}
  \caption{Proximity-based Qualified Constrained Horn Logic}
  \label{fig:sqchl}
  \figrule
  \vspace{-3mm}
\end{figure}

We write $\Prog \sqchlrdc \varphi$ to indicate that $\varphi$ can be deduced from $\Prog$ in
$\SQCHL(\simrel,$ $\qdom,\cdom)$, and $\Prog \sqchlrdcn{k} \varphi$ in the case that the deduction can be performed with exactly $k$ {\bf SQDA} inference steps.
As usual in formal inference systems, $\SQCHL(\simrel,\qdom,\cdom)$ proofs can be represented as {\em proof trees} whose nodes correspond to qc-atoms, each node being inferred from its children by means of some $\SQCHL(\simrel,\qdom,\cdom)$ inference step.

The following theorem, proved in \cite{RR10TR}, characterizes least program models in the scheme SQCLP.
This result allows to use $\SQCHL(\simrel,\qdom,\cdom)$-derivability as a logical criterion for proving the semantic correctness of program transformations, as we will do in Section \ref{sec:implemen}.

\begin{thm}[Logical characterization of least program models in SQCHL]
\label{thm:SQCHL-leastmodel} For any $\sqclp{\simrel}{\qdom}{\cdom}$-program $\Prog$, its least
model can be characterized as: $$\Mp = \{\varphi \mid \varphi \mbox{ is an observable defined
qc-atom and }\Prog \sqchlrdc \varphi\} \mathproofbox$$
\end{thm}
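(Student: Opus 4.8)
The plan is to prove the stated set equality by establishing the two inclusions $D \subseteq \Mp$ and $\Mp \subseteq D$, where $D$ denotes the right-hand side, i.e.\ the set of all observable defined qc-atoms $\varphi$ with $\Prog \sqchlrdc \varphi$. Throughout I would rely on the fixpoint characterization of least models from the companion works: $\Mp$ is the least fixpoint of the immediate-consequence operator $\Tp$, and a model of $\Prog$ is an interpretation (a set of observable qc-atoms obeying the natural closure condition) that is closed under $\Tp$.

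For the inclusion $D \subseteq \Mp$ I would prove the stronger soundness statement that $\varphi \in M$ for every model $M$ of $\Prog$, proceeding by induction on the structure of the $\SQCHLrdc$ proof tree of $\varphi$; the number $k$ of \textbf{SQDA} steps, using the refined notation $\Prog \sqchlrdcn{k} \varphi$, supplies the well-founded measure that aligns with $\Tp$-iteration. In the base cases the conclusion is produced by an axiom: a \textbf{SQEA} step yields $\cqat{(t == s)}{d}{\Pi}$ whenever $t \approx_{d,\Pi} s$, and a \textbf{SQPA} step yields $\cqat{\kappa}{d}{\Pi}$ whenever $\Pi \model{\cdom} \kappa$; in either case the very definition of interpretation forces these equational and primitive qc-atoms to belong to every $M$. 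In the inductive step a \textbf{SQDA} application uses a clause of $\Prog$ together with premises $\cqat{(t'_i == t_i\theta)}{d_i}{\Pi}$ and $\cqat{B_j\theta}{e_j}{\Pi}$, which lie in $M$ by the induction hypothesis; the closure condition defining models then forces the head qc-atom $\cqat{p'(\ntup{t'}{n})}{d}{\Pi}$ into $M$, precisely because the side conditions $\simrel(p',p) = d_0 \neq \bt$, $e_j \dgeq^? w_j$ and $d \dleq \bigsqcap_{i=0}^n d_i \sqcap \alpha \circ \bigsqcap_{j=1}^m e_j$ mirror the action of $\Tp$. Specializing $M = \Mp$ gives $D \subseteq \Mp$.

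For the reverse inclusion $\Mp \subseteq D$ I would show that $D$ is itself a model of $\Prog$; since $\Mp$ is the least model, $\Mp \subseteq D$ follows at once. By construction $D$ contains only observable defined qc-atoms, so it remains to verify the natural closure condition. This reduces to checking that whenever an instance of a program clause has its body atoms and equations derivable with suitable qualification values, the head is again derivable, which is exactly what \textbf{SQDA} guarantees, together with the monotonicity properties built into the rules: each rule permits lowering the qualification value of its conclusion (the side conditions are inequalities $d \dleq \cdots$, with the requirement $d \neq \bt$ preserving observability), and the entailment-based side conditions $\Pi \model{\cdom} \kappa$ and $t \approx_{d,\Pi} s$ behave monotonically under constraint entailment. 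Hence $D$ satisfies the defining requirements of an interpretation and is closed under $\Tp$, i.e.\ it is a model.

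The main obstacle will be pinning down the precise definitions of interpretation and of $\Tp$ so that the \textbf{SQDA} closure condition matches them exactly; the delicate points are the proximity factor $\simrel(p',p) = d_0$ attached to the predicate symbol, the way the greatest lower bound $\bigsqcap_{i=0}^n d_i \sqcap \alpha \circ \bigsqcap_{j=1}^m e_j$ combines the equational, attenuation and body contributions, and the flexible equation relation $\approx_{d,\Pi}$ of \textbf{SQEA}, which folds both the constraint entailment $\Pi \model{\cdom} t == \hat{t}$ and the term-level proximity $\bt \neq d \dleq \simrel(\hat{t},\hat{s})$ into a single derivable axiom. One must also check throughout that observability---the requirements $d \in \aqdom$ and satisfiability of $\Pi$---is preserved by each inference and by the fixpoint iteration; this is routine but relies on the qualification-domain axioms, notably $d \circ \bt = \bt$ and the distributivity of $\circ$ over $\sqcap$, to guarantee that the combined bound stays above $\bt$ so that an observable $d$ can be chosen.
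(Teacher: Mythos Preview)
The paper does not actually prove this theorem: it is stated here as a result imported from the companion technical report \cite{RR10TR}, with the sentence ``The following theorem, proved in \cite{RR10TR}, characterizes least program models in the scheme SQCLP'' immediately preceding it and no proof following the statement. So there is no in-paper proof to compare your proposal against.

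That said, your outline is the canonical strategy for this kind of result and is almost certainly what the cited reference does: soundness of the proof system with respect to all models by induction on derivations, and completeness by showing that the set of derivable observable qc-atoms is itself a model (hence contains the least one). Your identification of the delicate points---matching the \textbf{SQDA} side conditions to the definition of $\Tp$, handling the proximity factor $d_0$ and the flexible equality $\approx_{d,\Pi}$, and preserving observability---is accurate. One small caution: the last paragraph of your proposal suggests that the qualification-domain axioms guarantee the combined bound stays strictly above $\bt$, but in general $\alpha \circ \bigsqcap_j e_j$ can equal $\bt$ unless $\qdom$ is \emph{stable}; what actually makes the \textbf{SQDA} conclusion observable is simply that one \emph{chooses} $d \in \aqdom$ satisfying $d \dleq \cdots$, and such a choice need not exist if the bound collapses to $\bt$. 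This does not break the argument (no qc-atom is produced in that case, so there is nothing to place in a model), but the phrasing should not claim the bound is always non-bottom.
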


%

\subsection{Goals and Goal Solving}
\label{sec:sqclp:goals}

Goals for a given $\sqclp{\simrel}{\qdom}{\cdom}$-program $\Prog$ have the form
$$G ~:~ \qat{A_1}{W_1},~ \ldots,~ \qat{A_m}{W_m} \sep W_1 \dgeq^? \!\beta_1,~ \ldots,~ W_m \dgeq^? \!\beta_m$$
abbreviated as $(\qat{A_i}{W_i},~ W_i \dgeq^? \!\beta_i)_{i = 1 \ldots m}$.
The $\qat{A_i}{W_i}$ are called {\em annotated atoms}.
If all atoms $A_i,\, i = 1 \ldots m,$ are equations $t_i == s_i$, the goal $G$ is called a {\em unification problem}.
The pairwise different variables $W_i \in \War$ are called qualification variables;
they are taken from a  set $\War$ assumed to be disjoint from the set $\Var$ of data variables used in terms.
The conditions $W_i \dgeq^? \!\beta_i$ (with $\beta_i \in \bqdom$)
are called {\em threshold conditions} and their intended meaning (relying on the notations `?' and `$\dgeq^?$') is as already explained when introducing program clauses in Subsection~\ref{sec:sqclp:programs}.
In the sequel, $\warset{o}$ will denote the set of all qualification variables occurring in the syntactic object $o$. In particular, for a goal $G$ as displayed above, $\warset{G}$ denotes the set $\{W_i \mid 1 \leq i \leq m\}$.
In the case $m = 1$ the goal is called  {\em atomic}.
The following definition relies on  $\SQCHL(\simrel,\qdom,\cdom)$-derivability to provide a natural
declarative notion of goal solution:


\begin{defn}[Possible Answers and Goal Solutions]
\label{dfn:goalsol}
Assume a given $\sqclp{\simrel}{\qdom}{\cdom}$-program $\Prog$ and a goal $G$ for $\Prog$ with the syntax displayed above.
Then:
\begin{enumerate}
\item
A {\em possible answer} for $G$ is any triple $ans = \langle \sigma, \mu, \Pi \rangle$ such that $\sigma$ is a $\cdom$-substitution, $W\!\mu \in \aqdom$ for all $W \in \domset{\mu}$,  and $\Pi$ is a satisfiable and finite set of atomic $\cdom$-constraints.
The qualification value $\lambda_{ans} = \bigsqcap_{i = 1}^m W_i\mu$ is called the {\em qualification level} of  $ans$.
 \item
 A possible answer $\langle \sigma, \mu, \Pi \rangle$ is called a  {\em solution} for $G$ iff the conditions
$W_i\mu = d_i \dgeq^? \!\beta_i$ and $\Prog \sqchlrdc \cqat{A_i\sigma}{W_i\mu}{\Pi}$ hold for all $i = 1 \ldots m$.
Note that $\Prog \sqchlrdc \cqat{A_i\sigma}{W_i\mu}{\Pi}$ amounts to $t_i\sigma \approx_{W_i\mu, \Pi} s_i\sigma$  in the case that $A_i$  is an equation $t_i == s_i$.
The set of all solutions for $G$ w.r.t. $\Prog$ is noted $\Sol{\Prog}{G}$.
\item
A solution $\langle \eta, \rho, \Pi \rangle$ for $G$ is called {\em ground} iff $\Pi = \emptyset$ and $\eta \in \mbox{Val}_\cdom$ is a variable valuation such that $A_i\eta$ is a ground atom for all $i = 1 \ldots m$.
The set of all ground solutions for $G$ w.r.t. $\Prog$ is noted $\GSol{\Prog}{G} \subseteq \Sol{\Prog}{G}$.
\item
A ground solution $gsol = \langle \eta, \rho, \emptyset \rangle \in \GSol{\Prog}{G}$
is {\em subsumed} by a possible answer $ans = \langle \sigma, \mu, \Pi \rangle$
 iff $W_i\mu \dgeq W_i\rho$ for $i = 1 \ldots m$
 (which implies $\lambda_{ans} \dgeq \lambda_{gsol}$)
 and there is some $\nu \in \Solc{\Pi}$ such that
  $X\eta = X\sigma\nu$ holds for each variable $X \in \mathrm{var}(G)$.
\item
A ground solution $gsol = \langle \eta, \rho, \emptyset \rangle \in \GSol{\Prog}{G}$
is {\em subsumed} by a possible answer $ans = \langle \sigma, \mu, \Pi \rangle$ {\em in the flexible sense}
 iff  $\lambda_{ans} \dgeq \lambda_{gsol}$
 and there is some $\nu \in \Solc{\Pi}$ such that
 $\simrel(X\eta,X\sigma\nu) \dgeq \lambda_{gsol}$
 holds for each variable $X \in \mathrm{var}(G)$. \mathproofbox
\end{enumerate}
\end{defn}


A possible goal $G_s$ for the library program displayed in Figure \ref{fig:library} is
\begin{center}
$G_s:$ \it search(german, essay, intermediate, ID)\#W\/ $\sep$ W $\ge$ 0.65
\end{center}
and one solution for $G_s$ is
$\langle \{\textit{ID} \mapsto 4\}, \{\textit{W} \mapsto 0.7\}, \emptyset \rangle$.
In this simple case, the constraint set $\Pi$ within the solution is empty.

The following example will be used to discuss some implementation issues
in Subsection \ref{sec:practical:SQCLP:eqsimrel}.
\begin{exmp}
\label{exmp:eqs-implementation}
Assume the admissible triple $\langle \simrel, \U, \rdom \rangle$ where the proximity relation $\simrel$ is such that:
 $\simrel(a,b) = \simrel(b,a) = 0.9$,
 $\simrel(a,c) = \simrel(c,a) = 0.9$, and
 $\simrel(b,c) = \simrel(c,b) = 0.4$.
Let $\Prog$ be the empty program.
Then, the goal $G$: \[\qat{(X==Y)}{W_1},\ \qat{(X==b)}{W_2},\ \qat{(Y==c)}{W_3} \sep W_1 \ge 0.8,\ W_2 \ge 0.8,\ W_3 \ge 0.8\] is a unification problem.
Its valid solutions in the sense of Definition \ref{dfn:goalsol}
include $\mbox{sol}_i = \langle \sigma_i, \mu_i, \emptyset\rangle$ ($i = 1,2,3$), where:
\[
\begin{array}{l@{\hspace{7.5mm}}l}
\sigma_1 = \{ X\mapsto a,\ Y\mapsto a \} & \mu_1 = \{ W_1\mapsto 1,\ W_2\mapsto 0.9,\ W_3\mapsto 0.9 \} \\
\sigma_2 = \{ X\mapsto b,\ Y\mapsto a \} & \mu_2 = \{ W_1\mapsto 0.9,\ W_2\mapsto 1,\ W_3\mapsto 0.9 \} \\
\sigma_3 = \{ X\mapsto a,\ Y\mapsto c \} & \mu_3 = \{ W_1\mapsto 0.9,\ W_2\mapsto 0.9,\ W_3\mapsto 1 \} \\
\end{array}
\]
as well as some less interesting  solutions assigning lower qualification values to the variables $W_i$ ($i = 1,2,3$).
In this simple example, all the solutions are ground, but this is not always the case in general.
Note that sol$_2$ is subsumed by sol$_1$ in the flexible sense because:
\begin{itemize}
\item $\nu = \varepsilon \in \Solc{\emptyset}$ satisfies $\simrel(X\sigma_2,X\sigma_1\varepsilon) = \simrel(b, a) = 0.9 \dgeq 0.9$ and also $\simrel(Y\sigma_2, Y\sigma_1\varepsilon) = \simrel(a, a) = 1 \dgeq 0.9$.
\item The qualification level of both sol$_2$ and sol$_1$ is $0.9$, thus trivially, $0.9 \dgeq 0.9$.
\end{itemize}
Moreover,  sol$_3$ is also subsumed by sol$_1$ in the flexible sense, because:
\begin{itemize}
\item $\nu = \varepsilon \in \Solc{\emptyset}$ satisfies $\simrel(X\sigma_3,X\sigma_1\varepsilon) = \simrel(a, a) = 1 \dgeq 0.9$ and also $\simrel(Y\sigma_3, Y\sigma_1\varepsilon) = \simrel(c, a) = 0.9 \dgeq 0.9$.
\item The qualification level of both sol$_3$ and sol$_1$ is $0.9$, thus trivially, $0.9 \dgeq 0.9$.
\end{itemize}
In fact, it is easy to check that any of the three ground solutions sol$_1$, sol$_2$ and sol$_3$ subsumes the other two in the flexible sense.
\mathproofbox
\end{exmp}


In practice, users of SQCLP languages will rely on  some available {\em goal solving system} for computing goal solutions.
The following definition provides an abstract specification of semantically correct goal solving systems
which will serve as a theoretical guideline for the implementation presented in Section \ref{sec:practical}:
\begin{defn}[Correct Abstract Goal Solving Systems for SQCLP]
\label{dfn:goalsolsys}
An {\em abstract goal solving system} for $\sqclp{\simrel}{\qdom}{\cdom}$ is any device $\mathcal{CA}$ that takes a program $\Prog$ and a goal $G$ as input and yields a set $\mathcal{CA}_{\Prog}(G)$ of possible answers $\langle \sigma, \mu, \Pi \rangle$ (called {\em computed answers}) as output.  Moreover:
\begin{enumerate}
\item
$\mathcal{CA}$ is called {\em sound} iff every computed answer is a solution,
i.e. $\mathcal{CA}_{\Prog}(G) \subseteq \mbox{Sol}_\Prog(G)$.
\item
$\mathcal{CA}$ is called {\em weakly complete} iff for every ground solution
$gsol \in \mbox{GSol}_\Prog(G)$
there is some computed answer $ans \in \mathcal{CA}_{\Prog}(G)$ such that
$ans$ subsumes $gsol$.
\item
$\mathcal{CA}$ is called  {\em weakly complete in the flexible sense} iff for every ground solution
$gsol \in \mbox{GSol}_\Prog(G)$
there is some computed answer $ans \in \mathcal{CA}_{\Prog}(G)$ such that
$ans$ subsumes $gsol$ in the flexible sense.
\item
$\mathcal{CA}$ is called {\em correct} iff it is both sound and weakly complete.
\item
$\mathcal{CA}$ is called {\em correct in the flexible sense} iff
 it is both sound and weakly complete in the flexible sense. \mathproofbox
\end{enumerate}
\end{defn}



Extensions of the well-known SLD-resolution procedure \cite{Llo87,Apt90} can be used as a basis to obtain correct goal solving systems for extended LP languages. In particular,  constraint SLD-resolution provides a correct goal solving system for instances of the CLP scheme, as proved e.g. in \cite{JMM+98}%
\footnote{In fact, constraint SLD-resolution is complete in a stronger sense than weak completeness. As proved in \cite{JMM+98},  every solution - even if it is not ground - is subsumed in a suitable sense by a finite set of computed solutions.}.
Several extensions of the SLD-resolution, tailored to different LP languages supporting uncertain reasoning, have already been mentioned in Section \ref{sec:introduction}.


Rather than developing an  extension of SLD resolution tailored to the SQCLP scheme,
our aim in this paper is to to investigate  goal solving systems based on a semantically correct program transformation from SQCLP into CLP.
Sections \ref{sec:implemen} and \ref{sec:practical} present the transformation technique and its implementation on top of a CLP {\tt Prolog} system, respectively.
As we will explain in Subsection \ref{sec:practical:SQCLP},
weak completeness as specified in Definition \ref{dfn:goalsolsys}(2) is very hard to achieve in a practical implementation,
while flexible weak completeness in the sense of Definition \ref{dfn:goalsolsys}(3)
is a satisfactory  notion for extended LP languages which use  proximity relations.
For instance, similarity-based SLD resolution as presented in \cite{Ses02} is complete in a flexible sense.
Therefore, the {\tt Prolog}-based prototype system presented in Section \ref{sec:practical}
aims at soundness and weak completeness in the flexible sense,
as specified in Definition \ref{dfn:goalsolsys}(3).
The definition and lemma below can be used as an abstract guideline for converting a correct goal solving system
$\mathcal{CA}$ into another goal solving system $\mathcal{FCA}$ which is correct in the flexible sense
and may be easier to implement, because it yields smaller sets of computed answers.

\begin{defn}[Flexible Restrictions of an Abstract Goal Solving System]
\label{dfn:flexrestr}
Let  $\mathcal{CA}$ and $\mathcal{FCA}$ be two abstract goal solving systems for $\sqclp{\simrel}{\qdom}{\cdom}$.
We say that $\mathcal{FCA}$ is a {\em flexible restriction} of $\mathcal{CA}$  iff the two following conditions hold
for any choice of a program $\Prog$ and a goal $G: (\qat{A_i}{W_i},~ W_i \dgeq^? \!\beta_i)_{i = 1 \ldots m}$:
\begin{enumerate}
\item
$\mathcal{FCA}_{\Prog}(G) \subseteq \mathcal{CA}_{\Prog}(G)$.
Informally, $\mathcal{FCA}$ is restricted to compute some of the answers computed by $\mathcal{CA}$.
\item
For each $ans = \langle \sigma, \mu, \Pi \rangle \in \mathcal{CA}_{\Prog}(G)$ there is
some $\widehat{ans} = \langle \hat{\sigma}, \hat{\mu}, \Pi \rangle \in \mathcal{FCA}_{\Prog}(G)$
such that
$\lambda_{\widehat{ans}} \dgeq \lambda_{ans}$
and  $\simrel(X\sigma,X\hat{\sigma}) \dgeq \lambda_{ans}$
holds for each variable $X \in \mathrm{var}(G)$.
Informally, each answer computed by $\mathcal{CA}$ is close (w.r.t. $\simrel$) to some
of the answers computed by $\mathcal{FCA}$.
 \mathproofbox
\end{enumerate}
\end{defn}

\begin{lem}[Flexible Correctness of Flexible Restrictions]
\label{lema:flexrestr}
Let $\mathcal{CA}$ be a correct abstract goal solving system for $\sqclp{\simrel}{\qdom}{\cdom}$.
Then any flexible restriction $\mathcal{FCA}$ of  $\mathcal{CA}$ is correct in the flexible sense.
\end{lem}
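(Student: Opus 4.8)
The plan is to establish the two ingredients of flexible correctness—soundness and weak completeness in the flexible sense—separately, in each case feeding the corresponding property of the correct system $\mathcal{CA}$ through the two defining conditions of a flexible restriction. Soundness of $\mathcal{FCA}$ is immediate: by condition~(1) of Definition~\ref{dfn:flexrestr}, $\mathcal{FCA}_{\Prog}(G) \subseteq \mathcal{CA}_{\Prog}(G)$ for every $\Prog$ and $G$, and since $\mathcal{CA}$ is correct, hence sound, $\mathcal{CA}_{\Prog}(G) \subseteq \Sol{\Prog}{G}$. Chaining the two inclusions gives $\mathcal{FCA}_{\Prog}(G) \subseteq \Sol{\Prog}{G}$, which is exactly the soundness of $\mathcal{FCA}$.

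For weak completeness in the flexible sense, fix $\Prog$, a goal $G : (\qat{A_i}{W_i},\ W_i \dgeq^? \beta_i)_{i=1\ldots m}$, and an arbitrary ground solution $gsol = \langle \eta, \rho, \emptyset \rangle \in \GSol{\Prog}{G}$. Since $\mathcal{CA}$ is weakly complete, some $ans = \langle \sigma, \mu, \Pi \rangle \in \mathcal{CA}_{\Prog}(G)$ subsumes $gsol$; by Definition~\ref{dfn:goalsol}(4) this yields $\lambda_{ans} \dgeq \lambda_{gsol}$ together with some $\nu \in \Solc{\Pi}$ such that $X\eta = X\sigma\nu$ for every $X \in \mathrm{var}(G)$. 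Applying condition~(2) of Definition~\ref{dfn:flexrestr} to $ans$ produces $\widehat{ans} = \langle \hat{\sigma}, \hat{\mu}, \Pi \rangle \in \mathcal{FCA}_{\Prog}(G)$—crucially with the \emph{same} constraint store $\Pi$—such that $\lambda_{\widehat{ans}} \dgeq \lambda_{ans}$ and $\simrel(X\sigma, X\hat{\sigma}) \dgeq \lambda_{ans}$ for every $X \in \mathrm{var}(G)$. I claim $\widehat{ans}$ subsumes $gsol$ in the flexible sense, which finishes this part. The qualification-level requirement follows by transitivity of $\dgeq$ from $\lambda_{\widehat{ans}} \dgeq \lambda_{ans} \dgeq \lambda_{gsol}$. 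Because $\Pi$ is unchanged, the same $\nu \in \Solc{\Pi}$ serves as the witness demanded by flexible subsumption, so using $X\eta = X\sigma\nu$ it remains only to verify, for each $X \in \mathrm{var}(G)$, that $\simrel(X\eta, X\hat{\sigma}\nu) = \simrel(X\sigma\nu, X\hat{\sigma}\nu) \dgeq \lambda_{gsol}$.

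The key step—and the only part beyond bookkeeping—is the \emph{instantiation monotonicity} of the extended proximity relation: for all terms $t, s$ and every substitution $\nu$, $\simrel(t\nu, s\nu) \dgeq \simrel(t,s)$. I would prove this by structural induction following the recursive clauses defining $\simrel$ on terms. The base cases are trivial: if $t$ and $s$ are identical, or are constants fixed by $\nu$, both sides coincide; and if $t$ is a variable distinct from $s$, or $t,s$ are constructors of different arity, the right-hand side is $\bt$. In the remaining case $t = c(\ntup{t}{n})$ and $s = c'(\ntup{t'}{n})$ of equal arity, $\simrel(t\nu, s\nu) = \simrel(c,c') \sqcap \bigsqcap_{i=1}^{n} \simrel(t_i\nu, t'_i\nu)$, and the induction hypotheses $\simrel(t_i\nu, t'_i\nu) \dgeq \simrel(t_i, t'_i)$ combined with monotonicity of $\sqcap$ give the claim. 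Instantiating the lemma with $t = X\sigma$, $s = X\hat{\sigma}$ and the substitution $\nu$, then chaining with $\simrel(X\sigma, X\hat{\sigma}) \dgeq \lambda_{ans} \dgeq \lambda_{gsol}$, discharges the displayed inequality and completes the argument. I expect this monotonicity lemma to be the main (albeit routine) obstacle, since it is where the interplay between $\simrel$ and substitution must be controlled; the rest is a direct composition of the two flexible-restriction conditions with the soundness and completeness of $\mathcal{CA}$.
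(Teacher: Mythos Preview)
Your proof is correct and follows essentially the same route as the paper's: soundness by the inclusion $\mathcal{FCA}_{\Prog}(G) \subseteq \mathcal{CA}_{\Prog}(G) \subseteq \Sol{\Prog}{G}$, and flexible weak completeness by composing the subsumption witness $\nu$ for $ans$ with the closeness guarantee for $\widehat{ans}$, using the same constraint store $\Pi$. The one place you are more explicit than the paper is the instantiation monotonicity $\simrel(t\nu,s\nu) \dgeq \simrel(t,s)$, which the paper invokes tacitly when passing from $\simrel(X\sigma,X\hat{\sigma}) \dgeq \lambda_{ans}$ to $\simrel(X\sigma\nu,X\hat{\sigma}\nu) \dgeq \lambda_{gsol}$; your structural-induction sketch of this fact is correct and a welcome clarification.
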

\begin{proof*}
By assumption, $\mathcal{CA}$ is sound and weakly complete.
We must prove soundness and weak completeness in the flexible sense for $\mathcal{FCA}$.
Let a $\sqclp{\simrel}{\qdom}{\cdom}$-program $\Prog$ and a goal
$G: (\qat{A_i}{W_i},~ W_i \dgeq^? \!\beta_i)_{i = 1 \ldots m}$ for $\Prog$ be given.

\smallskip\noindent
--- \emph{Soundness.}
$\mathcal{FCA}_{\Prog}(G) \subseteq \mbox{Sol}_\Prog(G)$ trivially follows from $\mathcal{FCA}_{\Prog}(G) \subseteq \mathcal{CA}_{\Prog}(G)$ (true because $\mathcal{FCA}$ refines $\mathcal{CA}$) and $\mathcal{CA}_{\Prog}(G) \subseteq \mbox{Sol}_\Prog(G)$ (true because $\mathcal{CA}$ is sound).

\smallskip\noindent
--- \emph{Weak completeness in the flexible sense.}
In order to  check the conditions stated in Definition \ref{dfn:goalsolsys}(3), let $gsol =  \langle \eta, \rho, \emptyset \rangle \in \mbox{GSol}_\Prog(G)$ be given.
Since $\mathcal{CA}$ is weakly complete, there is some $ans = \langle \sigma, \mu, \Pi \rangle \in \mathcal{CA}_{\Prog}(G)$ that subsumes $gsol$ and hence:
\begin{enumerate}[$(f)$]
\item[$(a)$] $W_i\mu \dgeq W_i\rho$ for $i = 1 \ldots m$, which implies $\lambda_{ans} \dgeq \lambda_{gsol}$.
\item[$(b)$] There  is some $\nu \in \Solc{\Pi}$ such that  $X\eta = X\sigma\nu$ holds for all $X \in \mathrm{var}(G)$.
\end{enumerate}
Since $\mathcal{FCA}$ is a flexible refinement of $\mathcal{CA}$, there is some $\widehat{ans} = \langle \hat{\sigma}, \hat{\mu}, \Pi \rangle \in \mathcal{FCA}_{\Prog}(G)$ that is close to $ans$ and thus verifies:
\begin{enumerate}[$(f)$]
\item[$(c)$] $\lambda_{\widehat{ans}} \dgeq \lambda_{ans}$.
\item[$(d)$] $\simrel(X\sigma,X\hat{\sigma}) \dgeq \lambda_{ans}$ holds for all $X \in \mathrm{var}(G)$.
\end{enumerate}
Now we can claim:
\begin{enumerate}[$(f)$]
\item[$(e)$] $\lambda_{\widehat{ans}} \dgeq \lambda_{gsol}$ --- follows from $(c)$ and $(a)$.
\item[$(f)$] $\simrel(X\sigma\nu,X\hat{\sigma}\nu) \dgeq \lambda_{gsol}$ holds for all $X \in \mathrm{var}(G)$ --- follows from $(d)$ and $(a)$.
\item[$(g)$] $\simrel(X\eta,X\hat{\sigma}\nu) \dgeq \lambda_{gsol}$ holds for all $X \in \mathrm{var}(G)$ --- follows from $(f)$ and $(b)$.
\end{enumerate}
%
%
%
%
%
%
Since $\nu \in \Solc{\Pi}$, (e) and (g) guarantee that $\widehat{ans}$ subsumes $gsol$ in the flexible sense. This finishes the proof. \mathproofbox
\end{proof*}


Let us finish this section with a remark concerning unification.
Both our implementation and SLD-based goal solving systems for SLP languages---we view \cite{AF02,Ses02} as representative proposals of this kind; others have been cited in Section \ref{sec:introduction}---must share the ability to solve unification problems modulo a  given proximity relation $\simrel : S \times S \to [0,1]$ over signature symbols, that is assumed to be transitive in \cite{Ses02} and some other related works,  but not in  {\sf Bousi$\sim$Prolog} \cite{JR09,JR09b} and our own approach.
The lack of transitivity makes a crucial difference. The unification algorithms modulo $\simrel$ known for the case that $\simrel$ is a similarity relation fail to be complete in the flexible sense if $\simrel$ is a non-transitive proximity relation. More details on this issue are given in Subsection \ref{sec:practical:SQCLP} when discussing the implementation of unification modulo $\simrel$ in our prototype system for SQCLP programming.

\section{The Schemes QCLP \& CLP as Specializations of SQCLP}
\label{sec:qclpclp}

As discussed in the concluding section of \cite{RR10},
several specializations of the SQCLP scheme can be obtained by partial instantiation of its parameters.
In particular, QCLP and CLP can be defined as schemes with instances:
$$
\begin{array}{r@{\hspace{1mm}}c@{\hspace{1mm}}l}
\qclp{\qdom}{\cdom} &\eqdef& \sqclp{\sid}{\qdom}{\cdom}\\
\clp{\cdom} &\eqdef& \sqclp{\sid}{\B}{\cdom} = \qclp{\B}{\cdom}\\
\end{array}
$$
with $\sid$  the {\em identity} proximity relation
and $\B$ the qualification domain including just the two classical boolean values.
As explained in the introduction, QCLP and CLP are the targets of two program transformations to
be developed in Section \ref{sec:implemen}.
In this  brief section we provide an explicit description of the syntax and semantics of these two schemes,
derived from their behaviour as specializations of SQCLP.

\subsection{Presentation of the QCLP Scheme}
\label{sec:cases:qclp}


As already explained, the instances of QCLP can be defined by the equation QCLP($\qdom$,$\cdom$) = SQCLP($\sid$,$\qdom$,$\cdom$).
Due to the admissibility of the parameter triple $\langle \sid, \qdom, \cdom \rangle$, the qualification domain $\qdom$ must be (existentially) expressible in the constraint domain $\cdom$.
Technically, the QCLP scheme can be seen as a common extension of the classical CLP scheme for Constraint Logic Programming \cite{JL87,JMM+98} and the QLP scheme for Qualified Logic Programming originally introduced in \cite{RR08}.
Intuitively,  QCLP programming behaves like SQCLP programming, except that proximity information other than the identity is not available for proving equalities.


Program clauses and observable qc-atoms in QCLP are defined in the same way as in  SQCLP.
The library program $\Prog_{\!s}$ in Figure \ref{fig:library} becomes a $\qclp{\,\U}{\rdom}$-program $\Prog'_{\!s}$ just by replacing $\sid$ for $\simrel$.
Of course, $\Prog'_{\!s}$ does not support flexible unification as it was the case with $\Prog_{\!s}$.


As explained in Subsection \ref{sec:sqclp:programs}, the proof system consisting of the three displayed in Figure \ref{fig:sqchl} characterizes the declarative semantics of
a given $\sqclp{\simrel}{\qdom}{\cdom}$-program $\Prog$. In the particular case $\simrel = \sid$, the inference rules specialize to those  displayed in Figure \ref{fig:qchl}, yielding a formal proof system called \emph{Qualified Constrained Horn Logic}---in symbols, $\QCHL(\qdom,\cdom)$---which characterizes the declarative semantics of a given $\qclp{\qdom}{\cdom}$-program $\Prog$.
Note that rule {\bf SQEA} depends on a relation $\approx_{\Pi}$ between terms that is defined to behave the same as the specialization of $\approx_{d,\Pi}$ to the case $\simrel = \sid$.
It is easily checked that $t \approx_{\Pi} s$ does not depend on $d$ and holds iff $\Pi \model{\cdom} t == s$.
Both $\approx_{d,\Pi}$ and $\approx_{\Pi}$ allow to use the constraints within $\Pi$ when deducing equations. However, $c(\ntup{t}{n}) \approx_{\Pi} c'(\ntup{s}{n})$ never holds in the case that $c$ and $c'$ are not syntactically identical.

 \begin{figure}[ht]
  \figrule
  \centering
  \begin{tabular}{l}\\
  \textbf{QDA} ~ $\displaystyle\frac
    {~ (~ \cqat{(t'_i == t_i\theta)}{d_i}{\Pi} ~)_{i = 1 \ldots n} \quad (~ \cqat{B_j\theta}{e_j}{\Pi} ~)_{j = 1 \ldots m} ~}
    {\cqat{p(\ntup{t'}{n})}{d}{\Pi}}$ \\ \\
  \hspace{11mm} if $(p(\ntup{t}{n}) \qgets{\alpha} \qat{B_1}{w_1}, \ldots, \qat{B_m}{w_m}) \in \Prog$, $\theta$ subst., \\
  \hspace{11mm} $e_j \dgeq^? w_j ~ (1 \le j \le m)$ and $d \dleq \bigsqcap_{i = 1}^{n}d_i \sqcap \alpha \circ \bigsqcap_{j = 1}^m e_j$.\\ 
  \\
  \textbf{QEA} ~ $\displaystyle\frac
    {}
    {\quad \cqat{(t == s)}{d}{\Pi} \quad}$ ~
  if $t \approx_{\Pi} s$.
  \hspace{8mm} \textbf{QPA} ~ $\displaystyle\frac
    {}
    {\quad \cqat{\kappa}{d}{\Pi} \quad}$ ~
  if $\Pi \model{\cdom} \kappa$. \\
  \\
  \end{tabular}
  \caption{Qualified Constrained Horn Logic}
  \label{fig:qchl}
  \figrule
\end{figure}

$\SQCHL(\simrel,\qdom,\cdom)$ proof trees and the notations related to them can be naturally specialized to $\QCHL(\qdom,\cdom)$.
In particular, we will use the notation $\Prog \qchldc \varphi$ (resp. $\Prog \qchldcn{k} \varphi$)
to indicate that the qc-atom $\varphi$ can be inferred in $\QCHL(\qdom,\cdom)$ from the program $\Prog$
(resp. it can be inferred by using exactly $k$ \textbf{QDA} inference steps).
Clearly, Theorem \ref{thm:SQCHL-leastmodel} specializes to QCHL yielding the following result that is stated here for convenience:

\begin{thm}[Logical characterization of least program models in QCHL]
\label{thm:QCHL-leastmodel} For any $\qclp{\qdom}{\cdom}$-program $\Prog$, its least model can be characterized as: $$\Mp = \{\varphi \mid \varphi \mbox{ is an observable defined qc-atom and }\Prog \qchldc \varphi\} \mathproofbox$$
\end{thm}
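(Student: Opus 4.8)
The plan is to derive this result as a direct specialization of Theorem \ref{thm:SQCHL-leastmodel}, exploiting the fact that $\qclp{\qdom}{\cdom}$ is by definition the instance $\sqclp{\sid}{\qdom}{\cdom}$ of SQCLP. Since least models and the notions of observable and defined qc-atom are defined uniformly across all SQCLP instances, it suffices to show that the derivability relation $\Prog \qchldc \varphi$ coincides with $\Prog \sqchlrdc \varphi$ when $\simrel$ is taken to be $\sid$. Once this coincidence is established, substituting $\simrel = \sid$ into the statement of Theorem \ref{thm:SQCHL-leastmodel} yields exactly the claimed characterization of $\Mp$.

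The core of the argument is therefore a rule-by-rule check that the three inference rules of $\QCHL(\qdom,\cdom)$ in Figure \ref{fig:qchl} are precisely what the three rules of $\SQCHL(\sid,\qdom,\cdom)$ in Figure \ref{fig:sqchl} collapse to under $\simrel = \sid$. First I would treat \textbf{SQPA}, which is immediate: it does not mention $\simrel$ and is literally identical to \textbf{QPA}. Next I would handle \textbf{SQEA}: its side condition $t \approx_{d,\Pi} s$ requires $\bt \neq d \dleq \simrel(\hat t,\hat s)$ for suitable $\hat t,\hat s$ with $\Pi \model{\cdom} t == \hat t$ and $\Pi \model{\cdom} s == \hat s$; when $\simrel = \sid$ the proximity of two terms is $\tp$ exactly when they are syntactically identical and $\bt$ otherwise, so $\approx_{d,\Pi}$ degenerates to the $d$-independent relation $\approx_{\Pi}$ with $t \approx_{\Pi} s$ iff $\Pi \model{\cdom} t == s$, matching the side condition of \textbf{QEA} (this is exactly the observation already recorded in the text preceding the theorem).

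The only step requiring care is \textbf{SQDA} $\to$ \textbf{QDA}. Here the side condition $\simrel(p',p) = d_0 \neq \bt$ forces, under $\simrel = \sid$, that $p'$ and $p$ be the identical predicate symbol and that $d_0 = \tp$. Consequently the inferred head $p'(\ntup{t'}{n})$ becomes $p(\ntup{t'}{n})$, and in the qualification bound $d \dleq \bigsqcap_{i=0}^{n} d_i \sqcap \alpha \circ \bigsqcap_{j=1}^m e_j$ the factor $d_0 = \tp$ disappears, since $\tp$ is the top element and hence neutral for $\sqcap$, leaving $d \dleq \bigsqcap_{i=1}^{n} d_i \sqcap \alpha \circ \bigsqcap_{j=1}^m e_j$; this is exactly the bound of \textbf{QDA}, while the equational premises $\cqat{(t'_i == t_i\theta)}{d_i}{\Pi}$ and body premises $\cqat{B_j\theta}{e_j}{\Pi}$ are unchanged. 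I expect this specialization of \textbf{SQDA} to be the main—and essentially the only—obstacle, and it is a routine check once one unfolds the definitions of $\sid$ and of the lattice and attenuation operations. Having matched all three rules, the two derivability relations coincide, every $\QCHL(\qdom,\cdom)$ proof tree is a $\SQCHL(\sid,\qdom,\cdom)$ proof tree and conversely, and the theorem follows by instantiating Theorem \ref{thm:SQCHL-leastmodel} at $\simrel = \sid$.
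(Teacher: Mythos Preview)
Your proposal is correct and takes essentially the same approach as the paper: the paper presents Theorem~\ref{thm:QCHL-leastmodel} without a separate proof, remarking only that ``Theorem~\ref{thm:SQCHL-leastmodel} specializes to QCHL yielding the following result that is stated here for convenience.'' Your rule-by-rule verification that $\SQCHL(\sid,\qdom,\cdom)$ collapses to $\QCHL(\qdom,\cdom)$ simply spells out in detail what the paper takes as evident.
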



Concerning goals and their solutions, their specialization to the particular case $\simrel = \sid$ leaves the syntax of goals $G$ unaffected and leads to the following definition, almost identical to Definition \ref{dfn:goalsol}:

\begin{defn}[Possible Answers and Goal Solutions in QCLP]
\label{dfn:qclp-goalsol} Assume a given $\qclp{\simrel}{\qdom}{\cdom}$-program $\Prog$ and a goal
$G: (~ \qat{A_i}{W_i}, W_i \dgeq^? \!\beta_i ~)_{i = 1 \ldots m}$. Then:
\begin{enumerate}
\item
{\em Possible answers} $ans = \langle \sigma, \mu, \Pi \rangle$ for $G$ and their qualification levels 
are defined as in SQCLP (see Definition \ref{dfn:goalsol}(1)).
 \item
 A {\em solution} for $G$ is any possible answer $\langle \sigma, \mu, \Pi \rangle$ 
that verifies the conditions in Definition \ref{dfn:goalsol}(2), 
 except that the requirement $\Prog \sqchlrdc \cqat{A_i\sigma}{W_i\mu}{\Pi}$
 used in Definition \ref{dfn:goalsol} for SQCLP becomes now
 $\Prog  \qchldc \cqat{A_i\sigma}{W_i\mu}{\Pi}$ for QCLP.
The set of all solutions for $G$ is noted $\Sol{\Prog}{G}$.
\item
The subset $\GSol{\Prog}{G} \subseteq \Sol{\Prog}{G}$ of all {\em ground} solutions is
defined exactly as in Definition \ref{dfn:goalsol}(3).
%
\item
The subsumption relation between a ground solution $\langle \eta, \rho, \emptyset \rangle \in \GSol{\Prog}{G}$
and an arbitrary solution $\langle \sigma, \mu, \Pi \rangle$  is
defined exactly as in Definition \ref{dfn:goalsol}(4).
Subsumption in the flexible sense cannot be considered in QCLP due to the absence of a proximity relation.
\mathproofbox
\end{enumerate}
\end{defn}


Finally, the notion  of correct abstract goal solving system for $\mbox{SQCLP}$ given in Definition
\ref{dfn:goalsolsys} specializes to $\mbox{QCLP}$ with only one minor modification: 
weak completeness in the flexible sense cannot be considered here,  due to the absence of a proximity relation. 
Therefore, we state the following definition: 

\begin{defn}[Correct Abstract Goal Solving Systems for QCLP]
\label{dfn:qclp-goalsolsys}
An {\em abstract goal solving system} for $\qclp{\qdom}{\cdom}$ is any device $\mathcal{CA}$ that takes a program $\Prog$ and a goal $G$ as input and yields a set $\mathcal{CA}_{\Prog}(G)$ of possible answers $\langle \sigma, \mu, \Pi \rangle$ 
(called {\em computed answers}) as output.  Moreover:
\begin{enumerate}
\item
$\mathcal{CA}$ is called {\em sound} iff every computed answer is a solution, 
i.e. $\mathcal{CA}_{\Prog}(G) \subseteq \mbox{Sol}_\Prog(G)$.
\item
$\mathcal{CA}$ is called {\em weakly complete} iff for every ground solution 
$gsol \in \mbox{GSol}_\Prog(G)$ 
there is some computed answer $ans \in \mathcal{CA}_{\Prog}(G)$ such that 
$ans$ subsumes $gsol$.
\item
$\mathcal{CA}$ is called {\em correct} iff it is both sound and weakly complete. \mathproofbox
\end{enumerate}
\end{defn}

\subsection{Presentation of the CLP Scheme}
\label{sec:cases:clp}


As already explained, the instances of CLP can be defined by the equation CLP($\cdom$) = SQCLP($\sid,\B,\cdom$), 
or equivalently, CLP($\cdom$) = QCLP($\B,\cdom$).
Due to the fixed  choice $\qdom = \B$, the only qualification value $d \in \aqdom$ available for use as attenuation factor or threshold value is $d = \tp$.
Therefore, CLP can only include threshold values equal to `?' and attenuation values equal to the top element $\tp = true$ of $\B$.
As explained in Section \ref{sec:sqclp}, such trivial threshold and attenuation values can be omitted,
and CLP clauses can  be written with the simplified syntax  $A \gets B_1, \ldots, B_m$.


Since $\tp = true$ is the only non-trivial qualification value available in CLP, qc-atoms $\cqat{A}{d}{\Pi}$ are always of the form $\cqat{A}{true}{\Pi}$ and can be written as $\cat{A}{\Pi}$.
Moreover, all the side conditions for the inference rule {\bf QDA} in Figure \ref{fig:qchl} become trivial when specialized to the case $\qdom = \B$. 
Therefore, the specialization of $\QCHL(\qdom,\cdom)$ to the case $\qdom = \B$ leads to the formal proof system called \emph{Constrained Horn Logic}---in symbols, $\CHL(\cdom)$---consisting of the three inference rules displayed in Figure \ref{fig:chl}, which characterizes the declarative semantics of a given $\clp{\cdom}$-program $\Prog$.

\begin{figure}[ht]
  \figrule
  \centering
  \begin{tabular}{l}\\
  \textbf{DA} ~ $\displaystyle\frac
    {~ (~ \cat{(t'_i == t_i\theta)}{\Pi} ~)_{i = 1 \ldots n} \quad (~ \cat{B_j\theta}{\Pi} ~)_{j = 1 \ldots m} ~}
    {\cat{p(\ntup{t'}{n})}{\Pi}}$ \\ \\
  \hspace{9mm} if $(p(\ntup{t}{n}) \gets B_1, \ldots, B_m) \in \Prog$ and $\theta$ subst. \\
   \\
  \textbf{EA} ~ $\displaystyle\frac
    {}
    {\quad \cat{(t == s)}{\Pi} \quad}$ ~
  if $t \approx_{\Pi} s$.
  \hspace{10mm} \textbf{PA} ~ $\displaystyle\frac
    {}
    {\quad \cat{\kappa}{\Pi} \quad}$ ~
  if $\Pi \model{\cdom} \kappa$. \\
  \\
  \end{tabular}
  \caption{Constrained Horn Logic}
  \label{fig:chl}
  \figrule
\end{figure}

$\QCHL(\qdom,\cdom)$ proof trees and the notations related to them can be naturally specialized to $\CHL(\cdom)$.
In particular, we will use the notation $\Prog \chlc \varphi$ (resp. $\Prog \chlcn{k} \varphi$)
to indicate that the qc-atom $\varphi$ can be inferred in $\CHL(\cdom)$ from the program $\Prog$
(resp. it can be inferred by using exactly $k$ \textbf{DA} inference steps).
Clearly, Theorem \ref{thm:QCHL-leastmodel} specializes to CHL yielding the following result that is stated here for convenience:

\begin{thm}[Logical characterization of least program models in CHL]
\label{thm:CHL-leastmodel} For any $\clp{\cdom}$-program $\Prog$, its least model can be characterized as: $$\Mp = \{\varphi \mid \varphi \mbox{ is an observable defined qc-atom and }\Prog \chlc \varphi\} \mathproofbox$$
\end{thm}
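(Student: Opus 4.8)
The plan is to obtain Theorem~\ref{thm:CHL-leastmodel} as a pure specialization of Theorem~\ref{thm:QCHL-leastmodel} (which is itself the $\simrel = \sid$ specialization of Theorem~\ref{thm:SQCHL-leastmodel}, proved in \cite{RR10TR}), exploiting the defining identity $\clp{\cdom} = \qclp{\B}{\cdom}$. Nothing genuinely new about least models needs to be established; the whole task is to verify that every ingredient occurring in Theorem~\ref{thm:QCHL-leastmodel} collapses correctly onto its $\CHL(\cdom)$ counterpart once we fix $\qdom = \B$.

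First I would set up the correspondence between the syntactic objects of the two readings. A $\clp{\cdom}$-program is literally a $\qclp{\B}{\cdom}$-program all of whose clauses carry attenuation factor $\tp$ and threshold marks `?', so the program $\Prog$ is the same object in both schemes. For the qc-atoms, observability requires $d \in \aqdomd{\B} = \{\tp\}$ together with satisfiability of $\Pi$; hence every observable qc-atom $\cqat{A}{d}{\Pi}$ must have $d = \tp$, and the identification $\cat{A}{\Pi} \leftrightarrow \cqat{A}{\tp}{\Pi}$ is a bijection between the observable defined qc-atoms of $\CHL(\cdom)$ and those of $\QCHL(\B,\cdom)$.

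Next I would match the two proof systems rule by rule and promote the match to a derivability equivalence $\Prog \chlc \cat{A}{\Pi} \iff \Prog \qchln{\B}{\cdom}{} \cqat{A}{\tp}{\Pi}$. The rules \textbf{EA}/\textbf{QEA} and \textbf{PA}/\textbf{QPA} coincide outright, since $\approx_{\Pi}$ does not depend on the qualification value and the side condition $\Pi \model{\cdom} \kappa$ is unchanged. For \textbf{DA}/\textbf{QDA} the key observation is that in any $\QCHL(\B,\cdom)$ proof of an observable qc-atom every qualification value occurring in the tree lies in $\{\tp\}$, so the side conditions $e_j \dgeq^? w_j$ and $d \dleq \bigsqcap_{i=1}^{n} d_i \sqcap \alpha \circ \bigsqcap_{j=1}^{m} e_j$ degenerate to $\tp \dleq \tp$ (using $\tp \circ \tp = \tp$ on $\B$) and hold vacuously; thus \textbf{QDA} is applicable precisely when \textbf{DA} is. A routine induction on the number of \textbf{DA}/\textbf{QDA} steps then yields $\Prog \chlcn{k} \cat{A}{\Pi}$ iff $\Prog \qchln{\B}{\cdom}{k} \cqat{A}{\tp}{\Pi}$.

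Finally I would conclude as follows. Under the bijection of the second paragraph the set $\{\varphi \mid \varphi \mbox{ is an observable defined qc-atom and } \Prog \chlc \varphi\}$ matches $\{\varphi \mid \varphi \mbox{ is an observable defined qc-atom and } \Prog \qchln{\B}{\cdom}{} \varphi\}$, which equals $\Mp$ by Theorem~\ref{thm:QCHL-leastmodel} instantiated at $\qdom = \B$; and $\Mp$ is the same least model whether $\Prog$ is read as a $\clp{\cdom}$-program or as a $\qclp{\B}{\cdom}$-program, because the two schemes share the same interpretations and the same closure condition once $\qdom = \B$. I expect the only genuine care-point to be the verification in the third paragraph that observability forces $d = \tp$ throughout the proof tree---so that the \textbf{QDA} side conditions are truly vacuous rather than merely satisfiable---which is exactly where the restriction to observable qc-atoms and the boolean structure of $\B$ are used.
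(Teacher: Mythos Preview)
Your proposal is correct and follows exactly the paper's approach: the paper does not give a separate proof of Theorem~\ref{thm:CHL-leastmodel} at all, stating only that Theorem~\ref{thm:QCHL-leastmodel} ``specializes to CHL yielding the following result.'' Your write-up simply spells out in detail why this specialization at $\qdom = \B$ goes through, which the paper leaves implicit.
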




Concerning goals and their solutions, their specialization to the scheme CLP leads to the following definition:

\begin{defn}[Goals and their Solutions in CLP]
\label{dfn:clp-goalsol}
Assume a given $\clp{\cdom}$-program $\Prog$.  Then:
\begin{enumerate}
\item
Goals for $\Prog$ have the form $G:\, A_1, \ldots, A_m$,
abbreviated as $(A_i)_{i = 1 \ldots m}$, where $A_i ~ (1 \leq i \leq m)$ are atoms.
\item
A {\em possible answer} for a goal $G$ is any pair $ans = \langle \sigma, \Pi \rangle$ such that 
$\sigma$ is a $\cdom$-substitution and $\Pi$ is a satisfiable and finite set of atomic $\cdom$-constraints.
 \item
 A possible answer $\langle \sigma, \Pi \rangle$ is called a {\em solution} for $G$ iff
  $\Prog \chlc \cat{A_i\sigma}{\Pi}$ holds for all $i = 1 \ldots m$.
The set of all solutions for $G$ is noted $\Sol{\Prog}{G}$.
\item
A solution $\langle \eta, \Pi \rangle$ for $G$ is called {\em ground} iff $\Pi = \emptyset$ and
$\eta \in \mbox{Val}_\cdom$ is a variable valuation such that $A_i\eta$ is a ground atom for all $i = 1 \ldots m$.
The set of all ground solutions for $G$ is noted $\GSol{\Prog}{G}$.
Obviously, $\GSol{\Prog}{G} \subseteq \Sol{\Prog}{G}$.
\item
A ground solution $\langle \eta, \emptyset \rangle \in \GSol{\Prog}{G}$ is {\em subsumed} by
$\langle \sigma, \Pi \rangle$ iff there is some $\nu \in \Solc{\Pi}$ s.t.
$\eta =_{\varset{G}} \sigma\nu$. \mathproofbox
\end{enumerate}
\end{defn}

The notion  of correct abstract goal solving system for SQCLP given in Definition \ref{dfn:qclp-goalsolsys}
specializes to CLP with a minor  change, namely: computed answers are pairs $\langle \sigma, \Pi \rangle$.
Formally, the definition for CLP is as follows:

\begin{defn}[Correct Abstract Goal Solving Systems for CLP]
\label{dfn:clp-goalsolsys}
A {\em goal solving system} for $\clp{\cdom}$ is any device $\mathcal{CA}$ that takes a program $\Prog$ and a goal $G$ as input and yields a set $\mathcal{CA}_{\Prog}(G)$ of possible answers $\langle \sigma, \Pi \rangle$ (called {\em computed answers}) as output.  Moreover, soundness, weak completeness and weak correctness of $\mathcal{CA}$ are defined exactly as in 
Definition \ref{dfn:qclp-goalsolsys}. \mathproofbox 
\end{defn}


We close this Subsection with a technical lemma that will be useful for proving some results in Subsection \ref{sec:implemen:QCLP2CLP}:

\begin{lem}
\label{lema:dec}
Assume an existential $\cdom$-constraint $\pi(\ntup{X}{n}) = \exists Y_1\ldots\exists Y_k(B_1 \land \ldots \land B_m)$
with free variables $\ntup{X}{n}$ and a given $\clp{\cdom}$-program $\Prog$ including the clause
$C:\, p(\ntup{X}{n}) \gets B_1, \ldots, B_m$, where $p \in DP^n$ does not occur at the head of any other clause of $\Prog$.
Then, for any n-tuple $\ntup{t}{n}$ of $\cdom$-terms and any finite and satisfiable $\Pi \subseteq \Con{\cdom}$,
one has: 
\begin{enumerate}
\item
$\Prog \chlc (\cat{p(\ntup{t}{n})}{\Pi}) \Longrightarrow \Pi \models_{\cdom} \pi(\ntup{t}{n})$,
where $\pi(\ntup{t}{n})$ stands for the result of applying the substitution $\{\ntup{X}{n} \mapsto \ntup{t}{n}\}$
to $\pi(\ntup{X}{n})$.
\item
The opposite implication
$\Pi \models_{\cdom} \pi(\ntup{t}{n}) \Longrightarrow  \Prog \chlc (\cat{p(\ntup{t}{n})}{\Pi})$
holds if $\ntup{t}{n}$ is a ground term tuple.
Note that for ground $\ntup{t}{n}$ the constraint entailment $\Pi \models_{\cdom} \pi(\ntup{t}{n})$
simply means that $\pi(\ntup{t}{n})$ is true in $\cdom$.
%
%
\end{enumerate}
\end{lem}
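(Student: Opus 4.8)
The plan is to prove both implications by inspecting the shape of $\CHL(\cdom)$ derivations, relying on the fact that the atoms $B_1, \ldots, B_m$ building the existential constraint $\pi$ are atomic $\cdom$-constraints, i.e.\ primitive constraints or equations (so that $\Solc{\pi(\ntup{t}{n})}$ is well defined and $\pi$ is a genuine $\cdom$-constraint). Consequently the only inference rules that can conclude an instance of some $B_j$ are \textbf{PA} and \textbf{EA}, whose side conditions both reduce to $\cdom$-entailment; recall also that $t \approx_{\Pi} s$ holds iff $\Pi \models_{\cdom} t == s$.

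For the first claim, I would start from a given $\CHL(\cdom)$ proof of $\cat{p(\ntup{t}{n})}{\Pi}$. Since $p(\ntup{t}{n})$ is a defined atom and $p \in DP^n$ heads only the clause $C$, the proof must end with a \textbf{DA} step using $C$ and some substitution $\theta$. Reading off its premises yields, on the one hand, $\Pi \models_{\cdom} t_i == X_i\theta$ for $i = 1 \ldots n$ (each such equation being concluded by \textbf{EA}), and on the other hand $\Pi \models_{\cdom} B_j\theta$ for $j = 1 \ldots m$ (each $B_j\theta$ being concluded by \textbf{EA} or \textbf{PA} according to its syntactic form). Then, for an arbitrary $\eta \in \Solc{\Pi}$, I would propose the witnesses $b_l := Y_l\theta\eta \in C_\cdom$ for the existential quantifiers of $\pi(\ntup{t}{n})$ and check that, under these witnesses, each instantiated body atom $(B_j\{\ntup{X}{n} \mapsto \ntup{t}{n}\})\eta\{Y_l \mapsto b_l\}$ coincides with the ground constraint $B_j\theta\eta$, which is true because $\eta \in \Solc{B_j\theta}$; the head equations $\Pi \models_{\cdom} t_i == X_i\theta$ are precisely what makes the $X_i$-positions receive the value $t_i\eta$. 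As $\eta$ is arbitrary, this gives $\Solc{\Pi} \subseteq \Solc{\pi(\ntup{t}{n})}$, i.e.\ $\Pi \models_{\cdom} \pi(\ntup{t}{n})$.

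For the second claim, groundness of $\ntup{t}{n}$ makes $\pi(\ntup{t}{n})$ a closed formula, so (as the statement notes) satisfiability of $\Pi$ forces $\Pi \models_{\cdom} \pi(\ntup{t}{n})$ to mean simply that $\pi(\ntup{t}{n})$ is true in $\cdom$. From this I would extract ground witnesses $b_1, \ldots, b_k \in C_\cdom$ making every $B_j\{\ntup{X}{n} \mapsto \ntup{t}{n}, Y_l \mapsto b_l\}$ true, and form the substitution $\theta = \{X_i \mapsto t_i, Y_l \mapsto b_l\}$. A single \textbf{DA} step with clause $C$ and this $\theta$ then derives $\cat{p(\ntup{t}{n})}{\Pi}$: the head premises become the trivially entailed reflexive equations $t_i == t_i$ (rule \textbf{EA}), while each body premise $B_j\theta$ is a ground true constraint, hence entailed by every satisfiable $\Pi$ and derivable by \textbf{EA} (equations) or \textbf{PA} (primitive constraints).

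The main obstacle I anticipate is the bookkeeping in the first claim, where the existential witnesses supplied by the \textbf{DA} substitution $\theta$ must be shown to work uniformly for every $\eta \in \Solc{\Pi}$, and where free data variables possibly occurring in $\ntup{t}{n}$ must be tracked so that the head equations $t_i == X_i\theta$ transfer correctly to the $X_i$-positions of the instantiated body. Once the identification of $(B_j\{\ntup{X}{n} \mapsto \ntup{t}{n}\})\eta\{Y_l \mapsto b_l\}$ with $B_j\theta\eta$ is made precise, both implications become routine; the groundness hypothesis in the second claim is what rescues the otherwise-missing converse, since for non-ground $\ntup{t}{n}$ the head equations $t_i == X_i\theta$ cannot in general be satisfied reflexively.
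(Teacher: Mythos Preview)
Your proposal is correct and follows essentially the same route as the paper's proof: for item~1, analyze the necessarily unique \textbf{DA} step at the root (since $C$ is the only clause for $p$), extract $\Pi \models_{\cdom} t_i == X_i\theta$ and $\Pi \models_{\cdom} B_j\theta$ from its premises, and conclude $\Pi \models_{\cdom} \pi(\ntup{t}{n})$; for item~2, use groundness to pick witnesses for the $Y_l$, build $\theta$ accordingly, and close with a single \textbf{DA} step whose head premises are reflexive equations. The paper simply asserts that the premises ``obviously imply'' $\Pi \models_{\cdom} \pi(\ntup{t}{n})$ in item~1, whereas you spell out the semantic argument with explicit witnesses $b_l = Y_l\theta\eta$; one minor notational quibble is that your expression $(B_j\{\ntup{X}{n} \mapsto \ntup{t}{n}\})\eta\{Y_l \mapsto b_l\}$ should really be read with $\eta$ modified at the $Y_l$ (i.e.\ $\eta[Y_l \mapsto b_l]$) rather than as a post-composition, since $\eta$ is already a total ground valuation---but your intended identification via $X_i\theta\eta = t_i\eta$ and $Y_l\theta\eta = b_l$ is exactly right.
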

\begin{proof*}
We prove each item separately:
\begin{enumerate}
\item
Assume $\Prog \chlc (\cat{p(\ntup{t}{n})}{\Pi})$. 
Note that $C$ is the only clause for $p$ in $\Prog$ and that each atom $B_j$ in $C$'s body is an atomic constraint.
Therefore, the $\CHL(\cdom)$ proof must use a \textbf{DA} step based on an instance
$C\theta$ of clause $C$ such that $\Pi \models_{\cdom} t_i == X_i\theta$ holds for all $1 \leq i \leq n$ and 
$\Pi \models B_j\theta$ holds for all $1 \leq j \leq m$.
These conditions and the syntactic form of $\pi(\ntup{X}{n})$ obviously imply $\Pi \models_{\cdom} \pi(\ntup{t}{n})$.
\item
Assume now $\Pi \models_{\cdom} \pi(\ntup{t}{n})$ and $\ntup{t}{n}$ ground. 
Then $\pi(\ntup{t}{n})$ is true in $\cdom$, and 
due to the syntactic form of $\pi(\ntup{X}{n})$, there must be some substitution $\theta$ such that
$X_i\theta = t_i$ (syntactic identity) for all $1 \leq i \leq n$ and
$B_j\theta$ is ground and true in $\cdom$ for all $1 \leq j \leq m$.
Trivially, $\Pi \models_{\cdom} t_i == X_i\theta$ holds for all  $1 \leq i \leq n$ and
$\Pi \models_{\cdom} B_j\theta$ also holds for all $1 \leq j \leq m$.
Then, it is obvious that $\Prog \chlc (\cat{p(\ntup{t}{n})}{\Pi})$ can be proved 
by using a  \textbf{DA} step based on the instance $C\theta$ of clause $C$. \mathproofbox
\end{enumerate}
\end{proof*}

We remark that the second item of the previous lemma can fail if $\ntup{t}{n}$ is not ground.
This can be checked by presenting a counterexample based on the constraint domain $\rdom$, using the syntax for $\rdom$-constraints explained in \cite{RR10TR}.
Consider the existential $\rdom$-constraint $\pi(X) = \exists Y(op_{+}(Y,Y,X))$,
and a $\clp{\rdom}$-program $\Prog$ including the clause $C:\, p(X) \gets op_{+}(Y,Y,X)$ and no other 
occurrence of the defined predicate symbol $p$. Consider also $\Pi = \{cp_{\geq}(X,0.0)\}$ and $t = X$.
Then  $\Pi \models_{\rdom} \pi(X)$ is obviously true, because any real number $x \geq 0.0$ 
satisfies $\exists Y(op_{+}(Y,Y,x))$ in $\rdom$.
However, there is no $\rdom$-term $s$ such that $\Pi \models_{\rdom} op_{+}(s,s,X)$,
and therefore there is no instance $C\theta$ of clause $C$ that can be used to prove $\Prog \chlc (\cat{p(X)}{\Pi})$
by  applying a  \textbf{DA} step.


\section{Implementation by Program Transformation}
\label{sec:implemen}

The purpose of this section is to introduce a program transformation that transforms $\sqclp{\simrel}{\qdom}{\cdom}$ programs and goals into semantically equivalent $\clp{\cdom}$ programs and goals.
This transformation is performed as the composition of the two following specific transformations:
\begin{enumerate}
\item
elim$_\simrel$ --- Eliminates the proximity relation $\simrel$ of arbitrary $\sqclp{\simrel}{\qdom}{\cdom}$ programs and goals, producing equivalent $\qclp{\qdom}{\cdom}$ programs and goals.
\item
elim$_\qdom$ --- Eliminates the qualification domain $\qdom$ of arbitrary $\qclp{\qdom}{\cdom}$ programs and goals, producing equivalent $\clp{\cdom}$ programs and goals.
\end{enumerate}

Thus, given a $\sqclp{\simrel}{\qdom}{\cdom}$-program $\Prog$---resp. $\sqclp{\simrel}{\qdom}{\cdom}$-goal $G$---, the composition of the two transformations will produce an equivalent $\clp{\cdom}$-program $\elimD{\elimS{\Prog}}$---resp. $\clp{\cdom}$-goal $\elimD{\elimS{G}}$---.

\begin{exmp}[Running example: $\sqclp{\simrel_r}{\,\U{\otimes}\W}{\rdom}$-program $\Prog_r$]
\label{exmp:pr}
As a running example for this section, consider the $\sqclp{\simrel_r}{\,\U{\otimes}\W}{\rdom}$-program $\Prog_r$ as follows:
\begin{center}
\footnotesize\it
\renewcommand{\arraystretch}{1.4}
\begin{tabular}{rl}
\tiny $R_1$ & famous(sha) $\qgets{(0.9,1)}$ \\
\tiny $R_2$ & wrote(sha, kle) $\qgets{(1,1)}$ \\
\tiny $R_3$ & wrote(sha, hamlet) $\qgets{(1,1)}$ \\
\tiny $R_4$ & good\_work(G) $\qgets{(0.75,3)}$ famous(A)\#(0.5,100), authored(A, G) \\[3mm]
\tiny $S_1$ & $\simrel_r$(wrote, authored) = $\simrel_r$(authored, wrote) = (0.9,0)\\
\tiny $S_2$ & $\simrel_r$(kle, kli) = $\simrel_r$(kli, kle) = (0.8,2)\\
\end{tabular}
\end{center}
where the constants $shakespeare$, $king\_lear$ and $king\_liar$ have been respectively replaced, for clarity purposes in the subsequent examples, by $sha$, $kle$ and $kli$.

In addition, consider the $\sqclp{\simrel_r}{\,\U{\otimes}\W}{\rdom}$-goal $G_r$ as follows:
\begin{center}
\it good\_work(X)\#W $\sep$ \!W $\dgeq^?$ \!\!(0.5,10)
\end{center}

We will illustrate the two transformation by showing, in subsequent examples, the program clauses of $\elimS{\Prog_r}$ and $\elimD{\elimS{\Prog_r}}$ and the goals $\elimS{G_r}$ and $\elimD{\elimS{G_r}}$. \mathproofbox
\end{exmp}

In the following subsections we explain both transformations in detail and we show that they can be used to specify abstract goal solving systems for SQCLP.


\subsection{Transforming SQCLP into QCLP}
\label{sec:implemen:SQCLP2QCLP}

In this subsection we assume that the triple $\langle \simrel,\qdom,\cdom \rangle$ is admissible.
In the sequel we say that a defined predicate symbol $p \in DP^n$ is {\em affected} by a $\sqclp{\simrel}{\qdom}{\cdom}$-program $\Prog$ iff $\simrel(p,p') \neq \bt$ for some $p'\!$ occurring in $\Prog$.
We also say that an atom $A$ is {\em relevant} for $\Prog$ iff some of the three following cases hold: a) $A$ is an equation $t == s$; b) $A$ is a primitive atom $\kappa$; or c) $A$ is a defined atom $p(\ntup{t}{n})$ such that $p$ is affected by $\Prog$.

As a first step towards the definition of the first program transformation elim$_\simrel$, we define a set $EQ_\simrel$ of $\qclp{\qdom}{\cdom}$ program clauses that emulates the behaviour of equations in $\sqclp{\simrel}{\qdom}{\cdom}$.
The following definition assumes that the binary predicate symbol $\sim\ \in DP^2$ (used in infix notation) and the nullary predicate symbols $\mbox{pay}_\lambda \in DP^0$ are not affected by $\Prog$\!.\smallskip

\begin{defn}\label{def:EQ}
We define $EQ_\simrel$ as the following $\qclp{\qdom}{\cdom}$-program:
$$
\begin{array}{l@{\hspace{0mm}}c@{\hspace{0mm}}l}
EQ_\simrel & \eqdef & \{~ X \sim Y \qgets{\tp} \qat{(X == Y)}{?} ~\} \\
&& ~\bigcup~ \{~ u \sim u' \qgets{\tp} \qat{\mbox{pay}_{\lambda}}{?} \mid u,u' \in B_\cdom \mbox{ and } \simrel(u,u') = \lambda \neq \bt ~\} \\
&& ~\bigcup~ \{~ c(\ntup{X}{n}) \sim c'(\ntup{Y}{n}) \qgets{\tp} \qat{\mbox{pay}_{\lambda}}{?},\ (\qat{(X_i \sim Y_i)}{?})_{i = 1 \ldots n} \mid c, c' \in DC^{n} \\
&& \qquad \mbox{and } \simrel(c,c') = \lambda \neq \bt ~\}\\
&& ~\bigcup~ \{~ \mbox{pay}_\lambda \qgets{\lambda} \ \mid \exists
x,y \in S \mbox{ such that } \simrel(x,y)=\lambda\neq \bt ~\}. \mathproofbox \\
\end{array}
$$
\end{defn}

The following lemma shows the relation between the semantics of equations in $\SQCHL(\simrel,\qdom,\cdom)$
and the behaviour of the binary predicate symbol `$\sim$' defined by $EQ_\simrel$ in $\QCHL(\qdom,\cdom)$.

\begin{lem} \label{lema:equiv}
Consider any two arbitrary terms $t$ and $s$; $EQ_\simrel$ defined as in Definition \ref{def:EQ}; and a satisfiable finite set $\Pi$ of $\cdom$-constraints.
Then, for every $d \in \aqdom$:
$$t \approx_{d,\Pi} s \Longleftrightarrow EQ_\simrel \qchldc \cqat{(t \sim s)}{d}{\Pi}$$
\end{lem}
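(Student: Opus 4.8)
The plan is to prove the two implications separately, each by a suitable induction. For the forward implication ``$\Rightarrow$'' I would induct on the combined syntactic size of the witness terms $\hat t, \hat s$ occurring in the definition of $t \approx_{d,\Pi} s$; for the backward implication ``$\Leftarrow$'' I would induct on the number of \textbf{QDA} steps in the $\QCHL(\qdom,\cdom)$-derivation of $\cqat{(t \sim s)}{d}{\Pi}$ from $EQ_\simrel$. The observation used throughout is that $t \sim s$ is a defined atom, so any $\QCHL(\qdom,\cdom)$-derivation of it must end with a \textbf{QDA} step applying one of the three families of ``$\sim$''-clauses, whereas each $\mbox{pay}_\lambda$ can only be derived by its unique fact $\mbox{pay}_\lambda \qgets{\lambda}$, which (empty body, bound $\tp \sqcap \lambda \circ \tp = \lambda$) yields $\cqat{\mbox{pay}_\lambda}{e}{\Pi}$ exactly for $e \dleq \lambda$. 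I would also record at the outset that, since the statement fixes $d \in \aqdom$, we always have $d \neq \bt$; combined with $d \dleq \simrel(\hat t,\hat s)$ this forces $\simrel(\hat t,\hat s) \neq \bt$, which by the recursive definition of $\simrel$ on terms restricts $\hat t,\hat s$ to a handful of shapes.

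For ``$\Rightarrow$'' I would take witnesses with $\Pi \model{\cdom} t == \hat t$, $\Pi \model{\cdom} s == \hat s$ and $\bt \neq d \dleq \simrel(\hat t,\hat s)$, and split on their form. If $\hat t = \hat s$ syntactically, then $\Pi \model{\cdom} t == s$ and the equation clause $X \sim Y \qgets{\tp} \qat{(X == Y)}{?}$, instantiated by $\{X \mapsto \hat t,\, Y \mapsto \hat t\}$ and fed with \textbf{QEA} premises of value $\tp$, derives $\cqat{(t \sim s)}{d}{\Pi}$ since the \textbf{QDA} bound reduces to $\tp$. If $\hat t,\hat s$ are distinct basic values $u,u'$, then $\simrel(u,u') = \lambda \neq \bt$ and $d \dleq \lambda$; the basic-value clause together with $\cqat{\mbox{pay}_\lambda}{\lambda}{\Pi}$ gives the bound $\tp \circ \lambda = \lambda \dgeq d$. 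If $\hat t = c(\ntup{a}{n})$ and $\hat s = c'(\ntup{b}{n})$, then $\simrel(\hat t,\hat s) = \simrel(c,c') \sqcap \infi_i \simrel(a_i,b_i) \neq \bt$ forces $\simrel(c,c') = \lambda \neq \bt$ and every $\simrel(a_i,b_i) \neq \bt$; since each pair trivially satisfies $a_i \approx_{\simrel(a_i,b_i),\Pi} b_i$ (take the subterms themselves as witnesses), the induction hypothesis produces $\cqat{(a_i \sim b_i)}{\simrel(a_i,b_i)}{\Pi}$, and combining these with $\cqat{\mbox{pay}_\lambda}{\lambda}{\Pi}$ through the constructor clause yields the bound $\tp \circ (\lambda \sqcap \infi_i \simrel(a_i,b_i)) = \simrel(\hat t,\hat s) \dgeq d$, using only $\tp \circ e = e$ (axiom $d \circ \tp = d$ with commutativity).

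For ``$\Leftarrow$'' I would invert the final \textbf{QDA} step according to which ``$\sim$''-clause was used. In every case the two head-argument premises are equations, so by \textbf{QEA} they contribute entailments $\Pi \model{\cdom} t == h_1$ and $\Pi \model{\cdom} s == h_2$, where $h_1,h_2$ are the instantiated head arguments. For the equation clause I take $\hat t = \hat s = X\theta$, using reflexivity $\simrel(\hat t,\hat t) = \tp \dgeq d$. For the basic-value clause the $\mbox{pay}_\lambda$ premise bounds $e \dleq \lambda$, whence the \textbf{QDA} bound gives $d \dleq \lambda = \simrel(u,u')$ and I take $\hat t = u$, $\hat s = u'$. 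For the constructor clause I first note that $d \neq \bt$ together with the bound $d \dleq \tp \circ (e_0 \sqcap \infi_i e_i)$ and the axiom $d \circ \bt = \bt$ forces every $e_i \neq \bt$, so the induction hypothesis applies to each premise $\cqat{(X_i\theta \sim Y_i\theta)}{e_i}{\Pi}$, producing witnesses $\hat a_i,\hat b_i$; I then set $\hat t = c(\hat a_1,\ldots,\hat a_n)$, $\hat s = c'(\hat b_1,\ldots,\hat b_n)$ and use congruence of $==$-entailment (from $\Pi \model{\cdom} X_i\theta == \hat a_i$, checked pointwise on $\Solc{\Pi}$, I get $\Pi \model{\cdom} c(X_1\theta,\ldots,X_n\theta) == c(\hat a_1,\ldots,\hat a_n)$) to transport the head entailments to $\hat t,\hat s$, while $d \dleq \tp \circ (e_0 \sqcap \infi_i e_i) \dleq \simrel(c,c') \sqcap \infi_i \simrel(\hat a_i,\hat b_i) = \simrel(\hat t,\hat s)$.

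The main obstacle I anticipate is the constructor case, where both directions must align the recursive clause $c(\ntup{X}{n}) \sim c'(\ntup{Y}{n}) \qgets{\tp} \ldots$ with the recursive definition of $\simrel$ on terms. The delicate points are: justifying that $d \neq \bt$ propagates to each subgoal value $e_i \neq \bt$ (so the induction hypothesis is applicable), which relies on $d \circ \bt = \bt$; collapsing the attenuation expression $\tp \circ (e_0 \sqcap \infi_i e_i)$ to $e_0 \sqcap \infi_i e_i$; and, in the backward direction, the congruence argument lifting the instantiated-head entailments through the common constructor to the assembled witnesses. The remaining bookkeeping---verifying that the ``?''-thresholds in these clauses are vacuously satisfied, and that $\mbox{pay}_\lambda$ facts for all required $\lambda$ are present in $EQ_\simrel$ (they are, since each $\lambda$ arises as some $\simrel(x,y) \neq \bt$)---is routine.
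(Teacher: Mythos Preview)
Your proposal is correct and follows essentially the same approach as the paper's proof: both implications are proved by induction (forward on the structure of the witness terms, backward on the number of \textbf{QDA} steps), with case analysis driven by which $EQ_\simrel$ clause is used. The only organizational difference is that the paper splits the forward direction on the syntactic form of $\hat t$ alone (variable, basic value, nullary constructor, $n$-ary constructor with $n>0$), whereas you first separate the case $\hat t = \hat s$ and then split by type; both coverings are complete and lead to the same subproofs.
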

\begin{proof*}
We separately prove each implication.

\smallskip\noindent [$\Longrightarrow$]
Assume $t \approx_{d,\Pi} s$. Then, there are two terms $\hat{t}$, $\hat{s}$ such that:
$$
(1)~ t \approx_{\Pi} \hat{t} \qquad
(2)~ s \approx_{\Pi} \hat{s} \qquad
(3)~ \hat{t} \approx_{d} \hat{s}
$$
We use structural induction on the form of the term $\hat{t}$.
\begin{itemize}
\item
$\hat{t} = Z$, $Z \in \Var$.
From (3) we have $\hat{s} = Z$. Then (1) and (2) become $t \approx_\Pi Z$ and $s \approx_\Pi Z$, therefore $t \approx_\Pi s$.
Now $EQ_\simrel \qchldc \cqat{(t \sim s)}{d}{\Pi}$ can be proved with a proof tree rooted by a {\bf QDA} step of the form:
$$
\displaystyle\frac
    {~ \cqat{(t == X\theta)}{\tp}{\Pi}\qquad \cqat{(s == Y\theta)}{\tp}{\Pi} \qquad \cqat{(X==Y)\theta}{\tp}{\Pi} ~}
    {\cqat{(t \sim s)}{d}{\Pi}}
$$
using the clause $X \sim Y \qgets{\tp} \qat{(X == Y)}{?} \in EQ_\simrel$ instantiated by the substitution $\theta = \{X \!\mapsto t,\ Y \!\mapsto s \}$. Therefore the three premises can be derived from $EQ_\simrel$ with {\bf QEA} steps since $t \approx_\Pi t$, $s \approx_\Pi s$ and $t \approx_\Pi s$, respectively. Checking the side conditions of all inference steps is straightforward.

\item $\hat{t} = u$, $u \in B_{\cdom}$.
From (3) we have $\hat{s} = u'$ for some $u' \in B_\cdom$ such that $d \dleq \lambda = \simrel(u,u')$.
Then (1) and (2) become  $t \approx_\Pi u$ and $s \approx_\Pi u'$, which allow to build a proof of $EQ_\simrel \qchldc \cqat{(t \sim s)}{d}{\Pi}$ by means of a {\bf QDA} step using the clause $u \sim u' \qgets{\tp} \qat{\mbox{pay}_{\lambda}}{?}$.

\item $\hat{t} = c$, $c \in DC^0$\!.
From (3) we have $\hat{s} = c'$ for some $c' \in DC^0$\! such that $d \dleq \lambda = \simrel(c,c')$.
Then (1) and (2) become  $t \approx_\Pi c$ and $s \approx_\Pi c'$\!, which allow us to build a proof of $EQ_\simrel \qchldc \cqat{(t \sim s)}{d}{\Pi}$ by means of a {\bf QDA} step using the clause $c \sim c' \qgets{\tp} \qat{\mbox{pay}_{\lambda}}{?}$.

\item $\hat{t} = c(\ntup{t}{n})$, $c \in DC^n$ with $n > 0$.
In this case, and because of (3), we can assume $\hat{s} = c'(\ntup{s}{n})$ for some $c' \in DC^n$ satisfying $d \dleq d_0 \eqdef \simrel(c,c')$ and $d \dleq d_i \eqdef \simrel(t_i,s_i)$ for $i=1 \dots n$.
Then $EQ_\simrel \qchldc \cqat{(t \sim s)}{d}{\Pi}$ with a proof tree rooted by a {\bf QDA} step of the form:
$$
\displaystyle\frac
  {~
    \begin{array}{l@{\hspace{1cm}}l}
      \cqat{(t == c(\ntup{t}{n}))}{\tp}{\Pi} & \cqat{\mbox{pay}_{d_0}}{d_0}{\Pi} \\
      \cqat{(s == c'(\ntup{s}{n}))}{\tp}{\Pi} & (~ \cqat{(t_i \sim s_i)}{d_i}{\Pi} ~)_{i = 1 \ldots n} \\
    \end{array}
   ~}
  {\cqat{(t \sim s)}{d}{\Pi}}
$$
using the $EQ_\simrel$ clause $C : c(\ntup{X}{n}) \sim c'(\ntup{Y}{n}) \qgets{\tp} \qat{\mbox{pay}_{d_0}}{?}, (\qat{(X_i \sim Y_i)}{?})_{i = 1 \ldots n}$ instantiated by the substitution $\theta = \{ X_1 \mapsto t_1,\ Y_1 \mapsto s_1,\ \dots,\ X_n \mapsto t_n,\ Y_n \mapsto s_n \}$.
Note that $C$ has attenuation factor $\tp$ and threshold values $?$ at the body.
Therefore, the side conditions of the {\bf QDA} step boil down to $d \dleq d_i ~ (1 \leq i \leq n)$ which are true by assumption.
It remains to prove that each premise of the {\bf QDA} step can be derived from $EQ_\simrel$ in QCHL($\qdom,\cdom$):
\begin{itemize}
\item $EQ_\simrel \qchldc \cqat{(t == c(\ntup{t}{n}))}{\tp}{\Pi}$ and $EQ_\simrel \qchldc \cqat{(s == c'(\ntup{s}{n}))}{\tp}{\Pi}$ are trivial consequences of $t \approx_\Pi c(\ntup{t}{n})$ and $s \approx_\Pi c'(\ntup{s}{n})$, respectively.
In both cases, the QCHL($\qdom$,$\cdom$) proofs consist of one single {\bf QEA} step.
\item $EQ_\simrel \qchldc  \cqat{\mbox{pay}_{d_0}}{d_0}{\Pi}$ can be proved using the clause $\mbox{pay}_{d_0} \!\qgets{d_0}\ \in EQ_\simrel$ in one single {\bf QDA} step.
\item $EQ_\simrel \qchldc \cqat{(t_i \sim s_i)}{d_i}{\Pi}$ for ${i = 1 \ldots n}$.
For each $i$, we observe that $t_i \approx_{d_i,\Pi} s_i$ holds because of $\hat{t}_i = t_i$, $\hat{s}_i = s_i$ which satisfy $t_i \approx_\Pi \hat{t}_i$, $s_i \approx_\Pi \hat{s}_i$ and $\hat{t}_i \approx_{d_i} \hat{s}_i$.
Since $\hat{t}_i = t_i$ is a subterm of $\hat{t} = c(\ntup{t}{n})$, the inductive hypothesis can be applied.
\end{itemize}
\end{itemize}

\smallskip\noindent [$\Longleftarrow$]
Let $T$ be a $\QCHL(\qdom,\cdom)$-proof tree witnessing $EQ_\simrel \qchldc \cqat{(t \sim s)}{d}{\Pi}$.
We prove $t \approx_{d,\Pi} s$ reasoning  by induction on the number $n = \Vert T \Vert$ of nodes in $T$ that represent conclusions of  \textbf{QDA} inference steps.
Note that all the program clauses belonging to $EQ_\simrel$ define either the binary predicate symbol `$\sim$' or the nullary predicates $\mbox{pay}_\lambda$.

\begin{description}
\item[\bf Basis ($n = 1$).] \hfill

In this case we have for the {\bf QDA} inference step that there can be used three possible $EQ_\simrel$ clauses:
\begin{enumerate}
\item
The program clause is  $X \sim Y \qgets{\tp} \qat{(X == Y)}{?}$.
Then the {\bf QDA} inference step must be of the form:
$$
\displaystyle\frac
  {~ \cqat{(t == t')}{d_1}{\Pi} \quad  \cqat{(s == s')}{d_2}{\Pi} \quad \cqat{(t' == s')}{e_1}{\Pi} ~}
  {\cqat{(t \sim s)}{d}{\Pi}}
$$
with $d \dleq  d_1 \sqcap d_2 \sqcap e_1$.
The proof of the three premises must use the {\bf QEA} inference rule.
Because of the conditions of this inference rule we have $t \approx_\Pi t'$,  $s \approx_\Pi s'$ and $ t' \approx_\Pi s'$. Therefore $t \approx_\Pi s$ is clear. Then $t \approx_{d,\Pi} s$ holds by taking $\hat{t} = \hat{s} = t$ because, trivially, $t \approx_\Pi \hat{t}$, $s \approx_\Pi \hat{s}$ and $\hat{t} \approx_d \hat{s}$.
\item
The program clause is $u \sim u' \qgets{\tp} \qat{\mbox{pay}_{\lambda}}{?}$ with $u, u' \in B_\cdom$ such that $\simrel(u,u') = \lambda \neq \bt$.
The {\bf QDA} inference step must be of the form:
$$
\displaystyle\frac
    {~ \cqat{(t == u)}{d_1}{\Pi} \quad  \cqat{(s == u')}{d_2}{\Pi} \quad \cqat{\mbox{pay}_{\lambda}}{e_1}{\Pi}~ }
    {\cqat{(t \sim s)}{d}{\Pi}}
$$
with $d \dleq d_1 \sqcap d_2 \sqcap  e_1$.
Due to the forms of the {\bf QEA} inference rule and the $EQ_\simrel$ clause $\mbox{pay}_{\lambda} \qgets{\lambda}$, we can assume without loss of generality that $d_1 = d_2 = \tp$ and $e_1 = \lambda$.
Therefore $d \dleq \lambda$.
Moreover, the QCHL($\qdom$,$\cdom$) proofs of the first two premises must use {\bf QEA} inferences.
Consequently we have $t \approx_\Pi u$ and $s \approx_\Pi u'$.
These facts and $u \approx_d u'$ imply $t \approx_{d,\Pi} s$.
\item
The program clause is $c \sim c' \qgets{\tp} \qat{\mbox{pay}_{\lambda}}{?}$ with $c, c' \in DC^0$ such that $\simrel(c,c') = \lambda \neq \bt$.
The {\bf QDA} inference step must be of the form:
$$
\displaystyle\frac
    {~ \cqat{(t == c)}{d_1}{\Pi} \quad  \cqat{(s == c')}{d_2}{\Pi} \quad \cqat{\mbox{pay}_{\lambda}}{e_1}{\Pi} ~}
    {\cqat{(t \sim s)}{d}{\Pi}}
$$
with $d \dleq d_1 \sqcap d_2 \sqcap  e_1$. Due to the forms of the {\bf QEA} inference rule and the $EQ_\simrel$ clause $\mbox{pay}_{\lambda} \qgets{\lambda}$, we can assume without loss of generality that $d_1 = d_2 = \tp$ and $e_1 = \lambda$.
Therefore $d \dleq \lambda$.
Moreover, the QCHL($\qdom$,$\cdom$) proofs of the first two premises must use {\bf QEA} inferences.
Consequently we have $t \approx_\Pi c$ and $s \approx_\Pi c'$.
These facts and $c \approx_d c'$ imply $t \approx_{d,\Pi} s$.
\end{enumerate}

\item[\bf Inductive step ($n > 1$).] \hfill

In this case $t$ and $s$ must be of the form $t = c(\ntup{t}{n})$ and $s=c'(\ntup{s}{n})$.
The $EQ_\simrel$ clause used in the {\bf QDA} inference step at the root must be of the form:
$$c(\ntup{X}{n}) \sim c'(\ntup{Y}{n}) \qgets{\tp} \qat{\mbox{pay}_{d_0}}{?},\ (\qat{(X_i \sim Y_i)}{?})_{i = 1 \ldots n}$$
with $\simrel(c,c') = d_0 \neq \bt$. The inference step at the root will be:
$$
\displaystyle\frac
  {~
    \begin{array}{l@{\hspace{1cm}}l}
      \cqat{(t == c(\ntup{t}{n}))}{d_1}{\Pi} & \cqat{pay_{d_0}}{e_0}{\Pi} \\
      \cqat{(s == c'(\ntup{s}{n}))}{d_2}{\Pi} & (~ \cqat{(t_i \sim s_i)}{e_i}{\Pi} ~)_{i = 1 \ldots n} \\
    \end{array}
  ~}
  {\cqat{(t \sim s)}{d}{\Pi}}
$$
with $d \dleq d_1 \sqcap d_2 \sqcap  \bigsqcap_{i = 0}^n e_i$.
Due to the forms of the $EQ_\simrel$ clause $\mbox{pay}_{d_0} \!\qgets{d_0}$ and the {\bf QEA} inference rule there is no loss of generality in assuming $d_1 = d_2 = \tp$ and $e_0 = d_0$, therefore we have $d \dleq d_0 \sqcap \bigsqcap_{i = 1}^n e_i$.
By the inductive hypothesis $t_i  \approx_{e_i, \Pi} s_i ~ (1 \le i \le n)$, i.e. there are constructor terms $\hat{t}_i$, $\hat{s}_i$ such that $t_i \approx_\Pi \hat{t_i}$,  $s_i \approx_\Pi \hat{s}_i$ and $\hat{t}_i \approx_{e_i} \hat{s}_i$ for $i = 1 \ldots n$.
Thus, we can build $\hat{t} = c(\hat{t}_1, \ldots, \hat{t}_n)$ and $\hat{s} = c'(\hat{s}_1, \ldots, \hat{s}_n)$ having $t \approx_{d,\Pi} s$ because:
\begin{itemize}
\item
$t \approx_\Pi \hat{t}$, i.e. $c(\ntup{t}{n}) \approx_\Pi c(\ntup{\hat{t}}{n})$, by decomposition since $t_i \approx_\Pi \hat{t}_i$.
\item
$s \approx_\Pi \hat{s}$, i.e. $c'(\ntup{s}{n}) \approx_\Pi c'(\ntup{\hat{s}}{n})$, again by decomposition since $s_i \approx_\Pi \hat{s}_i$.
\item
$\hat{t} \approx_d \hat{s}$, since
$d \dleq  d_0 \sqcap \bigsqcap_{i = 1}^n e_i \dleq \simrel(c,c') \sqcap \bigsqcap_{i = 1}^n  \simrel(\hat{t}_i,\hat{s}_i) = \simrel(\hat{t},\hat{s}) \enspace .\mathproofbox
$
\end{itemize}
\end{description}
\end{proof*}

We are now ready to define elim$_\simrel$ acting over programs and goals.

\begin{defn}\label{def:sqclptransform}
Assume a $\sqclp{\simrel}{\qdom}{\cdom}$-program $\Prog$ and a $\sqclp{\simrel}{\qdom}{\cdom}$-goal $G$ for $\Prog$ whose atoms are all relevant for $\Prog$.
Then we define:
\begin{enumerate}
\item
For each atom $A$, let $A_\sim$ be $t \sim s$ if $A : t == s$; otherwise let $A_\sim$ be  $A$.
\item
For each clause $C : (p(\ntup{t}{n}) \qgets{\alpha} \tup{B}) \in \Prog$
let $\hat{\mathcal{C}}_\simrel$ be the set of $\qclp{\qdom}{\cdom}$ clauses consisting of:
\begin{itemize}
\item[---]
The clause $\hat{C} : (\transRen{p}_{C}(\ntup{t}{n}) \qgets{\alpha} \ntup{B}{\sim})$, where $\transRen{p}_{C} \in DP^n$ is not affected by $\Prog$ (chosen in a different way for each $C$) and $\ntup{B}{\sim}$ is obtained from $\tup{B}$ by replacing each atom $A$ occurring in $\tup{B}$ by $A_\sim$.
\item[---]
A clause $p'(\ntup{X}{n}) \qgets{\tp} \qat{\mbox{pay}_\lambda}{?},\ (\qat{(X_i \sim t_i)}{?})_{i = 1 \ldots n},\ \qat{ \transRen{p}_C(\ntup{t}{n})}{?}$ for each $p' \in DP^n$ such that $\simrel(p,p') = \lambda \neq \bt$.
Here, $\ntup{X}{n}$ must be chosen as $n$ pairwise different variables not occurring in the clause $C$.
\end{itemize}
\item
$\elimS{\Prog}$ is the $\qclp{\qdom}{\cdom}$-program $EQ_\simrel \cup \hat{\Prog}_\simrel$ where $\hat{\Prog}_\simrel \eqdef \bigcup_{C \in \Prog} \hat{\mathcal{C}}_\simrel$.
\item $\elimS{G}$  is the $\qclp{\qdom}{\cdom}$-goal $G_\sim$ obtained from $G$ by replacing each atom $A$ occurring in $G$ by $A_\sim$.
\mathproofbox
\end{enumerate}
\end{defn}

The following example illustrates the transformation elim$_\simrel$.

\begin{exmp}[Running example: $\qclp{\U\!\otimes\!\W}{\,\rdom}$-program $\elimS{\Prog_r}$]
\label{exmp:elims-pr}
Consider the $\sqclp{\simrel_r}{\,\U{\otimes}\W}{\rdom}$-program $\Prog_r$ and the goal $G_r$ for $\Prog_r$ as presented in Example \ref{exmp:pr}.
The transformed $\qclp{\U{\otimes}\W}{\rdom}$-program $\elimS{\Prog_r}$ is as follows:
\begin{center}
\footnotesize\it
\renewcommand{\arraystretch}{1.6}
\begin{tabular}{rl}
\tiny $\hat{R}_1$ & \^{f}amous$_{R_1}$(sha) $\qgets{(0.9,1)}$ \\
\tiny $R_{1.1}$ & famous(X) $\gets$ pay$_\tp$, X$\sim$sha, \^{f}amous$_{R_1}$(sha) \\
\tiny $\hat{R}_2$ & \^{w}rote$_{R_2}$(sha, kle) $\qgets{(1,1)}$ \\
\tiny $R_{2.1}$ & wrote(X, Y) $\gets$ pay$_\tp$, X$\sim$sha, Y$\sim$kle, \^{w}rote$_{R_2}$(sha, kle) \\
\tiny $R_{2.2}$ & authored(X, Y) $\gets$ pay$_{(0.9,0)}$, X$\sim$sha, Y$\sim$kle, \^{w}rote$_{R_2}$(sha, kle) \\
\tiny $\hat{R}_3$ & \^{w}rote$_{R_3}$(sha, hamlet) $\qgets{(1,1)}$ \\
\tiny $R_{3.1}$ & wrote(X, Y) $\gets$ pay$_\tp$, X$\sim$sha, Y$\sim$hamlet, \^{w}rote$_{R_3}$(sha, hamlet) \\
\tiny $R_{3.2}$ & authored(X, Y) $\gets$ pay$_{(0.9,0)}$, X$\sim$sha, Y$\sim$hamlet, \^{w}rote$_{R_3}$(sha, hamlet) \\
\tiny $\hat{R}_4$ & \^{g}ood\_work$_{R_4}$(G) $\qgets{(0.75,3)}$ famous(A)\#(0.5,100), authored(A, G) \\
\tiny $R_{4.1}$ & good\_work(X) $\gets$ pay$_\tp$, X$\sim$G, \^{g}ood\_work$_{R_4}$(G) \\[4mm]
\end{tabular} \\
\begin{tabular}{l@{\hspace{1.2cm}}l}
\% Program clauses for $\sim$: & \% Program clauses for pay: \\
X\,$\sim$Y $\gets$ X==Y & pay$_\tp$ $\gets$ \\
kle\,$\sim$\,kli $\gets$ pay$_{(0.8,2)}$ & pay$_{(0.9,0)}$ $\qgets{(0.9,0)}$ \\
$[\ldots]$ & pay$_{(0.8,2)}$ $\qgets{(0.8,2)}$ \\[2mm]
\end{tabular}
\end{center}

Finally, the goal $\elimS{G_r}$ for $\elimS{\Prog_r}$ is as follows:
\begin{center}
\it good\_work(X)\#W $\sep$ \!W $\dgeq^?$ \!\!(0.5,10) \mathproofbox
\end{center}
\end{exmp}

The next theorem proves the semantic correctness of the program transformation.

\begin{thm}
\label{thm:SQCLP2QCLP:programs}
Consider a $\sqclp{\simrel}{\qdom}{\cdom}$-program $\Prog$\!, an atom $A$ relevant for $\Prog$\!, a qualification value $d \in \aqdom$ and a satisfiable finite set of $\cdom$-constraints $\Pi$.
Then, the following two statements are equivalent:
\begin{enumerate}
  \item $\Prog \sqchlrdc \cqat{A}{d}{\Pi}$
  \item $\elimS{\Prog} \qchldc \cqat{A_\sim}{d}{\Pi}$
\end{enumerate}
where $A_\sim$ is understood as in Definition \ref{def:sqclptransform}(1).
\end{thm}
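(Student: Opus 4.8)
The plan is to establish the two implications separately, each by induction on the number of \textbf{SQDA} (respectively \textbf{QDA}) inference steps in the witnessing proof tree, the induction ranging over all atoms relevant for $\Prog$ at once so that every body atom --- proved by a strictly smaller subtree --- is covered by the inductive hypothesis. In both directions I would perform a case analysis on the rule applied at the root, with the primitive and equational cases forming the base and the defined-atom case the inductive step. When $A$ is a primitive constraint $\kappa$, then $A_\sim = \kappa$ and the rules \textbf{SQPA} and \textbf{QPA} carry the identical side condition $\Pi \model{\cdom} \kappa$, so the derivations transfer verbatim. When $A$ is an equation $t == s$, then $A_\sim$ is $t \sim s$; here the key observation is that the predicate $\sim$ is defined only by the clauses of $EQ_\simrel$ (Definition~\ref{def:EQ}) and those clauses refer back only to $\sim$, $==$ and $\mbox{pay}_\lambda$, so any $\QCHL$ derivation of $t \sim s$ from $\elimS{\Prog}$ in fact uses $EQ_\simrel$ alone. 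Thus Lemma~\ref{lema:equiv} gives exactly $t \approx_{d,\Pi} s \Longleftrightarrow \elimS{\Prog} \qchldc \cqat{(t \sim s)}{d}{\Pi}$, the desired equivalence for this case.

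The substantial case is $A = p'(\ntup{t'}{n})$ with $p'$ affected by $\Prog$. For the implication from (1) to (2) I would inspect the root \textbf{SQDA} step: it rests on a clause $C : p(\ntup{t}{n}) \qgets{\alpha} \tup{B} \in \Prog$ and a substitution $\theta$ with $\simrel(p',p) = d_0 \neq \bt$, premises $\cqat{(t'_i == t_i\theta)}{d_i}{\Pi}$ and $\cqat{B_j\theta}{e_j}{\Pi}$, and side condition $d \dleq \bigsqcap_{i=0}^n d_i \sqcap \alpha \circ \bigsqcap_{j=1}^m e_j$. I rebuild this in $\QCHL$ from the bridge clause that the transformation (Definition~\ref{def:sqclptransform}) attaches to $C$ for $p'$, namely $p'(\ntup{X}{n}) \qgets{\tp} \qat{\mbox{pay}_{d_0}}{?},\ (\qat{(X_i \sim t_i)}{?})_{i=1\ldots n},\ \qat{\transRen{p}_C(\ntup{t}{n})}{?}$, instantiating $X_i \mapsto t'_i$ and letting the variables of $C$ follow $\theta$. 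Its premises are discharged thus: $\mbox{pay}_{d_0}$ to value $d_0$ via the fact $\mbox{pay}_{d_0} \qgets{d_0}$; each $t'_i \sim t_i\theta$ to value $d_i$ by Lemma~\ref{lema:equiv} applied to $t'_i \approx_{d_i,\Pi} t_i\theta$ (together with monotonicity, since $EQ_\simrel \subseteq \elimS{\Prog}$); and $\transRen{p}_C(\ntup{t\theta}{n})$ by a further \textbf{QDA} step on $\hat C : \transRen{p}_C(\ntup{t}{n}) \qgets{\alpha} \ntup{B}{\sim}$, whose body premises $\cqat{(B_j\theta)_\sim}{e_j}{\Pi}$ are supplied by the inductive hypothesis on the smaller subproofs of $\cqat{B_j\theta}{e_j}{\Pi}$. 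Since the bridge clause carries attenuation $\tp$, the value obtained for $p'(\ntup{t'}{n})$ is $d_0 \sqcap \bigsqcap_{i=1}^n d_i \sqcap \alpha \circ \bigsqcap_{j=1}^m e_j = \bigsqcap_{i=0}^n d_i \sqcap \alpha \circ \bigsqcap_{j=1}^m e_j \dgeq d$, as required.

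For the implication from (2) to (1) the same correspondence is run backwards, and this is where I expect the main difficulty. The point to secure is that the shape of any $\QCHL$ proof of $p'(\ntup{t'}{n})$ is forced: because $p'$ is affected while $\sim$, $\mbox{pay}_\lambda$ and every $\transRen{p}_C$ are unaffected (the last being fresh), the only clauses with head $p'$ are the bridge clauses, so the root \textbf{QDA} step must use one of them, for some $C$ and some $d_0 = \simrel(p,p') \neq \bt$; likewise the premise on $\transRen{p}_C$ can only be proved by a \textbf{QDA} step on $\hat C$, its unique defining clause. Reading off the premises, the $\mbox{pay}_{d_0}$ premise contributes at most $d_0$, the $\sim$-premises yield $t'_i \approx_{d_i,\Pi} t_i\theta$ by Lemma~\ref{lema:equiv} (again using that $\sim$-proofs live in $EQ_\simrel$), and the body premises of $\hat C$ give $\cqat{B_j\theta}{e_j}{\Pi}$ by the inductive hypothesis. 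Assembling these into an \textbf{SQDA} step with clause $C$, substitution $\theta$ and factor $\simrel(p',p) = d_0$ reproduces the side condition $d \dleq \bigsqcap_{i=0}^n d_i \sqcap \alpha \circ \bigsqcap_{j=1}^m e_j$ and hence $\Prog \sqchlrdc \cqat{A}{d}{\Pi}$. Throughout, the only delicate bookkeeping is tracking the qualification values attached to the auxiliary premises and recombining them with the attenuation axioms --- in particular $d \circ \tp = d$ and the distribution of $\circ$ over $\sqcap$ --- so that no value is lost in either translation.
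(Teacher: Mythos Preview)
Your approach is essentially the paper's own: induction on the proof tree in each direction, with the primitive and equational cases handled directly (the latter via Lemma~\ref{lema:equiv}) and the defined-atom case reconstructed through the two-layer structure ``bridge clause then $\hat C$''. There is one technical point in the $[2\Rightarrow 1]$ direction that you gloss over and that the paper handles explicitly. The \textbf{QDA} step at the root (on the bridge clause) and the nested \textbf{QDA} step on $\hat C$ are in general instantiated by \emph{different} substitutions, say $\theta$ and $\theta'$. The body premises of $\hat C$ therefore deliver $\cqat{B_j\theta'}{e_j}{\Pi}$, not $\cqat{B_j\theta}{e_j}{\Pi}$, so the \textbf{SQDA} step you assemble for $C$ must use $\theta'$. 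But then the head-matching premises you need are $t'_i \approx_{e_{1i},\Pi} t_i\theta'$, and obtaining these requires chaining three relations: $t'_i \approx_\Pi X_i\theta$ (from the bridge-clause head match via \textbf{QEA}), $X_i\theta \approx_{e_{1i},\Pi} t_i\theta$ (from Lemma~\ref{lema:equiv} applied to the $\sim$-premise), and $t_i\theta \approx_\Pi t_i\theta'$ (from the $\hat C$ head match). The paper invokes an external composition property of $\approx_{d,\Pi}$ with $\approx_\Pi$ (Lemma~2.7(1) of \cite{RR10TR}) to perform this chaining; your sketch collapses everything to a single $\theta$ and a single application of Lemma~\ref{lema:equiv}, which would not go through as written.
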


\begin{proof*}
We separately prove each implication.

\smallskip\noindent
[1. $\Rightarrow$ 2.] {\em (the transformation is complete).}
Assume that $T$ is a $\SQCHL(\simrel,\qdom,\cdom)$ proof tree witnessing $\Prog \sqchlrdc \cqat{A}{d}{\Pi}$.
We want to show the existence of a $\QCHL(\qdom,\cdom)$ proof tree $T'$
witnessing $\elimS{\Prog} \qchldc \cqat{A_\sim}{d}{\Pi}$. We reason by complete induction on $\Vert T \Vert$.
There are three possible cases according to the syntactic form of the atom $A$.
In each case we argue how to build the desired proof tree $T'$.

\noindent --- $A$ is a primitive atom $\kappa$.
In this case $A_\sim$ is also $\kappa$ and $T$ contains only one {\bf SQPA} inference node. Because of the inference rules {\bf SQPA} and {\bf QPA}, both $\Prog \sqchlrdc \cqat{\kappa}{d}{\Pi}$ and $\elimS{\Prog} \qchldc \cqat{\kappa}{d}{\Pi}$ are equivalent to $\Pi \model{\cdom} \kappa$, therefore $T'$ trivially contains just one {\bf QPA} inference node.

\noindent --- $A$ is an equation $t == s$.
In this case $A_\sim$ is $t \sim s$ and $T$ contains just one {\bf SQEA} inference node. We know $\Prog \sqchlrdc \cqat{(t == s)}{d}{\Pi}$ is equivalent to $t \approx_{d,\Pi} s$ because of the inference rule {\bf SQEA}. From this equivalence follows $EQ_\simrel \qchldc \cqat{(t \sim s)}{d}{\Pi}$ due to Lemma \ref{lema:equiv} and hence $\elimS{\Prog} \qchldc \cqat{(t \sim s)}{d}{\Pi}$ by construction of $\elimS{\Prog}$. In this case, $T'$ will be a proof tree rooted by a {\bf QDA} inference step.

\noindent --- $A$ is a defined atom $p'(\ntup{t'}{n})$ with $p' \in DP^n$\!.
In this case $A_\sim$ is $p'(\ntup{t'}{n})$ and the root inference of $T$ must be a {\bf SQDA} inference step of the form:
$$
\displaystyle\frac
{~
  (~ \cqat{(t'_i == t_i\theta)}{d_i}{\Pi} ~)_{i = 1 \ldots n}
  \quad
  (~ \cqat{B_j\theta}{e_j}{\Pi} ~)_{j = 1 \ldots m}
~}
{~ \cqat{p'(\ntup{t'}{n})}{d}{\Pi} ~}
~ (\clubsuit)
$$
with $C : (p(\ntup{t}{n}) \qgets{\alpha} \qat{B_1}{w_1}, \ldots, \qat{B_m}{w_m}) \in \Prog$, $\theta$ substitution, $\simrel(p',p) = d_0 \neq \bt$, $e_j \dgeq^? w_j ~ (1 \leq j \leq m)$, $d \dleq d_i ~ (0 \leq i \leq n)$ and $d \dleq \alpha \circ e_j ~ (1 \leq j \leq m)$---which means $d \dleq \alpha$ in the case  $m = 0$.
We can assume that the first $n$ premises at ($\clubsuit$) are proved in $\sqclp{\simrel}{\qdom}{\cdom}$ w.r.t. $\Prog$ by proof trees $T_{1i} ~ (1 \leq i \leq n)$ satisfying $\Vert T_{1i} \Vert < \Vert T \Vert ~ (1 \leq i \leq n)$, and the last $m$ premises at ($\clubsuit$) are proved in $\sqclp{\simrel}{\qdom}{\cdom}$ w.r.t. $\Prog$ by proof trees $T_{2j} ~ (1 \leq j \leq m)$ satisfying $\Vert T_{2j} \Vert < \Vert T \Vert ~ (1 \leq j \leq m)$.

By Definition \ref{def:sqclptransform}, we know that the transformed program $\elimS{\Prog}$ contains two clauses of the following form:
$$
\begin{array}{c@{\hspace{1mm}}cl}
\hat{C}&:&
\hat{p}_C(\ntup{t}{n}) \qgets{\alpha} \qat{B_\sim^1}{w_1},~ \ldots,~ \qat{B_\sim^m}{w_m} \\
\hat{C}_{p'}&:&
p'(\ntup{X}{n}) \qgets{\tp} \qat{\mbox{pay}_{d_0}}{?},~ (~ \qat{(X_i \sim t_i)}{?} ~)_{i = 1 \ldots n},~ \qat{\hat{p}_C(\ntup{t}{n})}{?} \\
\end{array}
$$
where $X_i ~ (1 \leq i \leq n)$ are fresh variables not occurring in $C$ and $B_\sim^j ~ (1 \leq j \leq m)$ is the result of replacing `$\sim$' for `==' if $B_j$ is equation; and $B_j$ itself otherwise.
Given that the $n$ variables $X_i$ do not occur in $C$, we can assume that $\sigma \eqdef \theta' \uplus \theta$ with $\theta' \eqdef \{ X_1 \mapsto t'_1,~ \ldots,~ X_n \mapsto t'_n\}$ is a well-defined substitution.
We claim that $\elimS{\Prog} \qchldc \cqat{A_\sim}{d}{\Pi}$ can be proved with
a proof tree $T'$ rooted by the {\bf QDA} inference step ($\spadesuit$.1), which uses the clause $\hat{C}_{p'}$ instantiated by $\sigma$ and having $d_{n+1} = d$.
$$
\displaystyle\frac
{~
  \begin{array}{l}
  (~ \cqat{(t'_i == X_i\sigma)}{\tp}{\Pi} ~)_{i = 1 \ldots n} \\
  \cqat{\mbox{pay}_{d_0}\sigma}{d_0}{\Pi} \\
  (~ \cqat{(X_i \sim t_i)\sigma}{d_i}{\Pi} ~)_{i = 1 \ldots n} \\
  \cqat{\hat{p}_C(\ntup{t}{n})\sigma}{d_{n+1}}{\Pi} \\
  \end{array}
~}
{~ \cqat{p'(\ntup{t'}{n})}{d}{\Pi} ~}
~ (\spadesuit.1)
\quad
\displaystyle\frac
{~
  \begin{array}{l}
  (~ \cqat{(t'_i == X_i\theta')}{\tp}{\Pi} ~)_{i = 1 \ldots n} \\
  \cqat{\mbox{pay}_{d_0}}{d_0}{\Pi} \\
  (~ \cqat{(X_i\theta' \sim t_i\theta)}{d_i}{\Pi} ~)_{i = 1 \ldots n} \\
  \cqat{\hat{p}_C(\ntup{t}{n}\theta)}{d_{n+1}}{\Pi} \\
  \end{array}
~}
{~ \cqat{p'(\ntup{t'}{n})}{d}{\Pi} ~}
~ (\spadesuit.2)
$$
By construction of $\sigma$, ($\spadesuit$.1) can be rewritten as ($\spadesuit$.2), and in order to build the rest of $T'$\!, we show that each premise of ($\spadesuit$.2) admits a proof in $\QCHL(\qdom,\cdom)$ w.r.t. the transformed program $\elimS{\Prog}$:
\begin{itemize}
\item
$\elimS{\Prog} \qchldc \cqat{(t'_i == X_i\theta')}{\tp}{\Pi}$ for $i = 1 \ldots n$.
Straightforward using a single {\bf QEA} inference step since $X_i\theta' = t'_i$ and $t'_i \approx_\Pi t'_i$ is trivially true.
\item $\elimS{\Prog} \qchldc \cqat{\mbox{pay}_{d_0}}{d_0}{\Pi}$.
Immediate using the clause $(\mbox{pay}_{d_0} \qgets{d_0}) \in \elimS{\Prog}$ with a single {\bf QDA} inference step.
\item $\elimS{\Prog} \qchldc \cqat{(X_i\theta' \sim t_i\theta)}{d_i}{\Pi}$ for $i = 1 \ldots n$.
From the first $n$ premises of ($\clubsuit$) we know $\Prog \sqchlrdc \cqat{(t'_i == t_i\theta)}{d_i}{\Pi}$ with a proof tree $T_{1i}$ satisfying $\Vert T_{1i} \Vert < \Vert T \Vert$ for $i = 1 \ldots n$. Therefore, for $i = 1 \ldots n$, $\elimS{\Prog} \qchldc \cqat{(t'_i \sim t_i\theta)}{d_i}{\Pi}$ with some QCHL($\qdom$,$\cdom$) proof tree $T'_{1i}$ by inductive hypothesis. Since $(X_i\theta' \sim t_i\theta) = (t'_i \sim t_i\theta)$ for $i = 1 \ldots n$, we are done.
\item $\elimS{\Prog} \qchldc \cqat{\hat{p}_C(\ntup{t}{n}\theta)}{d}{\Pi}$.
This is proved by a $\QCHL(\qdom,\cdom)$ proof tree with a {\bf QDA} inference step node at its root of the following form:
$$
\displaystyle\frac
{~
  (~ \cqat{(t_i\theta == t_i\theta)}{d_i}{\Pi} ~)_{i = 1 \ldots n}
  \quad
  (~ \cqat{B_\sim^j\theta}{e_j}{\Pi} ~)_{j = 1 \ldots m}
~}
{~ \cqat{\hat{p}_C(\ntup{t}{n}\theta)}{d}{\Pi} ~}
~ (\heartsuit)
$$
which uses the program clause $\hat{C}$ instantiated by the substitution $\theta$.
Once more, we have to check that the premises can be derived in $\QCHL(\qdom,\cdom)$ from the transformed program $\elimS{\Prog}$ and that the side conditions of ($\heartsuit$) are satisfied:
\begin{itemize}
  \item The first $n$ premises can be trivially proved using {\bf QEA} inference steps.
  \item The last $m$ premises can be proved w.r.t. $\elimS{\Prog}$ with some $\QCHL(\qdom,\cdom)$ proof trees $T'_{2j} ~ (1 \leq j \leq m)$ by the inductive hypothesis, since we have premises $(~ \cqat{B_j\theta}{e_j}{\Pi} ~)_{j = 1 \ldots m}$ at ($\clubsuit$) that can be proved in $\sqclp{\simrel}{\qdom}{\cdom}$ w.r.t. $\Prog$ with proof trees $T_{2j}$ of size $\Vert T_{2j} \Vert < \Vert T \Vert ~ (1 \leq j \leq m)$.
  \item The side conditions---namely: $e_j \dgeq^? w_j ~ (1 \leq j \leq m)$, $d \dleq d_i ~ (1 \leq i \leq n)$
   and $d \dleq \alpha \circ e_j ~ (1 \leq j \leq m)$---trivially hold because they are also satisfied by ($\clubsuit$).
\end{itemize}
\end{itemize}

Finally, we complete the construction of $T'$ by checking that ($\spadesuit$.2) satisfies
the side conditions of the inference rule {\bf QDA}:
\begin{itemize}
  \item All threshold values at the body of $\hat{C}_{p'}$ are `?'\!, therefore the first group of side conditions becomes $d_i \dgeq^?\,\,? ~ (0 \leq i \leq n+1)$, which are trivially true.
  \item The second side condition reduces to $d \dleq \tp$, which is also trivially true.
  \item The third, and last, side condition is $d \dleq \tp \circ d_i ~ (0 \leq i \leq n+1)$, or equivalently $d \dleq d_i ~ (0 \leq i \leq n+1)$. In fact, $d \dleq d_i ~ (0 \leq i \leq n)$ holds due to the side conditions in ($\clubsuit$), and $d \dleq d_{n+1}$ holds because $d_{n+1} = d$ by construction of ($\spadesuit$.1) and ($\spadesuit$.2).
\end{itemize}

\smallskip\noindent
[2. $\Rightarrow$ 1.] {\em (the transformation is sound).}
Assume that $T'$ is a $\QCHL(\qdom,\cdom)$ proof tree witnessing $\elimS{\Prog} \qchldc \cqat{A_\sim}{d}{\Pi}$.
We want to show the existence of a $\SQCHL(\simrel,\qdom,\cdom)$ proof tree $T$ witnessing $\Prog \sqchlrdc \cqat{A}{d}{\Pi}$. We reason by complete induction of $\Vert T' \Vert$.
There are three possible cases according to the syntactic form of the atom $A_\sim$.
In each case we argue how to build the desired proof tree $T$.

\noindent --- $A_\sim$ is a primitive atom $\kappa$.
In this case $A$ is also $\kappa$ and $T'$ contains only one {\bf QPA} inference node. Both $\elimS{\Prog} \qchldc \cqat{\kappa}{d}{\Pi}$ and $\Prog \sqchlrdc \cqat{\kappa}{d}{\Pi}$ are equivalent to $\Pi \model{\cdom} \kappa$ because of the inference rules {\bf QPA} and {\bf SQPA}, therefore $T$ trivially contains just one {\bf SQPA} inference node.

\noindent --- $A_\sim$ is of the form $t \sim s$.
In this case $A$ is $t == s$ and $T'$ is rooted by a {\bf QDA} inference step. From $\elimS{\Prog} \qchldc \cqat{(t \sim s)}{d}{\Pi}$ and by construction of $\elimS{\Prog}$ we have $EQ_\simrel \qchldc \cqat{(t \sim s)}{d}{\Pi}$. By Lemma \ref{lema:equiv} we get $t \approx_{d,\Pi} s$ and, by the definition of the {\bf SQEA} inference step, we can build $T$ as a proof tree with only one {\bf SQEA} inference node proving $\Prog \sqchlrdc  \cqat{(t == s)}{d}{\Pi}$.

\noindent --- $A_\sim$ is a defined atom $p'(\ntup{t}{n})$ with $p'\in DP^n$ and $p' \neq\,\,\sim$.
In this case $A = A_\sim$ and the step at the root of $T'$ must be a  {\bf QDA} inference step using a clause $C' \in \elimS{\Prog}$ with head predicate $p'$ and a substitution $\theta$. Because of Definition \ref{def:sqclptransform} and the fact that $p'$ is relevant for $\Prog$, there must be some clause $C : (p(\ntup{t}{n}) \qgets{\alpha} \tup{B}) \in \Prog$ such that $\simrel(p,p') = d_0 \neq \bt$, and $C'$ must be of the form:
$$
C' :
p'(\ntup{X}{n}) \qgets{\tp} \qat{\mbox{pay}_{d_0}}{?},~ (\qat{(X_i \sim t_i)}{?})_{i = 1 \ldots n},~ \qat{\hat{p}_C(\ntup{t}{n})}{?}
$$
where the variables $\ntup{X}{n}$ do not occur in $C$.
Thus the {\bf QDA} inference step at the root of $T'$ must be of the form:
$$
\displaystyle\frac
{~
  \begin{array}{l}
  (~ \cqat{(t'_i == X_i\theta)}{d_{1i}}{\Pi} ~)_{i = 1 \ldots n} \\
  \cqat{\mbox{pay}_{d_0}\theta}{e_{10}}{\Pi} \\
  (~ \cqat{(X_i \sim t_i)\theta}{e_{1i}}{\Pi} ~)_{i = 1 \ldots n} \\
  \cqat{\hat{p}_C(\ntup{t}{n})\theta}{e_{1(n+1)}}{\Pi} \\
  \end{array}
~}
{~ \cqat{p'(\ntup{t'}{n})}{d}{\Pi} ~}
~ (\spadesuit)
$$
and the proof of the last premise must use the only clause for $\hat{p}_C$ introduced in $\elimS{\Prog}$ according to Definition  \ref{def:sqclptransform}, i.e.:
$$
\hat{C}:
\hat{p}_C(\ntup{t}{n}) \qgets{\alpha} \qat{B_\sim^1}{w_1},~ \ldots,~ \qat{B_\sim^m}{w_m} \enspace .
$$
Therefore, the proof of this premise must be of the form:
$$
\displaystyle\frac
{~
  (~ \cqat{(t_i\theta == t_i\theta')}{d_{2i}}{\Pi} ~)_{i = 1 \ldots n}
  \quad
  (~ \cqat{B_\sim^j\theta'}{e_{2j}}{\Pi} ~)_{j = 1 \ldots m}
~}
{~ \cqat{\hat{p}_C(\ntup{t}{n})\theta}{e_{1(n+1)}}{\Pi} ~}
~ (\heartsuit)
$$
for some substitution $\theta'$ not affecting $\ntup{X}{n}$.
We can assume that the last $m$ premises in ($\heartsuit$) are proved in $\QCHL(\qdom,\cdom)$ w.r.t. $\elimS{\Prog}$ by proof trees $T'_{j}$ satisfying $\Vert T'_{j} \Vert < \Vert T' \Vert ~ (1 \leq j \leq m)$.
Then we use the substitution $\theta'$ and clause $C$ to build a $\SQCHL(\simrel,\qdom,\cdom)$ proof tree $T$
with a {\bf SQDA} inference step at the root of the form:
$$
\displaystyle\frac
{~
  (~ \cqat{(t'_i == t_i\theta')}{e_{1i}}{\Pi} ~)_{i = 1 \ldots n}
  \quad
  (~ \cqat{B_j\theta'}{e_{2j}}{\Pi} ~)_{j = 1 \ldots m}
~}
{~ \cqat{p'(\ntup{t'}{n})}{d}{\Pi} ~}
~ (\clubsuit)
$$
Next we check that the premises of this inference step admit proofs in $\SQCHL(\simrel,\qdom,$ $\cdom)$ and that
$(\clubsuit)$ satisfies the side conditions of a valid {\bf SQDA} inference step.
\begin{itemize}
\item
$\Prog \sqchlrdc \cqat{(t'_i == t_i\theta')}{e_{1i}}{\Pi}$ for $i = 1 \ldots n$.
\begin{itemize}
\item From the premises $(\cqat{(X_i \sim t_i)\theta}{e_{1i}}{\Pi})_{i = 1 \ldots n}$ of $(\spadesuit)$ and by construction of $\elimS{\Prog}$ we know $EQ_\simrel \qchldc \cqat{(X_i \sim t_i)\theta}{e_{1i}}{\Pi} ~ (1 \leq i \leq n)$.
Therefore by Lemma \ref{lema:equiv} we have $X_i\theta \approx_{e_{1i},\Pi} t_i\theta$ for $i=1 \dots n$.
\item Consider now the premises $(\cqat{(t'_i == X_i\theta)}{d_{1i}}{\Pi})_{i = 1 \ldots n}$ of $(\spadesuit)$.
Their proofs  must rely on {\bf QEA} inference steps, and therefore $t'_i \approx_{\Pi} X_i\theta$ holds for $i=1 \dots n$.
\item Analogously, from the proofs of the premises $(\cqat{(t_i\theta == t_i\theta')}{d_{2i}}{\Pi})_{i = 1 \ldots n}$ we have $t_i\theta \approx_{\Pi} t_i\theta'$ (or equivalently $t_i\theta' \approx_{\Pi} t_i\theta$) for $i = 1 \ldots n$.
\end{itemize}
From the previous points we have $X_i\theta \approx_{e_{1i},\Pi} t_i\theta$, $t'_i \approx_{\Pi} X_i\theta$ and $t_i\theta' \approx_{\Pi} t_i\theta$, which by Lemma 2.7(1) of \cite{RR10TR} imply $ t'_i \approx_{e_{1i},\Pi} t_i\theta' ~ (1 \leq i \leq n)$.
Therefore the premises $(\cqat{(t'_i == t_i\theta')}{e_{1i}}{\Pi})_{i = 1 \ldots n}$ can be proven in $\SQCHL(\simrel,\qdom,\cdom)$ using a {\bf SQEA} inference step.
\item
$\Prog \sqchlrdc \cqat{B_j\theta'}{e_{2j}}{\Pi}$ for $j = 1 \ldots m$.
We know $\elimS{\Prog} \qchldc \cqat{B_\sim^j\theta'}{e_{2j}}{\Pi}$ with a proof tree $T'_j$ satisfying $\Vert T'_j \Vert < \Vert T' \Vert ~ (1 \leq j \leq m)$ because of ($\heartsuit$).
Therefore we have, by inductive hypothesis, $\Prog \sqchlrdc \cqat{B_j\theta'}{e_{2j}}{\Pi}$ for some $\SQCHL(\simrel,\qdom,\cdom)$ proof tree $T_j ~ (1 \leq j \leq m)$.
\item
$\simrel(p,p') = d_0 \neq \bt$.
As seen above.
\item
$e_{2j} \dgeq^? w_j$ for $j = 1 \ldots m$.
This is a side condition of the {\bf QDA} step in $(\heartsuit)$.
\item
$d \dleq e_{1i}$ for $i = 1 \ldots n$.
Straightforward from the side conditions of $(\spadesuit)$, which include $d \dleq \tp \circ e_{1i}$ for $(0 \le i \le n+1)$.
\item
$d \dleq \alpha \circ e_{2j}$ for $j = 1 \ldots m$.
This follows from the side conditions of $(\spadesuit)$ and $(\heartsuit)$, since we have $d \dleq \tp \circ e_{1i}$ for $i = 0 \ldots n+1$ (in particular $d \dleq e_{1(n+1)}$) and $e_{1(n+1)} \dleq \alpha \circ e_{2j}$ for $j = 1 \ldots m$. \mathproofbox
\end{itemize}
\end{proof*}

Finally, the next theorem extends the previous result to goals.

\begin{thm}
\label{thm:SQCLP2QCLP:goals}
Let $G$ be a goal for a $\sqclp{\simrel}{\qdom}{\cdom}$-program $\Prog$ whose atoms are all relevant for $\Prog$.
Assume $\Prog' = \elimS{\Prog}$ and $G' = \elimS{G}$.
Then, $\Sol{\Prog}{G} = \Sol{\Prog'}{G'}$.
\end{thm}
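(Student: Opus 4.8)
The plan is to unfold both sides of the claimed equality and reduce it, index by index, to the program-level equivalence already established in Theorem~\ref{thm:SQCLP2QCLP:programs}. First I would observe that the notion of \emph{possible answer} $\langle \sigma, \mu, \Pi \rangle$ is literally the same for $\sqclp{\simrel}{\qdom}{\cdom}$ and $\qclp{\qdom}{\cdom}$ (Definition~\ref{dfn:qclp-goalsol}(1)), and that the transformation $\elimS{\cdot}$ leaves the qualification variables $W_i$ and the threshold values $\beta_i$ of $G$ untouched, replacing only each atom $A_i$ by $(A_i)_\sim$. Consequently, comparing Definition~\ref{dfn:goalsol}(2) with Definition~\ref{dfn:qclp-goalsol}(2), a fixed triple $\langle \sigma, \mu, \Pi \rangle$ lies in $\Sol{\Prog}{G}$ iff it lies in $\Sol{\Prog'}{G'}$ \emph{provided} that, for every $i = 1 \ldots m$, the two derivability conditions $\Prog \sqchlrdc \cqat{A_i\sigma}{W_i\mu}{\Pi}$ and $\Prog' \qchldc \cqat{(A_i)_\sim\sigma}{W_i\mu}{\Pi}$ are equivalent; the threshold conditions $W_i\mu = d_i \dgeq^? \beta_i$ and the requirement that $\langle \sigma, \mu, \Pi \rangle$ be a possible answer coincide verbatim on both sides. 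Thus the whole statement reduces to proving this per-index equivalence.

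Two routine observations then bridge the gap to Theorem~\ref{thm:SQCLP2QCLP:programs}. The first is that the atom-level transformation commutes with substitution, i.e. $(A_i\sigma)_\sim = (A_i)_\sim\sigma$: if $A_i$ is an equation $t == s$ both sides are $t\sigma \sim s\sigma$, and if $A_i$ is not an equation both sides are simply $A_i\sigma$ (applying $\sigma$ cannot turn a non-equation into an equation). The second is that relevance is preserved under substitution: whether an atom is an equation, a primitive atom, or a defined atom with an affected predicate symbol depends only on its outermost shape, which $\sigma$ does not alter, so $A_i$ relevant for $\Prog$ entails $A_i\sigma$ relevant for $\Prog$. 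I would also note that, for any possible answer, $W_i \in \domset{\mu}$ and hence $W_i\mu \in \aqdom$, so the qualification value appearing in the qc-atoms is a legitimate instance for Theorem~\ref{thm:SQCLP2QCLP:programs}.

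Finally I would apply Theorem~\ref{thm:SQCLP2QCLP:programs} once for each $i$: taking the relevant atom $A_i\sigma$, the qualification value $W_i\mu \in \aqdom$, and the satisfiable finite set $\Pi$, it yields $\Prog \sqchlrdc \cqat{A_i\sigma}{W_i\mu}{\Pi}$ iff $\Prog' \qchldc \cqat{(A_i\sigma)_\sim}{W_i\mu}{\Pi}$, and rewriting $(A_i\sigma)_\sim$ as $(A_i)_\sim\sigma$ via the commutation property turns the right-hand side into exactly the condition required for $G'$. Conjoining these equivalences over $i = 1 \ldots m$ and combining with the shared possible-answer and threshold requirements gives $\Sol{\Prog}{G} = \Sol{\Prog'}{G'}$. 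There is no genuinely hard step here: all the combinatorial work has been discharged in Theorem~\ref{thm:SQCLP2QCLP:programs}, and the only points demanding care are the commutation of $(\cdot)_\sim$ with substitution and the invariance of relevance and of the $\aqdom$-membership of $W_i\mu$, all of which are immediate from the definitions.
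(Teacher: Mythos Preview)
Your proposal is correct and follows essentially the same approach as the paper: unfold the solution sets on both sides, note that the threshold and possible-answer conditions coincide, and reduce the per-atom derivability equivalence to Theorem~\ref{thm:SQCLP2QCLP:programs}. You are in fact slightly more careful than the paper, which silently uses the commutation $(A_i\sigma)_\sim = (A_i)_\sim\sigma$ and the preservation of relevance under substitution without spelling them out.
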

\begin{proof*}
According to the definition of goals in Section \ref{sec:sqclp}, and Definition \ref{def:sqclptransform}, $G$ and $G'$ must be of the form $(\qat{A_i}{W_i}, W_i \,{\dgeq}^? \beta_i)_{i = 1 \ldots m}$ and $(\qat{A_\sim^i}{W_i}, W_i \,{\dgeq}^? \beta_i)_{i = 1 \ldots m}$, respectively.
By Definitions \ref{dfn:goalsol} and \ref{dfn:qclp-goalsol}, both $\Sol{\Prog}{G}$ and $\Sol{\Prog'}{G'}$ are sets of triples $\langle \sigma, \mu, \Pi \rangle$ where $\sigma$ is a $\cdom$-substitution, $\mu : \warset{G} \to \aqdomd{\qdom}$ (note that $\warset{G} = \warset{G'}$) and $\Pi$ is a satisfiable finite set of $\cdom$-constraints.
Moreover:
\begin{enumerate}
\item
$\langle \sigma, \mu, \Pi \rangle \in \Sol{\Prog}{G}$ iff $W_i\mu = d_i \dgeq^? \!\beta_i$ and $\Prog \sqchlrdc \cqat{A_i\sigma}{W_i\mu}{\Pi}$ $(1 \leq i \leq m)$.
\item
$\langle \sigma, \mu, \Pi \rangle \in \Sol{\Prog'}{G'}$ iff $W_i\mu = d_i \dgeq^? \!\beta_i$ and $\Prog' \qchldc \cqat{A_\sim^i\sigma}{W_i\mu}{\Pi}$ $(1 \leq i \leq m)$.
\end{enumerate}
Because of Theorem \ref{thm:SQCLP2QCLP:programs}, conditions (1) and (2) are equivalent. \mathproofbox
\end{proof*}
\subsection{Transforming QCLP into CLP}
\label{sec:implemen:QCLP2CLP}

The results presented in this subsection are dependant on the assumption that the qualification domain $\qdom$ is existentially expressible in the constraint domain $\cdom$ via an injective mapping $\imath : \aqdomd{\qdom} \to C_\cdom$ and two existential $\cdom$-constraints of the following form:
\begin{itemize}
\item[] $\qval{X} = \exists U_1\ldots\exists U_k(B_1 \land \ldots \land B_m)$
\item[] $\qbound{X,Y,Z} = \exists V_1\ldots\exists V_l(C_1 \land \ldots \land C_q)$
\end{itemize}

The intuition behind $\qval{X}$ and $\qbound{X,Y,Z}$ has been explained in Definition \ref{dfn:expressible}. Roughly, they are intended to represent qualification values from $\qdom$ and the behaviour of $\qdom$'s attenuation operator $\circ$ by means of $\cdom$-constraints. Moreover, the assumption that $\qval{X}$ and $\qbound{X,Y,Z}$ have the existential form displayed above allows to build CLP clauses for two predicate symbols ${\it qVal} \in DP^1$ and ${\it qBound} \in DP^3$ which will capture the behaviour of the two corresponding constraints in the sense of Lemma \ref{lema:dec}. More precisely, we consider  the $\clp{\cdom}$-program $E_\qdom$ consisting of the following two clauses:
\begin{itemize}
\item[] ${\it qVal}(X) \gets B_1,\ \ldots,\ B_m$
\item[] ${\it qBound}(X,Y,Z) \gets C_1,\ \ldots,\ C_q$
\end{itemize}

The next example shows the CLP clauses in $E_\qdom$ for $\cdom = \rdom$ and three different choices of a qualification domain $\qdom$ that is existentially expressible in $\rdom$, namely: $\U$, $\W$ and $\U{\otimes}\W$.
In each case, the CLP clauses in $E_\qdom$ are obtained straightforwardly from the $\rdom$ constraints $\qval{X}$ and $\qbound{X,Y,Z}$ shown in Example \ref{exmp:qdom-representation}.

\begin{exmp}
\label{exmp:eqdom-clauses}
\begin{enumerate}
\item[1.] $E_\U$ consists of the following two clauses:
\end{enumerate}
\begin{center}
\footnotesize\it
\renewcommand{\arraystretch}{1.5}
\begin{tabular}{@{\hspace{1cm}}p{11cm}}
${\it qVal}(X) \gets {\it cp}_<(0,X),\ {\it cp}_\leq(X,1)$\\
${\it qBound}(X,Y,Z) \gets {\it op}_\times(Y,Z,X'),\ {\it cp}_\leq(X,X')$\\
\end{tabular}
\end{center}
\begin{enumerate}
\item[2.] $E_\W$ consists of the following two clauses:
\end{enumerate}
\begin{center}
\footnotesize\it
\renewcommand{\arraystretch}{1.5}
\begin{tabular}{@{\hspace{1cm}}p{11cm}}
${\it qVal}(X) \gets {\it cp}_\ge(X,0)$\\
${\it qBound}(X,Y,Z) \gets {\it op}_+(Y,Z,X'),\ {\it cp}_\ge(X,X')$\\
\end{tabular}
\end{center}
\begin{enumerate}
\item[3.] $E_{\U{\otimes}\W}$ consists of the following two clauses:
\end{enumerate}
\begin{center}
\footnotesize\it
\renewcommand{\arraystretch}{1.5}
\begin{tabular}{@{\hspace{1cm}}p{11cm}}
${\it qVal}(X) \gets X=={\sf pair}(X_1,X_2),\ {\it cp}_<(0,X_1),\ {\it cp}_\le(X_1,1),\ {\it cp}_\ge(X_2,0)$\\
${\it qBound}(X,Y,Z) \gets X=={\sf pair}(X_1,X_2),\ Y=={\sf pair}(Y_1,Y_2),\ Z=={\sf pair}(Z_1,Z_2),$\\
$\qquad
{\it op}_\times(Y_1,Z_1,X'_1),\ {\it cp}_\le(X_1,X'_1),\
{\it op}_+(Y_2,Z_2,X'_2),\ {\it cp}_\ge(X_2,X'_2) \mathproofbox
$\\
\end{tabular}
\end{center}
\end{exmp}

In general, the CLP clauses in $E_\qdom$ along with other techniques explained in the rest of this subsection will be used to present semantically correct transformations from $\qclp{\qdom}{\cdom}$ into $\clp{\cdom}$, working both for programs and goals. All our results will work under the assumption that ${\it qVal} \in DP^1$ and ${\it qBound} \in DP^3$ are chosen as fresh predicate symbols not occurring in the $\qclp{\qdom}{\cdom}$ programs and goals to be transformed. The next technical lemma ensures that the predicates {\it qVal} and {\it qBound} correctly represent the behaviour of the constraints $\qval{X}$ and $\qbound{X,Y,Z}$.



\begin{lem}
\label{lema:expr}
For any satisfiable finite set $\Pi$ of $\cdom$-constraints one has:
\begin{enumerate}
\item
For any ground term $t \in C_\cdom$:
$$t \in \mbox{ran}(\imath) \iff \qval{t} \mbox{ true in } \cdom \iff E_\qdom \chlc \cat{{\it qVal}(t)}{\Pi}$$
\item
For any ground terms $r = \imath(x)$, $s = \imath(y)$, $t = \imath(z)$ with $x,y,z \in \aqdomd{\qdom}$:
$$x \dleq y \circ z \iff \qbound{r,s,t} \mbox{ true in } \cdom \iff E_\qdom \chlc \cat{{\it qBound}(r,s,t)}{\Pi}$$
\end{enumerate}
The two items above are also valid if $E_\qdom$ is replaced by any $\clp{\cdom}$-program including the two clauses in $E_\qdom$ and having no additional occurrences of {\it qVal} and {\it qBound} at the head of clauses.
\end{lem}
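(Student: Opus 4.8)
The plan is to prove each of the two chains of equivalences by splitting it at the middle statement ``true in $\cdom$'', so that the left equivalence is purely semantic and rests on Definition~\ref{dfn:expressible}, while the right equivalence is operational and rests on Lemma~\ref{lema:dec}. Since every term appearing in the statement is ground --- $t \in C_\cdom$ in item~1, and $r,s,t$ are the images under $\imath$ of $x,y,z \in \aqdomd{\qdom}$ in item~2 --- both directions of Lemma~\ref{lema:dec} will be available, which is exactly what makes the operational equivalence go through.

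For the left equivalence of item~1, I would observe that $\qval{t}$ is the closed constraint obtained from $\qval{X}$ by instantiating its single free variable with the ground term $t$. By Definition~\ref{dfn:expressible}(1), $\Solc{\qval{X}}$ consists precisely of those $\eta \in \mbox{Val}_\cdom$ with $\eta(X) \in \mbox{ran}(\imath)$; hence $\qval{t}$ is true in $\cdom$ if and only if $t \in \mbox{ran}(\imath)$. The left equivalence of item~2 is analogous: by Definition~\ref{dfn:expressible}(2), any valuation sending $X,Y,Z$ to $\imath(x),\imath(y),\imath(z)$ lies in $\Solc{\qbound{X,Y,Z}}$ exactly when $x \dleq y \circ z$; since $r = \imath(x)$, $s = \imath(y)$ and $t = \imath(z)$, the ground constraint $\qbound{r,s,t}$ is true in $\cdom$ if and only if $x \dleq y \circ z$.

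For the right equivalences, I would apply Lemma~\ref{lema:dec} to the existential constraint $\qval{X} = \exists U_1\ldots\exists U_k(B_1 \land \ldots \land B_m)$ together with the clause ${\it qVal}(X) \gets B_1,\ldots,B_m$ of $E_\qdom$, and symmetrically to $\qbound{X,Y,Z}$ with the clause for ${\it qBound}$. Item~(1) of Lemma~\ref{lema:dec} gives $E_\qdom \chlc \cat{{\it qVal}(t)}{\Pi} \Rightarrow \Pi \model{\cdom} \qval{t}$, while item~(2), applicable because $t$ is ground, gives the converse; and for ground $t$ the entailment $\Pi \model{\cdom} \qval{t}$ is, as noted after Lemma~\ref{lema:dec}, nothing but the truth of $\qval{t}$ in $\cdom$. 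This closes the right equivalence of item~1, and the same argument applied to the ground tuple $(r,s,t)$ closes that of item~2. Finally, the concluding sentence of the lemma is immediate: Lemma~\ref{lema:dec} is already stated for an arbitrary $\clp{\cdom}$-program whose defined head predicate occurs in no other clause, so the whole argument applies verbatim to any program that contains the two clauses of $E_\qdom$ and has no further clauses headed by ${\it qVal}$ or ${\it qBound}$.

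The proof is essentially an assembly of two prior results, so there is no deep obstacle; the only point requiring care is the appeal to groundness. As the counterexample following Lemma~\ref{lema:dec} shows, the direction $\Pi \model{\cdom} \pi(\tup{t}) \Rightarrow \Prog \chlc (\cat{p(\tup{t})}{\Pi})$ can fail for non-ground $\tup{t}$; here it is available precisely because $t$ --- respectively $r,s,t$ --- is ground, and this is what reduces constraint entailment to plain truth in $\cdom$ and lets both equivalences be read off cleanly.
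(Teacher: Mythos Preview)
Your proof is correct and takes essentially the same approach as the paper, which simply states that the lemma is an immediate consequence of Lemma~\ref{lema:dec} and Definition~\ref{dfn:expressible}. You have just spelled out in detail how these two results combine, including the crucial role of groundness in enabling the converse direction of Lemma~\ref{lema:dec}.
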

\begin{proof}
Immediate consequence of Lemma \ref{lema:dec} and Definition \ref{dfn:expressible}.
\end{proof}

\begin{figure}[ht]
\figrule
\centering
\begin{tabular}{@{\hspace{4mm}}l@{\hspace{1mm}}l}
\multicolumn{2}{l}{\textbf{Transforming Atoms}} \\&\\
\bf TEA & $\transform{({t == s})} \!= (t == s, ~\imath(\tp))$. \\
&\\
\bf TPA & $\transform{({\kappa})} \!= (\kappa, ~\imath(\tp))$ with $\kappa$ primitive atom. \\
&\\
\bf TDA & $\transform{({p(\ntup{t}{n})})} \!= (p'(\ntup{t}{n},W), ~W)$ with $p \in DP^n$ and $W$\! a fresh CLP variable. \\
&\\
\multicolumn{2}{l}{\textbf{Transforming qc-Atoms}\vspace{2mm}} \\
\bf TQCA &  $\displaystyle\frac
  {\transform{A} = (A',w) }
  {\quad \transform{(\qat{A}{d} \Leftarrow \Pi)} = (A' \Leftarrow \Pi, ~\{\qval{w},\ \qbound{\imath(d),\imath(\tp),w}\}) \quad}$ \\
&\\
\multicolumn{2}{l}{\textbf{Transforming Program Clauses}\vspace{2mm}} \\
\bf TPC & $\displaystyle\frac
  { (~ \transform{B_j} = (B_j',  w'_j) ~)_{j=1 \dots m}}
  {\quad
        \transform{C} = p'(\ntup{t}{n},W) ~\gets~ qV\!al(W),\ \left(
          \begin{array}{l}
            qV\!al(w_j'),\ \encode{w'_j \dgeq^? \!\imath(w_j)}, \\
            qBound(W, \imath(\alpha), w'_j),\ B'_j
          \end{array}
        \right)_{j = 1 \ldots m}
  \quad}$ \\
&\\
\multicolumn{2}{l}{$\qquad$ where $C : p(\ntup{t}{n}) \qgets{\alpha} \qat{B_1}{w_1}, \ldots, \qat{B_m}{w_m}$, $W$\! is a fresh CLP variable and} \\
\multicolumn{2}{l}{$\qquad$ $\encode{w'_j \dgeq^? \imath(w_j)}$ is omitted if $w_j =\ ?$, otherwise abbreviates $qBound(\imath(w_j),\imath(\tp),w'_j)$.}\\
&\\
\multicolumn{2}{l}{\textbf{Transforming Goals}\vspace{2mm}} \\
\bf TG & $\displaystyle\frac
  {(~ \transform{B_j} = (B_j',  w'_j) ~)_{j=1 \dots m}}
  {\quad
          \elimD{G} = \left(
            \begin{array}{l}
              qV\!al(W_j),\ \encode{W_j\dgeq^? \imath(\beta_j)}, \\
              qV\!al(w'_j),\ qBound(W_j, \imath(\tp), w'_j),\  B_j'
            \end{array}
          \right)_{j = 1 \ldots m}
  \quad}$ \\
&\\
\multicolumn{2}{l}{$\qquad$ where $G : (\qat{B_j}{W_j}, W_j \dgeq^? \beta_j)_{j=1 \dots m}$ and $\encode{W_j\dgeq^? \imath(\beta_j)}$ as in \textbf{TPC} above.} \\
\end{tabular}
\caption{Transformation rules}
\label{fig:transformation}
\figrule
\end{figure}

Now we are ready to define the  transformations from $\qclp{\qdom}{\cdom}$ into $\clp{\cdom}$.
\begin{defn}\label{def:qclptransform}
Assume that $\qdom$ is existentially expressible in $\cdom$, and let $\qval{X}$, $\qbound{X,Y,Z}$ and  $E_\qdom$ be as explained above.
Assume also a $\qclp{\qdom}{\cdom}$-program $\Prog$ and a $\qclp{\qdom}{\cdom}$-goal $G$ for $\Prog$ without occurrences of the defined predicate symbols $qV\!al$ and $qBound$.
Then:
\begin{enumerate}
\item
$\Prog$ is transformed into the $\clp{\cdom}$-program  $\elimD{\Prog}$ consisting of the two clauses in $E_\qdom$ and the
transformed $\transform{C}\!$ of each clause $C \in \Prog$, built as specified in Figure \ref{fig:transformation}. The transformation rules of this figure translate each $n$-ary predicate symbol $p \in DP^n$ into a different $(n+1)$-ary predicate symbol  $p' \in DP^{n+1}$\!.
\item
$G$ is transformed into the $\clp{\cdom}$-goal $\elimD{G}$ built as specified in Figure \ref{fig:transformation}.
Note that the qualification variables $\ntup{W}{\!n}$ occurring in $G$ become normal CLP variables in the transformed goal.
\mathproofbox
\end{enumerate}
\end{defn}

\label{fig:transformation:explanation}
The first three rules in Figure \ref{fig:transformation} are used for transforming atoms. For convenience, the transformation of an atom produces a pair where the first value is the transformed atom and the second one is either a new variable or the representation of $\tp$.
In the first two cases, namely {\bf TEA} and {\bf TPA}, the transformation behaves as the identity and no new variables are introduced.
The third case, namely {\bf TDA}, corresponds to the transformation of a defined atom. In this case, a new CLP variable $W$---intended to represent the qualification value associated to the atom---is added as its last argument.
The rule {\bf TQCA} transforms qc-atoms of the form $\qat{A}{d} \Leftarrow \Pi$ by means of the transformation of $A$ using one of the three aforementioned transformation rules. This transformation returns a pair $(A',w)$ in which, as shown above, $w$ can be either a new variable or the representation of $\tp$. Since $w$ can be a new variable $W$,  the constraint $\qval{w}$ is introduced to ensure that it represents a qualification value.
Finally, the constraint $\qbound{\imath(d),\imath(\tp),w}$ encodes ``$d \dleq \tp \circ w$,'' or equivalently ``$d \dgeq w.$''
The rule {\bf TPC} is employed for transforming program clauses $C : p(\ntup{t}{n}) \qgets{\alpha} \qat{B_1}{w_1}, \ldots, \qat{B_m}{w_m}$ where each $w_i$ is either a qualification value or $?$ indicating that proving the atom with any qualification value different from $\bt$ is acceptable. The rule introduces a new variable $W$ together with a constraint ${\it qVal}(W)$. The variable represents the qualification value associated to the computation of user defined atoms involving $p$ (renamed as $p'$ in the transformed program). The premises $(\transform{B_j} = (B_j',  w'_j))_{j=1 \dots m}$ transform the atoms in the body of the clause using in each case either {\bf TEA}, {\bf TPA} or {\bf TDA}. Therefore, each $w'_j$ obtained in this way represents a qualification value encoded as a constraint value.
Moreover, the qualification value encoded by $w'_j$ must be  greater or equal than the corresponding qualification value
$w_j$ that occurs in the program clause.
These two requirements are represented as ${\it qVal}(w_j'),\ \encode{w'_j \dgeq^? \!\imath(w_j)}$ in the transformed clause. The predicate call ${\it qBound}(W, \imath(\alpha), w'_j)$ ensures that the value in $W$ must be less than or equal to ``$\alpha \circ w'_j$'' for every $j$. For each $j=1 \dots m$ all the atoms associated to the transformation of $B_j$ precede the transformed atom $B'_j$. In a {\tt Prolog}-based implementation, this helps to prune the search space as soon as possible during the computations.
The ideas behind rule {\bf TG} are similar. A goal $G : (\qat{B_j}{W_j}, W_j \dgeq^? \beta_j)_{j=1 \dots m}$ is transformed by introducing
atoms in charge of checking that: each $W_j$ is a valid qualification value; each $W_j$ is indeed less than or equal to the representation of $\beta_j$ in CLP; each value $w_j$---obtained during the transformation of the atoms $B_j$---corresponds to an actual qualification value; and finally, that each $W_j$ is satisfactory---i.e. less or equal to---w.r.t. its corresponding $w_j$ before effectively introducing the transformed atoms $B'_j$. The following example illustrates the transformation elim$_\qdom$.
\begin{exmp}[Running example: $\clp{\rdom}$-program $\elimD{\elimS{\Prog_r}}$]
\label{exmp:elimd-elims-pr}

Consider the $\qclp{\U{\otimes}\W}{\rdom}$-program $\elimS{\Prog_r}$ and the goal $\elimS{G_r}$ for the same program as presented in Example \ref{exmp:elims-pr}. The transformed $\clp{\rdom}$-program $\elimD{\elimS{\Prog_r}}$ is as follows:
\begin{center}
\footnotesize\it
\renewcommand{\arraystretch}{1.5}
\begin{tabular}{rl}
\tiny $\hat{R}_1$ & \^{f}amous$_{R_1}$(sha, W) $\gets$ qVal(W), qBound(W, $\tp$, (0.9,1)) \\
\tiny $R_{1.1}$ & famous(X, W) $\gets$ qVal(W), qVal(W$_1$), qBound(W, $\tp$, \!W$_1$), pay$_\tp$(W$_1$), \\
& $\quad$ qVal(W$_2$), qBound(W, $\tp$, \!W$_2$), $\sim$(X, sha, W$_2$), \\
& $\quad$ qVal(W$_3$), qBound(W, $\tp$, \!W$_3$), \^{f}amous$_{R_1}$(sha, W$_3$) \\
\end{tabular}
\end{center}
\begin{center}
\footnotesize\it
\renewcommand{\arraystretch}{1.5}
\begin{tabular}{rl}
\tiny $\hat{R}_2$ & \^{w}rote$_{R_2}$(sha, kle, W) $\gets$ qVal(W), qBound(W, $\tp$, (1,1)) \\
\tiny $R_{2.1}$ & wrote(X, Y, W) $\gets$ qVal(W), qVal(W$_1$), qBound(W, $\tp$, \!W$_1$), pay$_\tp$(W$_1$), \\
& $\quad$ qVal(W$_2$), qBound(W, $\tp$, \!W$_2$), $\sim$(X, sha, W$_2$), \\
& $\quad$ qVal(W$_3$), qBound(W, $\tp$, \!W$_3$), $\sim$(Y, kle, W$_3$), \\
& $\quad$ qVal(W$_4$), qBound(W, $\tp$, \!W$_4$), \^{w}rote$_{R_2}$(sha, kle, W$_4$) \\
\tiny $R_{2.2}$ & authored(X, Y, W) $\gets$ qVal(W), qVal(W$_1$), qBound(W, $\tp$, \!W$_1$), pay$_{(0.9,0)}$(W$_1$), \\
& $\quad$ qVal(W$_2$), qBound(W, $\tp$, \!W$_2$), $\sim$(X, sha, W$_2$), \\
& $\quad$ qVal(W$_3$), qBound(W, $\tp$, \!W$_3$), $\sim$(Y, kle, W$_3$), \\
& $\quad$ qVal(W$_4$), qBound(W, $\tp$, \!W$_4$), \^{w}rote$_{R_2}$(sha, kle, W$_4$) \\
\tiny $\hat{R}_3$ & \^{w}rote$_{R_3}$(sha, hamlet, W) $\gets$ qVal(W), qBound(W, $\tp$, (1,1)) \\
\tiny $R_{3.1}$ & wrote(X, Y, W) $\gets$ qVal(W), qVal(W$_1$), qBound(W, $\tp$, \!W$_1$), pay$_\tp$(W$_1$), \\
& $\quad$ qVal(W$_2$), qBound(W, $\tp$, \!W$_2$), $\sim$(X, sha, W$_2$), \\
& $\quad$ qVal(W$_3$), qBound(W, $\tp$, \!W$_3$), $\sim$(Y, hamlet, W$_3$), \\
& $\quad$ qVal(W$_4$), qBound(W, $\tp$, \!W$_4$), \^{w}rote$_{R_3}$(sha, hamlet, W$_4$) \\
\tiny $R_{3.2}$ & authored(X, Y, W) $\gets$ qVal(W), qVal(W$_1$), qBound(W, $\tp$, \!W$_1$), pay$_{(0.9,0)}$(W$_1$), \\
& $\quad$ qVal(W$_2$), qBound(W, $\tp$, \!W$_2$), $\sim$(X, sha, W$_2$), \\
& $\quad$ qVal(W$_3$), qBound(W, $\tp$, \!W$_3$), $\sim$(Y, hamlet, W$_3$), \\
& $\quad$ qVal(W$_4$), qBound(W, $\tp$, \!W$_4$), \^{w}rote$_{R_3}$(sha, hamlet, W$_4$) \\
\tiny $\hat{R}_4$ & \^{g}ood\_work$_{R_4}$(G, W) $\gets$ qVal(W), \\
& $\quad$ qVal(W$_1$), qBound((0.5,100), $\tp$, W$_1$), qBound(W, (0.75,3), W$_1$), famous(A, W$_1$), \\
& $\quad$ qVal(W$_2$), qBound(W, (0.75,3), W$_2$), authored(A, G, W$_2$) \\
\tiny $R_{4.1}$ & good\_work(X, W) $\gets$ qVal(W), qVal(W$_1$), qBound(W, $\tp$, \!W$_1$), pay$_\tp$(W$_1$), \\
& $\quad$ qVal(W$_2$), qBound(W, $\tp$, \!W$_2$), $\sim$(X, G, W$_2$), \\
& $\quad$ qVal(W$_3$), qBound(W, $\tp$, \!W$_3$), \^{g}ood\_work$_{R_4}$(G, W$_3$) \\[3mm]
& \% Program clauses for $\sim$: \\
& $\sim$(X, Y, W) $\gets$ qVal(W), qVal($\tp$), qBound(W, $\tp$, $\tp$), X==Y \\
& $\sim$(kle, kli, W) $\gets$ qVal(W), qVal(W$_1$), qBound(W, $\tp$, \!W$_1$), pay$_{(0.8,2)}$(W$_1$) \\
& $[\ldots]$ \\[3mm]
& \% Program clauses for pay: \\
& pay$_\tp$(W) $\gets$ qVal(W), qBound(W, $\tp$, $\tp$) \\
& pay$_{(0.9,0)}$(W) $\gets$ qVal(W), qBound(W, $\tp$, (0.9,0)) \\
& pay$_{(0.8,2)}$(W) $\gets$ qVal(W), qBound(W, $\tp$, (0.8,2)) \\[3mm]
& \% Program clauses for qVal \& qBound: \\
& qVal((X$_1$,X$_2$)) $\gets$ X$_1$ $>$ 0, X$_1$ $\le$ 1, X$_2$ $\ge$ 0 \\
& qBound((W$_1$,W$_2$), (Y$_1$,Y$_2$), (Z$_1$,Z$_2$)) $\gets$ W$_1$ $\le$ Y$_1$ $\times$ Z$_1$, W$_2$ $\ge$ Y$_2$ $+$ Z$_2$ \\[2mm]
\end{tabular}
\end{center}

Finally, the goal $\elimD{\elimS{G_r}}$ for $\elimD{\elimS{\Prog_r}}$ is as follows:
\begin{center}
\footnotesize\it qVal(W), qBound((0.5,10), $\tp$, \!W), qVal(W'), qBound(W, $\tp$, \!W'), good\_work(X, W')
\end{center}

Note that, in order to improve the clarity of the program clauses of this example, the qualification value $(1,\!0)$---top value in $\U{\otimes}\W$---has been replaced by $\tp$. \mathproofbox
\end{exmp}

The next theorem proves the semantic correctness of the program transformation.

\begin{thm}
\label{thm:QCLP2CLP:programs}
Let $A$ be an atom such that $qV\!al$ and $qBound$ do not occur in $A$. Assume $d \in \aqdom$ such that $\transform{(\cqat{A}{d}{\Pi})} = (\cat{A'}{\Pi}, \Omega)$.
Then, the two following statements are equivalent:
\begin{enumerate}
  \item  $\Prog \qchldc \cqat{A}{d}{\Pi}$
  \item $\elimD{\Prog} \chlc \cat{A'\!\rho}{\Pi}$ for some $\rho \in \Sol{\cdom}{\Omega}$ such that $\domset{\rho} = \varset{\Omega}$.
\end{enumerate}
\end{thm}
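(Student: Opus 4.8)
The plan is to reduce the theorem to a single bridging claim that pins down how the qualification value is represented by the extra CLP argument introduced by rule \textbf{TDA}. Writing $\transform{A} = (A',w)$ as in Figure~\ref{fig:transformation}, rule \textbf{TQCA} gives $\Omega = \{\qval{w}, \qbound{\imath(d),\imath(\tp),w}\}$; here $w = \imath(\tp)$ is ground and $A' = A$ when $A$ is an equation or a primitive atom, whereas $w = W$ is a fresh variable and $A' = p'(\ntup{t}{n},W)$ when $A = p(\ntup{t}{n})$ is a defined atom. I would first establish the following Claim, for every atom $A$ in which ${\it qVal}$ and ${\it qBound}$ do not occur and every $e \in \aqdom$:
$$\Prog \qchldc \cqat{A}{e}{\Pi} \iff \elimD{\Prog} \chlc \cat{A'[w := \imath(e)]}{\Pi},$$
where $A'[w := \imath(e)]$ denotes $A'$ with its qualification argument set to $\imath(e)$ (and is just $A' = A$ when $A$ is an equation or primitive atom). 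For such base atoms the Claim is immediate, since both sides reduce to $\Pi \model{\cdom} A$ through the rules \textbf{QEA}/\textbf{QPA} and \textbf{EA}/\textbf{PA}.

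Granting the Claim, the theorem follows. For a defined atom, $\varset{\Omega} = \{W\}$. From 1 to 2, given $\Prog \qchldc \cqat{A}{d}{\Pi}$ I take $\rho = \{W \mapsto \imath(d)\}$: then $\rho \in \Solc{\Omega}$ because $\qval{\imath(d)}$ is true by Lemma~\ref{lema:expr}(1) (as $d \in \aqdom$) and $\qbound{\imath(d),\imath(\tp),\imath(d)}$ is true by Lemma~\ref{lema:expr}(2) (as $d \dleq \tp \circ d = d$), and $\elimD{\Prog} \chlc \cat{A'\rho}{\Pi}$ is exactly the Claim at $e = d$. Conversely, any $\rho$ as in 2 has $W\rho = \imath(e)$ for some $e \in \aqdom$ with $d \dleq \tp \circ e = e$, again by Lemma~\ref{lema:expr}; the Claim yields $\Prog \qchldc \cqat{A}{e}{\Pi}$, and lowering the value from $e$ to $d \dleq e$ is legitimate because \textbf{QDA} may conclude with any value below its stated bound while \textbf{QEA} and \textbf{QPA} are value-independent. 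The equation and primitive cases are trivial, since then $\rho = \varepsilon$, $\Omega$ holds, and both 1 and 2 collapse to $\Pi \model{\cdom} A$.

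The Claim itself I would prove by induction on proof-tree size, handling the two implications separately. The only nontrivial case is a defined atom, where a \textbf{QDA} step on a clause $C : p(\ntup{t}{n}) \qgets{\alpha} (\qat{B_j}{w_j})_{j=1 \ldots m}$ must correspond to a \textbf{DA} step on its transform $\transform{C}$. For completeness (1 to 2), from a \textbf{QDA} step with body values $e_j$ (so $e_j \dgeq^? w_j$ and $e \dleq \infi_i d_i \sqcap \alpha \circ \infi_j e_j$) I build the matching \textbf{DA} step under the substitution that additionally sets $W \mapsto \imath(e)$ and each fresh $w'_j \mapsto \imath(e_j)$, so that all qualification arguments are ground. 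Its body conjuncts are discharged using the ground case of Lemma~\ref{lema:expr}: ${\it qVal}(\imath(e))$ and ${\it qVal}(\imath(e_j))$ hold since $e, e_j \in \aqdom$; $\encode{w'_j \dgeq^? \imath(w_j)}$ holds from $e_j \dgeq w_j$; and ${\it qBound}(\imath(e),\imath(\alpha),\imath(e_j))$ holds because $e \dleq \alpha \circ \infi_j e_j \dleq \alpha \circ e_j$ by monotonicity of $\circ$. Each transformed body atom is then provable by the induction hypothesis applied to the subproof of $\cqat{B_j\theta}{e_j}{\Pi}$, and the equational premises, together with the trivial head equation $\imath(e) == W\theta'$, carry over through \textbf{EA}.

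For soundness (2 to 1) the induction runs over the given \textbf{DA} proof, and this is where I expect the main obstacle. Since \textbf{DA} admits an arbitrary substitution $\theta'$, the internal qualification arguments $w'_j\theta'$ need not be ground; they are only constrained by the entailed conjuncts ${\it qVal}(w'_j\theta')$, $\encode{\cdot}$ and ${\it qBound}(W\theta',\imath(\alpha),w'_j\theta')$, whereas Lemma~\ref{lema:expr} speaks only about ground representatives. I would bridge this gap by applying Lemma~\ref{lema:dec}(1)---valid for arbitrary, not merely ground, tuples---to turn each ${\it qVal}$ and ${\it qBound}$ subproof into a constraint entailment of the form $\Pi \model{\cdom} \qval{\cdot}$ or $\Pi \model{\cdom} \qbound{\cdot}$, and then evaluate these entailments at a solution of $\Pi$ (which exists, since $\Pi$ is satisfiable) to read off concrete qualification values $e_j \in \aqdom$ satisfying $e_j \dgeq^? w_j$ and $e \dleq \alpha \circ e_j$. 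The induction hypothesis, applied after this grounding, then supplies $\Prog \qchldc \cqat{B_j\theta'}{e_j}{\Pi}$, and the \textbf{QDA} side condition $e \dleq \alpha \circ \infi_j e_j$ is recovered from the individual pointwise bounds through the distributivity axiom $\alpha \circ \infi_j e_j = \infi_j(\alpha \circ e_j)$. Two supporting facts keep the argument honest: the transformation commutes with instantiation because $W$ and every $w'_j$ are fresh (Definition~\ref{def:qclptransform}), and Lemma~\ref{lema:expr} applies with $\elimD{\Prog}$ in place of $E_\qdom$ since ${\it qVal}$ and ${\it qBound}$ occur at no other clause heads.
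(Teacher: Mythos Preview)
Your reduction to the intermediate Claim and the derivation of the theorem from it are fine, and the completeness direction of the Claim goes through essentially as you sketch. The gap is in the soundness direction of the Claim, precisely where you anticipate the obstacle.

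Evaluating the entailed constraints $\Pi \models \qval{w'_j\theta'}$ and $\Pi \models \qbound{\cdot,\imath(\alpha),w'_j\theta'}$ at some $\nu \in \Solc{\Pi}$ does pin down concrete values $e_j$, but it does \emph{not} by itself give you a CHL proof of $\cat{(B_j\theta)'[w'_j := \imath(e_j)]}{\Pi}$, which is what your IH (the Claim at value $e_j$) requires. If instead you instantiate the whole body premise by $\nu$ (via the Entailment Property), the data arguments are grounded as well: you obtain $\elimD{\Prog} \chlc \cat{B'_j\theta'\nu}{\Pi}$ and hence, by the Claim, only $\Prog \qchldc \cqat{B_j\theta\nu}{e_j}{\Pi}$. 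Feeding that into a \textbf{QDA} step with substitution $\theta\nu$ would then require the head equalities $\Pi \models t'_i == t_i\theta\nu$, which do \emph{not} follow from $\Pi \models t'_i == t_i\theta$ once the $t'_i$ contain free variables (the right-hand side has become ground while the left has not). Nor is there a lifting from ``$\cqat{B_j\theta\nu}{e_j}{\Pi}$ for every $\nu$'' back to $\cqat{B_j\theta}{e_j}{\Pi}$, since different $\nu$ may select different program clauses.

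The paper avoids this specific difficulty not by grounding but by \emph{carrying the substitution $\rho$ in the inductive statement itself}: statement~2 is phrased for an arbitrary $\rho \in \Solc{\Omega}$ with $\domset{\rho}=\varset{\Omega}$, and in the inductive step one simply takes $\rho_j := \theta'{\upharpoonright}\{w'_j\}$ straight out of the \textbf{DA} substitution, so that the body premise $\cat{B'_j\theta'}{\Pi}$ is literally $\cat{(B_j\theta)'\rho_j}{\Pi}$ with the data part $B_j\theta$ left intact. That is what makes the recursion compose without any global grounding. (The paper still needs $\rho_j \in \Solc{\Omega_j}$, hence $w'_j\theta'$ ground and in $\mbox{ran}(\imath)$; it asserts this via Lemma~\ref{lema:expr}, which strictly speaking addresses only ground arguments---so a residual subtlety remains there as well, but it is narrower than the one your grounding-by-$\nu$ route runs into, and the shape of the inductive statement is what makes the difference.)
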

\begin{proof*}
We separately prove each implication.

\smallskip\noindent
[1. $\Rightarrow$ 2.] {\em (the transformation is complete).}
We assume that $T$ is a $\QCHL(\qdom,\cdom)$ proof tree witnessing $\Prog \qchldc \cqat{A}{d}{\Pi}$.
We want to show the existence of a $\clp{\cdom}$ proof tree  $T'$
witnessing $\elimD{\Prog} \chlc \cat{A'\!\rho}{\Pi}$
for some $\rho \in \Sol{\cdom}{\Omega}$ such that $\domset{\rho} = \varset{\Omega}$.
We reason  by complete induction on $\Vert T \Vert$.
There are three possible cases,  according to the the syntactic form of the atom $A$.
In each case we argue how to build the desired proof tree $T'$\!.

\noindent --- $A$ is a primitive atom $\kappa$.
In this case {\bf TQCA} and {\bf TPA} compute $A' = \kappa$ and $\Omega = \{\qval{\imath(\tp)},\ \qbound{\imath(d),\imath(\tp),\imath(\tp)}\}$. Now, from $\Prog \qchldc \cqat{\kappa}{d}{\Pi}$ follows $\Pi \model{\cdom} \kappa$ due to the {\bf QPA} inference, and therefore taking $\rho = \varepsilon$ we can prove $\elimD{\Prog} \chlc \cat{\kappa\varepsilon}{\Pi}$ with a proof tree $T'$\! containing only one {\bf PA} node. Moreover, $\varepsilon \in \Solc{\Omega}$ is trivially true because the two constraints belonging to $\Omega$ are obviously true in $\cdom$.

\noindent --- $A$ is an equation $t == s$.
In this case {\bf TQCA} and {\bf TEA} compute $A' = (t == s)$ and $\Omega = \{\qval{\imath(\tp)},\ \qbound{\imath(d),\imath(\tp),\imath(\tp)}\}$. Now, from $\Prog \qchldc \cqat{(t==s)}{d}{\Pi}$ follows $t \approx_\Pi s$ due to the {\bf QEA} inference, and therefore taking $\rho = \varepsilon$ we can prove $\elimD{\Prog} \chlc \cat{(t == s)\varepsilon}{\Pi}$ with a proof tree $T'$\! containing only one {\bf EA} node. Moreover, $\varepsilon \in \Solc{\Omega}$ is trivially true because the two constraints belonging to $\Omega$ are obviously true in $\cdom$.

\noindent --- $A$ is a defined atom $p(\ntup{t'}{n})$ with $p \in DP^n$.
In this case {\bf TQCA} and {\bf TDA} compute $A' = p'(\ntup{t'}{n},W)$ and $\Omega = \{\qval{W},\ \qbound{\imath(d),\imath(\tp),W}\}$ where $W$ is a fresh CLP variable.
On the other hand, $T$ must be rooted by a {\bf QDA} step of the form:
$$
\displaystyle\frac
    {~ (~ \cqat{(t'_i == t_i\theta)}{d_i}{\Pi} ~)_{i = 1 \ldots n} \quad (~ \cqat{B_j\theta}{e_j}{\Pi} ~)_{j = 1 \ldots m} ~}
    {\cqat{p(\ntup{t'}{n})}{d}{\Pi}} \quad (\clubsuit)
$$
using a clause $C : (p(\ntup{t}{n}) \qgets{\alpha} \qat{B_1}{w_1}, \ldots, \qat{B_m}{w_m}) \in \Prog$ instantiated by a substitution $\theta$ and such that the side conditions $e_j \dgeq^? w_j ~ (1 \le j \le m)$, $d \dleq d_i ~ (1 \le i \le n)$ and $d \dleq \alpha \circ e_j ~ (1 \le j \le m)$ are fulfilled.

For $j = 1 \ldots m$ we can assume $\transform{B_j} = (B'_j, w'_j)$ and thus $\transform{(\cqat{B_j\theta}{e_j}{\Pi})} = (\cat{B'_j\theta}{\Pi}, \Omega_j)$ where $\Omega_j = \{\qval{w'_j},\ \qbound{\imath(e_j),\imath(\tp),w'_j}\}$. The proof trees $T_j$ of the last $m$ premises of $(\clubsuit)$ will have less than $\Vert T \Vert$ nodes,
and hence the induction hypothesis can be applied to each $(\cqat{B_j\theta}{e_j}{\Pi})$ with $1 \leq j \leq m$, obtaining CHL($\cdom$) proof trees $T'_j$ proving $\elimD{\Prog} \chlc \cat{B'_j\theta\rho_j}{\Pi}$ for some $\rho_j \in \Solc{\Omega_j}$ with $\domset{\rho_j} = \varset{\Omega_j}$.

Consider $\rho = \{W \mapsto \imath(d)\}$ and $\transform{C} \in \elimD{\Prog}$ of the form:
$$
\transform{C} :~ p'(\ntup{t}{n},W') ~\gets~ qV\!al(W'),\
  \left( \begin{array}{l}
    qV\!al(w_j'), ~\encode{w'_j \dgeq^? \imath(w_j)}, \\
    qBound(W', \imath(\alpha), w'_j), ~B'_j \\
  \end{array}\right)_{j = 1 \ldots m.} \\
$$
Obviously, $\rho \in \Solc{\Omega}$ and $\domset{\rho} = \varset{\Omega}$. To finish the proof we must prove $\elimD{\Prog} \chlc \cat{A'\!\rho}{\Pi}$. We claim that this can be done with a CHL($\cdom$) proof tree $T'$ whose root inference is a {\bf DA} step of the form:
$$
\displaystyle\frac
{~
  \begin{array}{l}
     ~~~~(~ \cat{(t'_i\rho == t_i\theta')}{\Pi} ~)_{i = 1 \ldots n} \\
     ~~~~\cat{(W\rho == W'\theta')}{\Pi} \\
     ~~~~\cat{qV\!al(W')\theta'}{\Pi}  \\
     \left( \begin{array}{l}
      \cat{qV\!al(w'_j)\theta'}{\Pi} \\
      \cat{\encode{w'_j \dgeq^? \imath(w_j)}\theta'}{\Pi} \\
      \cat{qBound(W',\imath(\alpha),w'_j)\theta'}{\Pi} \\
      \cat{B'_j\theta'}{\Pi}
    \end{array}\right)_{j = 1 \ldots m} \\
  \end{array}
~}
{\cat{p'(\ntup{t'}{n},W)\rho}{\Pi}} ~ (\spadesuit)
$$
using $\transform{C}$ instantiated by the substitution $\theta' = \theta \uplus \rho_1 \uplus \dots \uplus \rho_m \uplus \{W' \mapsto \imath(d)\}$.
We check that the premises of ($\spadesuit$) can be derived from $\elimD{\Prog}$ in CHL($\cdom$):
\begin{itemize}
\item
$\elimD{\Prog} \chlc \cat{(t'_i\rho == t_i\theta')}{\Pi}$ for $i = 1 \ldots n$.
By construction of $\rho$ and $\theta'$\!, these are equivalent to prove $\elimD{\Prog} \chlc \cat{(t'_i == t_i\theta)}{\Pi}$ for $i = 1 \ldots n$ and these hold with CHL($\cdom$) proof trees of only one {\bf EA} node because of $t'_i \approx_\Pi t_i\theta$, which is a consequence of the first $n$ premises of ($\clubsuit$).
\item
$\elimD{\Prog} \chlc \cat{(W\rho == W'\theta')}{\Pi}$.
By construction of $\rho$ and $\theta'$\!, this is equivalent to prove $\elimD{\Prog} \chlc \cat{(\imath(d) == \imath(d))}{\Pi}$ which results trivial.
\item
$\elimD{\Prog} \chlc \cat{qV\!al(W')\theta'}{\Pi}$.
By construction of $\theta'$, this is equivalent to prove $\elimD{\Prog} \chlc \cat{qV\!al(\imath(d))}{\Pi}$. We trivially have that $\imath(d) \in \mbox{ran}(\imath)$. Then, by Lemma \ref{lema:expr}, this premise holds.
\item
$\elimD{\Prog} \chlc \cat{qV\!al(w'_j)\theta'}{\Pi}$ for $j = 1 \ldots m$.
By construction of $\theta'$ and Lemma \ref{lema:expr} we must prove, for any fixed $j$, that $\qval{w'_j\rho_j}$ is true in $\cdom$. As $\rho_j \in \Solc{\Omega_j}$ we know $\rho_j \in \Solc{\qval{w'_j}}$, therefore $\qval{w'_j\rho_j}$ is trivially true in $\cdom$.
\item
$\elimD{\Prog} \chlc \cat{\encode{w'_j \dgeq^? \imath(w_j)}\theta'}{\Pi}$ for $j = 1 \ldots m$.
We reason for any fixed $j$.
If $w_j =\ ?$ this results trivial.
Otherwise, it amounts to $\qbound{\imath(w_j),\imath(\tp),w'_j\rho_j}$ being true in $\cdom$, by construction of $\theta'$ and Lemma \ref{lema:expr}.
As seen before, $\qval{w'_j\rho_j}$ is true in $\cdom$, therefore $w'_j\rho_j = \imath(e'_j)$ for some $e'_j \in \aqdom$. From the side conditions of ($\clubsuit$) we have $w_j \dleq e_j$.
On the other hand, $\rho_j \in \Solc{\Omega_j}$ and, in particular, $\rho_j \in \Solc{\qbound{\imath(e_j),\imath(\tp),w'_j}}$. This, together with $w'_j\rho_j = \imath(e'_j)$, means $e_j \dleq e'_j$, which with $w_j \dleq e_j$ implies $w_j \dleq e'_j$, i.e. $\qbound{\imath(w_j),\imath(\tp),w'_j\rho_j}$ is true in $\cdom$.
\item
$\elimD{\Prog} \chlc \cat{qBound(W',\imath(\alpha),w'_j)\theta'}{\Pi}$ for $j = 1 \ldots m$.
We reason for any fixed $j$. By construction of $\theta'$ and Lemma \ref{lema:expr}, we must prove that $\qbound{\imath(d),\imath(\alpha),w'_j\rho_j}$ is true in $\cdom$. As seen before, $\qval{w'_j\rho_j}$ is true in $\cdom$, therefore $w'_j\rho_j = \imath(e'_j)$ for some $e'_j \in \aqdom$. From the side conditions of ($\clubsuit$) we have $d \dleq \alpha \circ e_j$. On the other hand, $\rho_j \in \Solc{\Omega_j}$ and, in particular, $\rho_j \in \Solc{\qbound{\imath(e_j), \imath(\tp), w'_j}}$. This, together with $w'_j\rho_j = \imath(e'_j)$, means $e_j \dleq e'_j$. Now, $d \dleq \alpha \circ e_j$ and $e_j \dleq e'_j$ implies $d \dleq \alpha \circ e'_j$, i.e. $\qbound{\imath(d),\imath(\alpha),w'_j\rho_j}$ is true in $\cdom$.
\item
$\elimD{\Prog} \chlc \cat{B'_j\theta'}{\Pi}$ for $j = 1 \ldots m$.
In this case, it is easy to see that $B'_j\theta' = B'_j\theta\rho_j$ by construction of $\theta'$ and because of the program transformation rules.
On the other hand, proof trees $T'_j$ proving $\elimD{\Prog} \chlc \cat{B'_j\theta\rho_j}{\Pi}$ can be obtained by inductive hypothesis as seen before.
\end{itemize}

\smallskip\noindent
[2. $\Rightarrow$ 1.] {\em (the transformation is sound).}
We assume that $T'$ is a a CHL($\cdom$) proof tree
witnessing $\elimD{\Prog} \chlc \cat{A'\rho}{\Pi}$
for some $\rho \in \Sol{\cdom}{\Omega}$ such that $\domset{\rho} = \varset{\Omega}$.
We want to to show the existence of a $\QCHL(\qdom,\cdom)$ proof tree $T$
witnessing $\Prog \qchldc \cqat{A}{d}{\Pi}$.
We reason  by complete induction on $\Vert T' \Vert$.
There are three possible cases according to the the syntactic form of the atom $A'$\!.
In each case we argue how to build the desired proof tree $T$.

\noindent --- $A'$ is a primitive atom $\kappa$.
In this case due to {\bf TQCA} and {\bf TPA} we can assume $A = \kappa$ and $\Omega = \{\qval{\imath(\tp)},\ \qbound{\imath(d),\imath(\tp),\imath(\tp)}\}$.
Note that $\domset{\rho} = \varset{\Omega} = \emptyset$ implies $\rho = \varepsilon$.
Now, from $\elimD{\Prog} \chlc \cat{\kappa\varepsilon}{\Pi}$ follows $\Pi \model{\cdom} \kappa$ due to the {\bf PA} inference, and therefore we can prove $\Prog \qchldc \cqat{\kappa}{d}{\Pi}$ with a proof tree $T$ containing only one {\bf QPA} node.

\noindent --- $A'$ is an equation $t == s$.
In this case due to {\bf TQCA} and {\bf TEA} we can assume $A = (t == s)$ and $\Omega = \{\qval{\imath(\tp)},\ \qbound{\imath(d),\imath(\tp),\imath(\tp)}\}$.
Note that $\domset{\rho} = \varset{\Omega} = \emptyset$ implies $\rho = \varepsilon$.
Now, from $\elimD{\Prog} \chlc \cat{(t == s)\varepsilon}{\Pi}$ follows $t \approx_\Pi s$ due to the {\bf EA} inference, and therefore we can prove $\Prog \qchldc \cqat{(t == s)}{d}{\Pi}$ with a proof tree $T$ containing only one {\bf QEA} node.

\noindent --- $A'$ is a defined atom $p'(\ntup{t'}{n},W)$ with $p' \in DP^{n+1}$.
In this case due to {\bf TQCA} and {\bf TDA} we can assume $A = p(\ntup{t'}{n})$ and $\Omega = \{\qval{W},\ \qbound{\imath(d),\imath(\tp),W}\}$.
On the other hand, $T'$ must be rooted by a {\bf DA} step ($\spadesuit$) using a clause $\transform{C} \in \elimD{\Prog}$ instantiated by a substitution $\theta'$. We can assume that ($\spadesuit$), $\transform{C}$ and the corresponding clause $C \in \Prog$ have the form already displayed in [1. $\Rightarrow$ 2.].

By construction of $\transform{C}$\!, we can assume $\transform{B_j} = (B'_j,\ w'_j)$.
Let $\theta = \theta'{\upharpoonright}\varset{C}$ and $\rho_j = \theta'{\upharpoonright}\varset{w'_j} ~ (1 \ge j \ge m)$.
Then, due to the premises $\cat{qV\!al(w'_j)\theta'}{\Pi}$ of ($\spadesuit$) and Lemma \ref{lema:expr} we can assume $e'_j \in \aqdom ~ (1 \leq j \leq m)$ such that $w'_j\rho_j = \imath(e'_j)$.

To finish the proof, we must prove $\Prog \qchldc \cqat{A}{d}{\Pi}$.
We claim that this can be done with a $\QCHL(\qdom,\cdom)$ proof tree $T$ whose root inference is a {\bf QDA} step of the form of ($\clubsuit$), as displayed in [1. $\Rightarrow$ 2.], using clause $C$ instantiated by $\theta$.
In the premises of this inference we choose $d_i = \tp ~ (1 \leq i \leq n)$ and $e_j = e'_j ~ (1 \leq j \leq m)$.
Next we check that these premises can be derived from $\Prog$ in $\QCHL(\qdom,\cdom)$ and that the side conditions are fulfilled:
\begin{itemize}
\item
$\Prog \qchldc \cqat{(t'_i == t_i\theta)}{d_i}{\Pi}$ for $i = 1 \ldots n$.
This amounts to $t'_i \approx_\Pi t_i\theta$ which follows from the first $n$ premises of ($\spadesuit$) given that $t'_i\rho = t'_i$ and $t_i\theta' = t_i\theta$.
\item
$\Prog \qchldc \cqat{B_j\theta}{e_j}{\Pi}$ for $j = 1 \ldots m$.
From $\transform{B_j} = (B_j',  w'_j)$ and due to rule {\bf TQCA}, we have $\transform{(\cqat{(B_j\theta)}{e_j}{\Pi})} = (\cat{B_j\theta}{\Pi}, \Omega_j)$ where $\Omega_j = \{\qval{w'_j},\ \qbound{\imath(e_j), \imath(\tp),$ $w'_j}\}$. From the premises of ($\spadesuit$) and the fact that $B'_j\theta' = B'_j\theta\rho_j$ we know that $\elimD{\Prog} \chlc \cat{B'_j\theta\rho_j}{\Pi}$ with a CHL($\cdom$) proof tree $T'_j$ such that $\Vert T'_j \Vert < \Vert T' \Vert$. Therefore $\Prog \qchldc \cqat{B_j\theta}{e_j}{\Pi}$ follows by inductive hypothesis provided that $\rho_j \in \Solc{\Omega_j}$. In fact, due to the form of $\Omega_j$, $\rho_j \in \Solc{\Omega_j}$ holds iff $w'_j\rho_j = \imath(e'_j)$ for some $e'_j$ such that $e_j \dleq e'_j$, which is the case because of the choice of $e_j$.
\item
$e_j \dgeq^? w_j$ for $j = 1 \ldots m$.
Trivial in the case that $w_j =\ ?$.
Otherwise they are equivalent to $w_j \dleq e'_j$ which follow from premises $\cat{\encode{w'_j \dgeq^? \imath(w_j)}\theta'}{\Pi}$ (i.e. $\cat{\encode{w'_j\rho_j \dgeq^? \imath(w_j)}}{\Pi}$) of ($\spadesuit$) and Lemma \ref{lema:expr}.
\item
$d \dleq d_i$ for $i = 1 \ldots n$.
Trivially hold due to the choice of $d_i = \tp$.
\item
$d \dleq \alpha \circ e_j$ for $j = 1 \ldots m$.
Note that $\rho \in \Solc{\Omega}$ implies the existence of $d' \in \aqdom$ such that $\imath(d') = W\rho$ and $d \dleq d'$.
On the other hand, $e_j = e'_j$ by choice.
It suffices to prove $d' \dleq \alpha \circ e'_j$ for $j = 1 \ldots m$. Premises of ($\spadesuit$) and Lemma \ref{lema:expr} imply that $\qbound{W'\theta',\imath(\alpha),w'_j\theta'}$ is true in $\cdom$. Moreover, $W'\theta' = W\rho = \imath(d')$ because of another premise of ($\spadesuit$) and $w'_j\theta' = \imath(e'_j)$ as explained above.
Therefore $\qbound{W'\theta',\imath(\alpha),w'_j\theta'}$ amounts to $\qbound{\imath(d'),\imath(\alpha),\imath(e'_j)}$ which guarantees $d' \dleq \alpha \circ e'_j ~ (1 \leq j \leq m)$. \mathproofbox
\end{itemize}
\end{proof*}

The goal transformation correctness is established by the next theorem, which relies on the previous result.

\begin{thm}
\label{thm:QCLP2CLP:goals}
Let $G$ be a goal for a $\qclp{\qdom}{\cdom}$-program $\Prog$ such that $qV\!al$ and $qBound$ do not occur in $G$.
Let $\Prog' = \elimD{\Prog}$ and $G' = \elimD{G}$.
Assume a $\cdom$-substitution $\sigma$, a mapping $\mu : \warset{G} \to \aqdomd{\qdom}$ and a satisfiable finite set of $\cdom$-constraints $\Pi$.
Then, the following two statements are equivalent:
\begin{enumerate}
\item
$\langle \sigma, \mu, \Pi \rangle \in \Sol{\Prog}{G}$.
\item
$\langle \theta, \Pi \rangle \in \Sol{\Prog'}{G'}$ for some $\theta$ that verifies the following requirements:
  \begin{enumerate}
  \item
  $\theta =_{\varset{G}} \sigma$,
  \item
  $\theta =_{\warset{G}} \mu\imath$ and
  \item
  $W\theta \in \mbox{ran}(\imath)$ for each $W \in \varset{G'} \setminus (\varset{G} \cup \warset{G})$.
  \end{enumerate}
\end{enumerate}
\end{thm}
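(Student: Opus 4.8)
The plan is to reduce the goal‑level equivalence to the atom‑level correctness result, Theorem~\ref{thm:QCLP2CLP:programs}, exactly as Theorem~\ref{thm:SQCLP2QCLP:goals} was reduced to Theorem~\ref{thm:SQCLP2QCLP:programs}; the only extra work is to account for the auxiliary $qV\!al$, $qBound$ and threshold atoms that rule \textbf{TG} inserts into $G'$, which I would discharge using Lemma~\ref{lema:expr}. Write $G : (\qat{A_i}{W_i}, W_i \dgeq^? \beta_i)_{i = 1 \ldots m}$ and let $\transform{A_i} = (A_i', w_i')$, so that by \textbf{TG}
$$\elimD{G} = \bigl( qV\!al(W_i),\ \encode{W_i \dgeq^? \imath(\beta_i)},\ qV\!al(w_i'),\ qBound(W_i, \imath(\tp), w_i'),\ A_i' \bigr)_{i = 1 \ldots m}.$$
Unfolding Definition~\ref{dfn:qclp-goalsol}, statement (1) says $W_i\mu = d_i \dgeq^? \beta_i$ and $\Prog \qchldc \cqat{A_i\sigma}{d_i}{\Pi}$ for each $i$, whereas by Definition~\ref{dfn:clp-goalsol} statement (2) asks for a $\cdom$‑substitution $\theta$ satisfying (a)--(c) under which every conjunct of $\elimD{G}$ becomes $\CHL(\cdom)$‑derivable from $\Prog' = \elimD{\Prog}$ together with $\Pi$.

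For the implication $(1) \Rightarrow (2)$ I would first apply Theorem~\ref{thm:QCLP2CLP:programs} to each $\cqat{A_i\sigma}{d_i}{\Pi}$, noting that $\transform{(\cqat{A_i\sigma}{d_i}{\Pi})} = (\cat{A_i'\sigma}{\Pi}, \Omega_i)$ with $\Omega_i = \{qV\!al(w_i'),\, qBound(\imath(d_i), \imath(\tp), w_i')\}$, obtaining witnesses $\rho_i \in \Solc{\Omega_i}$ with $\domset{\rho_i} = \varset{\Omega_i}$ and $\elimD{\Prog} \chlc \cat{A_i'\sigma\rho_i}{\Pi}$. Since the fresh variables $w_i'$ are pairwise distinct and disjoint from $\varset{G} \cup \warset{G}$, I assemble
$$\theta \eqdef \sigma{\upharpoonright}\varset{G} \uplus \{W_i \mapsto \imath(d_i)\}_{i = 1 \ldots m} \uplus \textstyle\biguplus_{i} \rho_i.$$
Conditions (a)--(c) then hold by construction, using $w_i'\rho_i = \imath(e_i')$ for some $e_i' \in \aqdom$ (guaranteed by $qV\!al(w_i') \in \Omega_i$). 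Each conjunct of $\elimD{G}$ is verified via Lemma~\ref{lema:expr}: $qV\!al(W_i)\theta = qV\!al(\imath(d_i))$ and $qV\!al(w_i')\theta = qV\!al(\imath(e_i'))$ hold because their arguments lie in $\mbox{ran}(\imath)$; $qBound(W_i, \imath(\tp), w_i')\theta$ holds because $\rho_i \in \Solc{qBound(\imath(d_i), \imath(\tp), w_i')}$ gives $d_i \dleq e_i'$; the threshold atom $\encode{W_i \dgeq^? \imath(\beta_i)}\theta$ (when $\beta_i \neq\ ?$) reduces to $\beta_i \dleq d_i$, i.e.\ the threshold condition $d_i \dgeq \beta_i$; and $A_i'\theta = A_i'\sigma\rho_i$ is derivable by the program theorem.

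The converse $(2) \Rightarrow (1)$ runs the same correspondence backwards: given $\theta$ satisfying (a)--(c), set $\sigma = \theta{\upharpoonright}\varset{G}$ and recover $d_i = W_i\mu$ from (b) via $W_i\theta = \imath(d_i)$, then put $\rho_i = \theta{\upharpoonright}\{w_i'\}$. Condition (c) yields $w_i'\theta = \imath(e_i') \in \mbox{ran}(\imath)$, and the derivability of $qBound(W_i, \imath(\tp), w_i')\theta$ with Lemma~\ref{lema:expr} gives $d_i \dleq e_i'$, so $\rho_i \in \Solc{\Omega_i}$ with $\domset{\rho_i} = \varset{\Omega_i}$. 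Since $A_i'\theta = A_i'\sigma\rho_i$ is $\CHL(\cdom)$‑derivable, Theorem~\ref{thm:QCLP2CLP:programs} delivers $\Prog \qchldc \cqat{A_i\sigma}{d_i}{\Pi}$; and the threshold atom's derivability (or $\beta_i =\ ?$) supplies $d_i \dgeq^? \beta_i$, completing $\langle \sigma, \mu, \Pi \rangle \in \Sol{\Prog}{G}$.

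The main obstacle I anticipate is bookkeeping rather than mathematical depth: one must keep the fresh qualification‑carrying variables $w_i'$ introduced by \textbf{TG} in exact correspondence with those produced by the atom transformation inside Theorem~\ref{thm:QCLP2CLP:programs}, check that the per‑atom witnesses $\rho_i$ have pairwise disjoint domains so they glue into a single $\theta$ (with $\rho_i = \varepsilon$ in the equational and primitive cases, where $\varset{\Omega_i} = \emptyset$), and treat the cases $\beta_i =\ ?$ and $\beta_i \neq\ ?$ uniformly through the $\encode{\cdot}$ abbreviation. Verifying that the extra $qV\!al$/$qBound$ conjuncts neither add nor remove solutions beyond what Lemma~\ref{lema:expr} guarantees is the one place where the argument could go wrong if the correspondence between $\imath$‑images and qualification values were mishandled.
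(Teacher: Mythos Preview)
Your proposal is correct and follows essentially the same approach as the paper: both directions are reduced to Theorem~\ref{thm:QCLP2CLP:programs} applied atom-wise, the substitution $\theta$ is assembled from $\sigma$, the map $W_i \mapsto \imath(W_i\mu)$, and the per-atom witnesses $\rho_i$, and the auxiliary $qV\!al$/$qBound$/threshold conjuncts are discharged via Lemma~\ref{lema:expr}. The bookkeeping concerns you flag (disjointness of the fresh $w_i'$, the $\beta_i =\ ?$ case, and $\rho_i = \varepsilon$ for equational/primitive atoms) are exactly the points the paper handles, and your treatment matches.
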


\begin{proof*}
As explained in Subsection \ref{sec:cases:qclp} the syntax of goals in $\qclp{\qdom}{\cdom}$-programs is the same as that of goals for $\sqclp{\simrel}{\qdom}{\cdom}$-programs, which is described in Section \ref{sec:sqclp}.
Therefore $G$, and $G'$ due to rule {\bf TG}, must have the following form:
$$
\begin{array}{c@{\hspace{1mm}}c@{\hspace{1mm}}l}
G &:& (~\qat{B_j}{W_j},\ W_j \dgeq^? \!\beta_j~)_{j = 1 \ldots m} \\
G' &:& (~qV\!al(W_j),~  \encode{W_j\dgeq^? \imath(\beta_j)},~  qV\!al(w'_j),~ qBound(W_j,$ $\imath(\tp),w'_j),~  B_j'~)_{j=1 \dots m} \\
\end{array}
$$
with $\transform{B_j} = (B'_j,  w'_j) ~ (1 \leq j \leq m)$.
Note that, because of rule {\bf TQCA}, we have $\transform{(\cqat{B_j\sigma}{W_j\mu}{\Pi})} = (\cat{B_j'\sigma}{\Pi}, \Omega_j)$ with $\Omega_j = \{\qval{w'_j},\ \qbound{\imath(W_j\mu),\imath(\tp),$ $w'_j}\}$ for $j = 1 \ldots m$.
We now prove each implication.

\smallskip\noindent
[1. $\Rightarrow$ 2.]
Let  $\langle \sigma, \mu, \Pi \rangle \in \Sol{\Prog}{G}$.
This means, by Definition \ref{dfn:qclp-goalsol}, $W_j\mu \dgeq^? \!\beta_j$ and $\Prog \qchldc \cqat{B_j\sigma}{W_j\mu}{\Pi}$ for $j = 1 \ldots m$.
In these conditions, Theorem \ref{thm:QCLP2CLP:programs} guarantees $\Prog' \chlc \cat{B'_j\sigma\rho_j}{\Pi} ~ (1 \leq j \leq m)$ for some $\rho_j \in \Solc{\Omega_j}$ such that $\domset{\rho_j} = \varset{\Omega_j}$.
It is easy to see that $\varset{G'} \setminus (\varset{G} \cup \warset{G}) = \varset{\Omega_1} \uplus \cdots \uplus \varset{\Omega_m}$.
Therefore it is possible to define a substitution $\theta$ verifying $\theta =_{\varset{G}} \sigma$, $\theta =_{\warset{G}} \mu\imath$ and $\theta =_{\domset{\rho_j}} \rho_j ~ (1 \leq j \leq m)$. Trivially, $\theta$ satisfies conditions 2.(a) and 2.(b). It also satisfies condition 2.(c) because for any $j$ and any variable $X$ such that $X \in \varset{\Omega_j}$, we have a constraint $\qval{X} \in \Omega_j$ implying, due to Lemma \ref{lema:expr}, $X\rho_j \in \mbox{ran}(\imath)$ (because $\rho_j \in \Solc{\Omega_j}$).

In order to prove $\langle \theta, \Pi \rangle \in \Sol{\Prog'}{G'}$ in the sense of Definition \ref{dfn:clp-goalsol} we check the following items:
\begin{itemize}
\item
By construction, $\theta$ is a $\cdom$-substitution.
\item
By the theorem's assumptions, $\Pi$ is a satisfiable and finite set of $\cdom$-constraints.
\item
$\Prog' \chlc \cat{A\theta}{\Pi}$ for every atom $A$ in $G'$.
Because of the form of $G'$ we have to prove the following for any fixed $j$:
\begin{itemize}
\item
$\Prog' \chlc \cat{qV\!al(W_j)\theta}{\Pi}$.
By construction of $\theta$ and Lemma \ref{lema:expr}, this amounts to $\qval{\imath(W_j\mu)}$ being true in $\cdom$, which is trivial consequence of $W_j\mu \in \aqdom$.
\item
$\Prog' \chlc \cat{\encode{W_j \dgeq^? \!\imath(\beta_j)}\theta}{\Pi}$.
If $\beta_j =\ ?$ this becomes trivial. Otherwise, $W_j\theta = \imath(W_j\mu)$ by construction of $\theta$, and by Lemma \ref{lema:expr} it suffices to prove $\qbound{\imath(\beta_j),\imath(\tp),\imath(W_j\mu)}$ is true in $\cdom$. This follows from $W_j\mu \dgeq^? \!\beta_j$, that is ensured by $\langle \sigma, \mu, \Pi \rangle \in \Sol{\Prog}{G}$.
\item
$\Prog' \chlc \cat{qV\!al(w'_j)\theta}{\Pi}$.
By construction of $\theta$ and Lemma \ref{lema:expr}, this amounts to $\qval{w'_j\rho_j}$ being true in $\cdom$, that is guaranteed by $\rho_j \in \Solc{\Omega_j}$.
\item
$\Prog' \chlc \cat{qBound(W_j,\imath(\tp),w'_j)\theta}{\Pi}$.
By construction of $\theta$ and Lemma \ref{lema:expr}, this amounts to $\qbound{\imath(W_j\mu),\imath(\tp),w'_j\rho_j}$ being true in $\cdom$, that is also guaranteed by $\rho_j \in \Solc{\Omega_j}$.
\item
$\Prog' \chlc \cat{B'_j\theta}{\Pi}$.
Note that, by construction of $\theta$, $B'_j\theta = B'_j\sigma\rho_j$. On the other hand, $\rho_j$ has been chosen above to verify $\Prog' \chlc \cat{B'_j\sigma\rho_j}{\Pi}$.
\end{itemize}
\end{itemize}

\smallskip\noindent
[2. $\Rightarrow$ 1.]
Let $\langle \theta, \Pi \rangle \in \Sol{\Prog'}{G'}$ and assume that $\theta$ verifies 2.(a), 2.(b) and 2.(c).
In order to prove $\langle \sigma, \mu, \Pi \rangle \in \Sol{\Prog}{G}$ in the sense of Definition \ref{dfn:qclp-goalsol} we must prove the following items:
\begin{itemize}
\item
By the theorem's assumptions, $\sigma$ is a $\cdom$-substitution, $\mu : \warset{G} \to \aqdomd{\qdom}$ and $\Pi$ is a satisfiable finite set of $\cdom$-constraints.
\item
$W_j\mu \dgeq^? \!\beta_j$.
We reason for any fixed $j$.
If $\beta_j =\ ?$ this results trivial.
Otherwise, we have $\Prog' \chlc \cat{\encode{W_j \dgeq^? \imath(\beta_j)}\theta}{\Pi}$ which, by condition 2.(b) and Lemma \ref{lema:expr} amounts to $\qbound{\imath(\beta_j),\imath(\tp),\imath(W_j\mu)}$ is true $\cdom$, i.e. $W_j\mu \dgeq \beta_j$.
\item
$\Prog \qchldc \cqat{B_j\sigma}{W_j\mu}{\Pi}$ for $j = 1 \ldots m$.
We reason for any fixed $j$. Let $\rho_j$ be the restriction of $\theta$ to $\varset{\Omega_j}$.
Then, $\Prog' \chlc \cat{B'_j\sigma\rho_j}{\Pi}$ follows from $\langle \theta,\Pi \rangle \in \Sol{\Prog'}{G'}$ and $B'_j\theta = B'_j\sigma\rho_j$.
Therefore, $\Prog \qchldc \cqat{B_j\sigma}{W_j\mu}{\Pi}$ follows from Theorem 5.3 provided that $\rho_j \in \Solc{\Omega_j}$.
By Lemma \ref{lema:expr} and the form of $\Omega_j$, $\rho_j \in \Solc{\Omega_j}$ holds iff $\Prog' \chlc \cat{qV\!al(w'_j\rho_j)}{\Pi}$ and $\Prog' \chlc \cat{qBound(\imath(W_j\mu),\imath(\tp),w'_j\rho_j)}{\Pi}$, which is true because $\langle \theta,\Pi \rangle \in \Sol{\Prog'}{G'}$ and construction of $\rho_j$. \mathproofbox
\end{itemize}
\end{proof*}
\subsection{Solving SQCLP Goals}
\label{sec:implemen:solving}

In this subsection we show that the transformations from the two previous subsections can be used to specify abstract goal solving systems for SQCLP and arguing about their correctness.
In the sequel we consider a given $\sqclp{\simrel}{\qdom}{\cdom}$-program $\Prog$\! and a goal $G$ for $\Prog$ whose atoms are all relevant for $\Prog$.
We also consider $\Prog' \!= \elimS{\Prog}$, $G' = \elimS{G}$, $\Prog'' \!= \elimD{\Prog'}$ and $G'' = \elimD{G'}$.
Due to the definition of both elim$_\simrel$ and elim$_\qdom$, we can assume:
$$
\begin{array}{c@{\hspace{1mm}}c@{\hspace{1mm}}l}
G &:& (~ \qat{A_i}{W_i},~ W_i \dgeq^? \!\beta_i ~)_{i = 1 \ldots m} \\
G' &:& (~ \qat{A^i_\sim}{W_i},~ W_i \dgeq^? \!\beta_i ~)_{i = 1 \ldots m} \\
G'' &:& (~ qV\!al(W_i),~ \encode{W_i \dgeq^? \imath(\beta_i)},~ qV\!al(w'_i),~ qBound(W_i, \imath(\tp), w'_i),~ A_i' ~)_{i=1 \dots m} \\
&& \mbox{ where } \transform{A_i} = (A'_i, w'_i).
\end{array}
$$
In the particular case that the $G$ is a unification problem,
all atoms $A_i,\, i = 1 \ldots m,$ are equations $t_i == s_i$
and  $G''$ is such that $w'_i$ is a fresh CLP variable $W'_i$ and $A'_i$ has the form $\sim'(t_i, s_i, W'_i)$,
for all $i = 1 \ldots m$. Unification problems will be important for some examples when discussing our practical implementation in Section \ref{sec:practical}.

Next, we present an auxiliary result.

\begin{lem}
\label{lem:soltoground}
Assume $\Prog$\!, $G$, $\Prog'$\!, $G'\!$, $\Prog''$ and $G''$\! as above.
Let $\langle \sigma', \Pi \rangle \in \Sol{\Prog''}{G''}$, $\nu \in \Solc{\Pi}$ and $\theta = \sigma'\nu$.
Then $\langle \theta, \Pi \rangle \in \Sol{\Prog''}{G''}$. Moreover, $W \theta \in \mbox{ran}(\imath)$ for every $W \in \varset{G''} \setminus \varset{G}$.\footnote{Note that $\warset{G} \subseteq \varset{G''} \setminus \varset{G}$.}
\end{lem}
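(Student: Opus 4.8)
The plan is to derive the statement from a single \emph{instantiation-closure} property of $\CHL(\cdom)$-derivability, and then to obtain the range condition from Lemma~\ref{lema:expr}. Recall that, by Definition~\ref{dfn:clp-goalsol}, the hypothesis $\langle \sigma', \Pi \rangle \in \Sol{\Prog''}{G''}$ unfolds to $\Prog'' \chlc \cat{A\sigma'}{\Pi}$ for every atom $A$ occurring in $G''$, and that $\nu \in \Solc{\Pi}$ is a ground valuation, so $\theta = \sigma'\nu$ grounds every variable. Hence it suffices to show that instantiating by $\nu$ preserves derivability over the unchanged store $\Pi$.

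First I would prove the auxiliary claim: \emph{for every CLP atom $C$ and every $\nu \in \Solc{\Pi}$, $\Prog'' \chlc \cat{C}{\Pi}$ implies $\Prog'' \chlc \cat{C\nu}{\Pi}$}, by induction on the size $\Vert T' \Vert$ of a witnessing $\CHL(\cdom)$ proof tree and a case distinction on its root rule (Figure~\ref{fig:chl}). If the root is \textbf{PA}, then $C = \kappa$ and $\Pi \model{\cdom} \kappa$; since $\nu \in \Solc{\Pi}$, the ground constraint $\kappa\nu$ is true in $\cdom$, so $\Pi \model{\cdom} \kappa\nu$ and \textbf{PA} re-derives $\cat{\kappa\nu}{\Pi}$. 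The \textbf{EA} case is identical, using that $\nu$ satisfies $t == s$ whenever $\Pi \model{\cdom} t == s$, so $t\nu$ and $s\nu$ coincide as ground terms. If the root is \textbf{DA}, then $C = p(\ntup{t'}{n})$ is inferred from a clause $p(\ntup{t}{n}) \gets B_1, \ldots, B_k$ under a substitution $\delta$, with premises $\cat{(t'_i == t_i\delta)}{\Pi}$ and $\cat{B_j\delta}{\Pi}$; I would re-apply the \emph{same} clause under $\delta\nu$. Each head equation $\cat{(t'_i\nu == t_i\delta\nu)}{\Pi}$ holds by \textbf{EA}, because $\Pi \model{\cdom} t'_i == t_i\delta$ together with $\nu \in \Solc{\Pi}$ forces $t'_i\nu = t_i\delta\nu$, and each body premise $\cat{B_j\delta\nu}{\Pi}$ follows from the induction hypothesis applied to the strictly smaller subtrees proving $\cat{B_j\delta}{\Pi}$.

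Granting the claim, the first conclusion is immediate: applying it to every atom $A$ of $G''$ turns $\Prog'' \chlc \cat{A\sigma'}{\Pi}$ into $\Prog'' \chlc \cat{A\theta}{\Pi}$ (as $A\theta = A\sigma'\nu$), and since $\theta$ is a $\cdom$-substitution and $\Pi$ is still satisfiable and finite, Definition~\ref{dfn:clp-goalsol} gives $\langle \theta, \Pi \rangle \in \Sol{\Prog''}{G''}$. For the range condition I would determine $\varset{G''} \setminus \varset{G}$ explicitly. Since $\mathrm{elim}_\simrel$ leaves the variables of $G$ unchanged and $\mathrm{elim}_\qdom$ (rules \textbf{TG}, \textbf{TEA}, \textbf{TPA}, \textbf{TDA}) only adds the former qualification variables $W_i$ and the fresh variables $w'_i$ produced by \textbf{TDA}, we have $\varset{G''} \setminus \varset{G} = \warset{G} \cup \{w'_i : w'_i \text{ a fresh variable}\}$. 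Crucially, $G''$ contains the literal $qV\!al(W_i)$ for each $i$, and the literal $qV\!al(w'_i)$ whenever $w'_i$ is a variable. Thus for any $W \in \varset{G''} \setminus \varset{G}$ we obtain $\Prog'' \chlc \cat{qV\!al(W\theta)}{\Pi}$ from $\langle \theta, \Pi \rangle \in \Sol{\Prog''}{G''}$, where $W\theta$ is ground. Because $\Prog''$ contains the two $E_\qdom$ clauses and no other occurrence of $qV\!al$ at a clause head, Lemma~\ref{lema:expr}(1) yields $W\theta \in \mbox{ran}(\imath)$, completing the proof.

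The main obstacle I anticipate is the \textbf{DA} case of the instantiation claim, where one must check that passing from the clause substitution $\delta$ to $\delta\nu$ simultaneously preserves the head-matching equations (handled by entailment in $\Pi$) and the body derivations (handled by the induction hypothesis). The \textbf{EA} and \textbf{PA} cases, the verification that $\langle \theta, \Pi \rangle$ is a genuine possible answer, and the final appeal to Lemma~\ref{lema:expr} are all routine.
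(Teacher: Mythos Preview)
Your proof is correct, but it follows a somewhat different route from the paper's. For the first conclusion, the paper does not re-prove an instantiation-closure lemma by induction on $\CHL(\cdom)$ proof trees; instead it observes that $\nu \in \Solc{\Pi}$ implies $\Pi \model{\cdom} \Pi\nu$, uses a general entailment relation between qc-atoms (Definition~3.1(4) of the companion report) to get $\cat{A''\sigma'}{\Pi} \entail{\cdom} \cat{A''\theta}{\Pi}$, and then appeals to a pre-established ``Entailment Property for Programs'' in $\clp{\cdom}$ to transfer derivability. Your direct induction is more self-contained and arguably cleaner, since it avoids importing results from the technical report; the paper's approach, on the other hand, reuses existing closure lemmas and keeps the proof very short. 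For the range condition, the paper argues from $\langle \sigma', \Pi \rangle$ rather than from $\langle \theta, \Pi \rangle$: it derives $\Prog'' \chlc \cat{qV\!al(W\sigma')}{\Pi}$, applies Lemma~\ref{lema:dec}(1) to obtain $\Pi \model{\cdom} \qval{W\sigma'}$, and then evaluates at $\nu$. Your variant---first passing to the ground $\theta$ and then invoking Lemma~\ref{lema:expr}(1)---is equally valid and slightly more direct, since $W\theta$ is ground and Lemma~\ref{lema:expr} applies immediately.
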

\begin{proof}
Consider an arbitrary atom $A''$\! occurring in $G''$\!.
Because of $\langle \sigma', \Pi \rangle \in \Sol{\Prog''}{G''}$ we have $\Prog \chlc \cat{A''\sigma'}{\Pi}$.
On the other hand, because of $\nu \in \Solc{\Pi}$ we have $\emptyset \model{\cdom} \Pi\nu$ and therefore also $\Pi \model{\cdom} \Pi\nu$.
This and Definition 3.1(4) of \cite{RR10TR} ensure $\cat{A''\sigma'}{\Pi} \entail{\cdom} \cat{A''\sigma'\nu}{\Pi}$, i.e. $\cat{A''\sigma'}{\Pi} \entail{\cdom} \cat{A''\theta}{\Pi}$.
This fact, $\Prog'' \chlc \cat{A''\sigma'}{\Pi}$ and the Entailment Property for Programs in $\clp{\cdom}$ imply $\Prog'' \chlc \cat{A''\theta}{\Pi}$.
Therefore, $\langle \theta,\Pi \rangle \in \Sol{\Prog''}{G''}$.

Consider now any $W \in \varset{G''} \setminus \varset{G}$.
By construction of $G''$\!, one of the atoms occurring in $G''$ is $qV\!al(W)$.
Then, due to $\langle \sigma'\Pi \rangle \in \Sol{\Prog''}{G''}$ we have $\Prog'' \chlc \cat{qV\!al(W\sigma')}{\Pi}$.
Because of Lemma \ref{lema:dec}(1) this implies $\Pi \model{\cdom} \qval{W\sigma'}$, i.e. $\Solc{\Pi} \subseteq \Solc{\qval{W\sigma'}}$.
Since $\nu \in \Solc{\Pi}$ we get $\nu \in \Solc{\qval{W\sigma'}}$, i.e. $W\sigma'\nu \in \mbox{ran}(\imath)$.
Since $W\sigma'\nu = W\theta$, we are done.
\end{proof}

Now, we  can explain how to define an abstract goal solving system for SQCLP from a given abstract goal solving system for CLP.

\begin{defn}
\label{dfn:SQCLPGSS}
Let $\mathcal{CA''}$ be an abstract goal solving system for $\clp{\cdom}$ (in the sense of Definition \ref{dfn:clp-goalsolsys}).
Then we define $\mathcal{CA}$ as an abstract goal solving system for $\sqclp{\simrel}{\qdom}{\cdom}$
(in the sense of Definition \ref{dfn:goalsolsys}) that works as follows:
\begin{enumerate}
\item
Given a goal $G$ for the $\sqclp{\simrel}{\qdom}{\cdom}$-program $\Prog$, consider $\Prog'$\!, $G'$\!, $\Prog''$\! and $G''$ as explained at the beginning of the subsection.
\item
For each  $\langle \sigma',\Pi \rangle \in \mathcal{CA''}_{\Prog''}(G'')$ and for any $\nu \in \Solc{\Pi}$,
let $\langle \sigma, \mu, \Pi \rangle \in \mathcal{CA}_{\Prog}(G)$,
where $\theta = \sigma'\nu$, $\sigma = \theta{\upharpoonright}\varset{G}$ and
$\mu = \theta\imath^{-1}{\upharpoonright}\warset{G}$.
Note that $\mu$ is well-defined thanks to Lemma \ref{lem:soltoground}.
\item
All the computed answers belonging to  $\mathcal{CA}_{\Prog}(G)$ are obtained as described in the previous item. \mathproofbox
\end{enumerate}
\end{defn}

The next theorem ensures that $\mathcal{CA}$ is correct provided that $\mathcal{CA''}$ is also correct.
The proof relies on the semantic results of the two previous subsections.

\begin{thm}[Correct Abstract Goal Solving Systems for SQCLP]
\label{thm:SQCLP-AGSS:correctness}
Let $\mathcal{CA}$ be obtained from $\mathcal{CA''}$ as in the previous definition.
Assume that $\mathcal{CA''}$ is correct as specified in Definition \ref{dfn:clp-goalsolsys}(3).
Then $\mathcal{CA}$ is correct as specified in  Definition \ref{dfn:goalsolsys}(4).
\end{thm}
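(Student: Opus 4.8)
The plan is to prove the two components of correctness from Definition~\ref{dfn:goalsolsys}(4), namely \emph{soundness} and \emph{weak completeness}, separately; in each direction the argument threads a computed answer through the two transformation layers by combining Theorem~\ref{thm:SQCLP2QCLP:goals}, Theorem~\ref{thm:QCLP2CLP:goals}, Lemma~\ref{lem:soltoground} and Definition~\ref{dfn:SQCLPGSS}. Throughout I would use the structural facts that $\elimS$ renames only the equality symbol, so that $\varset{G} = \varset{G'}$ and $\warset{G} = \warset{G'}$, whereas $\elimD$ merely adds fresh CLP variables, so that $\varset{G} \subseteq \varset{G''}$ and $\warset{G} \subseteq \varset{G''}$.

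For \emph{soundness}, I would start with an arbitrary $\langle \sigma, \mu, \Pi \rangle \in \mathcal{CA}_{\Prog}(G)$. By Definition~\ref{dfn:SQCLPGSS} it stems from some $\langle \sigma', \Pi \rangle \in \mathcal{CA''}_{\Prog''}(G'')$ and some $\nu \in \Solc{\Pi}$, with $\theta = \sigma'\nu$, $\sigma = \theta{\upharpoonright}\varset{G}$ and $\mu = \theta\imath^{-1}{\upharpoonright}\warset{G}$. Since $\mathcal{CA''}$ is sound, $\langle \sigma', \Pi \rangle \in \Sol{\Prog''}{G''}$, and Lemma~\ref{lem:soltoground} upgrades this to $\langle \theta, \Pi \rangle \in \Sol{\Prog''}{G''}$ together with $W\theta \in \mbox{ran}(\imath)$ for every $W \in \varset{G''} \setminus \varset{G}$. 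I then check that $\theta$ satisfies conditions (a)--(c) of Theorem~\ref{thm:QCLP2CLP:goals}: (a) is immediate from $\sigma = \theta{\upharpoonright}\varset{G}$; (b) holds because $\warset{G} \subseteq \varset{G''} \setminus \varset{G}$ guarantees $W\theta \in \mbox{ran}(\imath)$, so that $\mu\imath = \theta{\upharpoonright}\warset{G}$; and (c) follows from the same range property. Theorem~\ref{thm:QCLP2CLP:goals} then places $\langle \sigma, \mu, \Pi \rangle$ in $\Sol{\Prog'}{G'}$, and Theorem~\ref{thm:SQCLP2QCLP:goals} identifies this set with $\Sol{\Prog}{G}$, yielding $\mathcal{CA}_{\Prog}(G) \subseteq \Sol{\Prog}{G}$.

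For \emph{weak completeness}, I would take a ground solution $gsol = \langle \eta, \rho, \emptyset \rangle \in \GSol{\Prog}{G}$. By Theorem~\ref{thm:SQCLP2QCLP:goals} it lies in $\Sol{\Prog'}{G'}$, so Theorem~\ref{thm:QCLP2CLP:goals} furnishes a $\theta^\ast$ with $\langle \theta^\ast, \emptyset \rangle \in \Sol{\Prog''}{G''}$ satisfying $\theta^\ast =_{\varset{G}} \eta$, $\theta^\ast =_{\warset{G}} \rho\imath$, and $W\theta^\ast \in \mbox{ran}(\imath)$ for the auxiliary variables. As $\eta$, $\rho$ and all $\imath$-images are ground, $\theta^\ast$ is ground on $\varset{G''}$, so $\langle \theta^\ast, \emptyset \rangle \in \GSol{\Prog''}{G''}$. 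Weak completeness of $\mathcal{CA''}$ then supplies $\langle \sigma', \Pi \rangle \in \mathcal{CA''}_{\Prog''}(G'')$ subsuming it, i.e. some $\nu \in \Solc{\Pi}$ with $\theta^\ast =_{\varset{G''}} \sigma'\nu$. Setting $\theta = \sigma'\nu$, Definition~\ref{dfn:SQCLPGSS} produces $ans = \langle \sigma, \mu, \Pi \rangle \in \mathcal{CA}_{\Prog}(G)$.

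It then remains to verify that $ans$ subsumes $gsol$ per Definition~\ref{dfn:goalsol}(4). Using $\theta =_{\varset{G''}} \theta^\ast$ and $\theta^\ast =_{\warset{G}} \rho\imath$, each $W_i\theta = \imath(W_i\rho)$, whence $W_i\mu = W_i\rho$ and trivially $W_i\mu \dgeq W_i\rho$. For the data variables, each $X\sigma = X\theta = X\theta^\ast = X\eta$ is ground, so $X\eta = X\sigma\nu$ holds with witness $\nu \in \Solc{\Pi}$, giving the required equalities over $\varset{G}$. Hence $ans$ subsumes $gsol$, establishing weak completeness and therefore correctness. I expect the main obstacle to be the careful bookkeeping of the three variable classes---the original data variables, the qualification variables, and the fresh CLP variables introduced by $\elimD$---since the entire argument pivots on the range condition $W\theta \in \mbox{ran}(\imath)$ holding exactly for the auxiliary variables, precisely as delivered by Lemma~\ref{lem:soltoground} and clause (c) of Theorem~\ref{thm:QCLP2CLP:goals}.
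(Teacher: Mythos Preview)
Your proposal is correct and follows essentially the same approach as the paper: both parts thread the answer through Theorem~\ref{thm:SQCLP2QCLP:goals}, Theorem~\ref{thm:QCLP2CLP:goals} and Lemma~\ref{lem:soltoground} in the same order, with the same verification of conditions (a)--(c) and of subsumption. The only cosmetic difference is in the final subsumption check for weak completeness: the paper derives $X\eta = X\sigma\nu$ via the idempotency $\nu = \nu\nu$ of valuations, whereas you observe directly that $X\sigma = X\eta$ is ground so that $X\sigma\nu = X\sigma$; both are valid.
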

\begin{proof*}
We separately prove that $\mathcal{CA}$ is {\em sound} and {\em weakly complete}.

\smallskip\noindent
--- $\mathcal{CA}$ {\em is sound.}
Assume $\langle \sigma, \mu, \Pi \rangle \in \mathcal{CA}_{\Prog}(G)$.
We must prove that $\langle \sigma, \mu, \Pi \rangle \in \Sol{\Prog}{G}$.
Because of Definition \ref{dfn:SQCLPGSS} there exist $\langle \sigma', \Pi \rangle \in \mathcal{CA''}_{\Prog''}(G'')$ and $\nu \in \Solc{\Pi}$
such that $\sigma = \theta{\upharpoonright}\varset{G}$ and $\mu = \theta\imath^{-1}{\upharpoonright}\warset{G}$ with $\theta = \sigma'\nu$.
By the soundness of $\mathcal{CA''}$ we get  $\langle \sigma', \Pi \rangle \in \Sol{\Prog''}{G''}$.
Moreover, because of Lemma \ref{lem:soltoground} we have $\langle \theta, \Pi \rangle \in \Sol{\Prog''}{G''}$ and $W\theta \in \mbox{ran}(\imath)$ for every $W \in \varset{G''} \setminus \varset{G}$.
Note that:
\begin{itemize}
\item
$\theta =_{\varset{G'}} \sigma$.
This follows from $\varset{G'} = \varset{G}$ and the construction of $\sigma$.
\item
$\theta =_{\warset{G'}} \mu\imath$.
This follows from $\warset{G'} = \warset{G}$ and $\theta =_{\warset{G}} \mu\imath$, that is obvious from the construction of $\mu$.
\item
$W\theta \in \mbox{ran}(\imath)$ for each $W \in \varset{G''} \setminus (\varset{G'} \cup \warset{G'})$.
This is a consequence of Lemma \ref{lem:soltoground} since $\varset{G''} \setminus (\varset{G'} \cup \warset{G'}) \subseteq \varset{G''} \setminus \varset{G'}$ and $\varset{G'} = \varset{G}$.
\end{itemize}
From the previous items and Theorem \ref{thm:QCLP2CLP:goals} we get $\langle \sigma, \mu, \Pi \rangle \in \Sol{\Prog'}{G'}$, which trivially implies $\langle \sigma, \mu, \Pi \rangle \in \Sol{\Prog}{G}$ because of Theorem \ref{thm:SQCLP2QCLP:goals}.

\smallskip\noindent
--- $\mathcal{CA}$ {\em is weakly complete.}
Let $\langle \eta, \rho, \emptyset \rangle \in \GSol{\Prog}{G}$ be  a ground solution for $G$ w.r.t. $\Prog$.
We must prove that it is subsumed by some computed answer $\langle \sigma, \mu, \Pi \rangle \in \mathcal{CA}_{\Prog}(G)$.
By Theorem \ref{thm:SQCLP2QCLP:goals} we have that $\langle \eta, \rho, \emptyset \rangle$ is also a ground solution for $G'$ w.r.t. $\Prog'$.
Then by Theorem \ref{thm:QCLP2CLP:goals} we get $\langle \eta', \emptyset \rangle \in \Sol{\Prog''}{G''}$ for some $\eta'$ such that
\begin{itemize}
  \item (1) $\eta' =_{\varset{G'}} \eta$,
  \item (2) $\eta' =_{\warset{G'}} \rho\imath$ and hence $\eta'(\imath^{-1}) =_{\warset{G'}} \rho$, and
  \item $W\eta' \in \mbox{ran}(\imath)$ for each $W \in \varset{G''} \setminus (\varset{G'} \cup \warset{G'})$ (i.e. $w'_i\eta' \in \mbox{ran}(\imath)$ for each $i = 1 \ldots m$ such that $w'_i$ is a variable).
\end{itemize}
By construction of $\eta'$, it is clear that $\langle \eta', \emptyset \rangle$ is ground.
Now, by the weak completeness of $\mathcal{CA''}$, there is some computed answer $\langle \sigma',\Pi \rangle \in \mathcal{CA''}_{\Prog''}(G'')$
subsuming $\langle \eta',\emptyset \rangle$ in the sense of Definition \ref{dfn:clp-goalsol}(5), therefore satisfying:
\begin{itemize}
  \item (3) there is some $\nu \in \Solc{\Pi}$, such that
  \item (4) $\eta' =_{\varset{G''}} \sigma'\nu$.
\end{itemize}

Because of Definition \ref{dfn:SQCLPGSS} one can build a computed answer
$\langle \sigma, \mu, \Pi \rangle \in \mathcal{CA}_{\Prog}(G)$ as follows:
\begin{itemize}
  \item (5) $\sigma = \sigma'\nu{\upharpoonright}\varset{G}$
  \item (6) $\mu = \sigma'\nu\imath^{-1}{\upharpoonright}\warset{G}$
\end{itemize}

We now check that $\langle \sigma, \mu, \Pi \rangle$ subsumes $\langle \eta, \rho, \emptyset \rangle$ in the sense of Definition \ref{dfn:goalsol}(4):
\begin{itemize}
  \item $W_i\rho \dleq W_i\mu$ and even $W_i\rho = W_i\mu$ because:
  $$W_i\rho =_{(2)} W_i\eta'(\imath^{-1}) =_{(4)} W_i\sigma'\nu(\imath^{-1}) =_{(6)} W_i\mu \enspace .$$
  \item $\nu \in \Solc{\Pi}$ by (3) and, moreover, for any $X \in \varset{G}$:
  $$ X\eta =_{(1)} X\eta' =_{(4)} X\sigma'\nu =_{(\dagger)} X\sigma'\nu\nu =_{(5)} X\sigma\nu$$
  therefore $\eta =_{\varset{G}} \sigma\nu$.

  The step ($\dagger$) is justified because $\nu \in \mbox{Val}_\cdom$ implies $\nu = \nu\nu$. \mathproofbox
\end{itemize}
\end{proof*}

As an immediate consequence of Theorem \ref{thm:SQCLP-AGSS:correctness} and Lemma \ref{lema:flexrestr}, we obtain:

\begin{cor}[Flexibly Correct Abstract Goal Solving Systems for SQCLP]
\label{thm:SQCLP-AGSS:flex-correctness}
Let $\mathcal{CA}$ be obtained from $\mathcal{CA''}$ as in the Definition \ref{dfn:SQCLPGSS}.
Assume that $\mathcal{CA''}$ is correct as specified  in Definition \ref{dfn:clp-goalsolsys}(3).
Then  any flexible restriction $\mathcal{FCA}$ of  $\mathcal{CA}$ is correct in the flexible sense
as specified  in Definition \ref{dfn:goalsolsys}(5). \mathproofbox
\end{cor}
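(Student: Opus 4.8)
The plan is to obtain the conclusion by composing the two results explicitly cited just before the corollary, namely Theorem \ref{thm:SQCLP-AGSS:correctness} and Lemma \ref{lema:flexrestr}. No fresh argument about goal solving, proximity, or the transformations is needed: all the real work has already been done in proving those two statements, and the corollary is purely a matter of threading their hypotheses and conclusions together.

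First I would invoke Theorem \ref{thm:SQCLP-AGSS:correctness}. By hypothesis, $\mathcal{CA''}$ is correct in the sense of Definition \ref{dfn:clp-goalsolsys}(3), and $\mathcal{CA}$ is constructed from $\mathcal{CA''}$ exactly as prescribed in Definition \ref{dfn:SQCLPGSS}. These are precisely the premises of Theorem \ref{thm:SQCLP-AGSS:correctness}, so the theorem yields that $\mathcal{CA}$ is correct as specified in Definition \ref{dfn:goalsolsys}(4); that is, $\mathcal{CA}$ is both sound and weakly complete. With the correctness of $\mathcal{CA}$ established, I would then apply Lemma \ref{lema:flexrestr}, whose statement is tailored to this exact situation: any flexible restriction of a correct abstract goal solving system for $\sqclp{\simrel}{\qdom}{\cdom}$ is correct in the flexible sense. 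Taking $\mathcal{FCA}$ to be the given flexible restriction of $\mathcal{CA}$, the lemma delivers that $\mathcal{FCA}$ is sound and weakly complete in the flexible sense, which is exactly correctness in the flexible sense in the sense of Definition \ref{dfn:goalsolsys}(5). This finishes the proof.

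Since each of the two steps is a direct instantiation of an already-proven result, there is no substantive obstacle to overcome. The only care required is a bookkeeping check that the hypotheses line up: that $\mathcal{CA}$ is genuinely the system built from $\mathcal{CA''}$ according to Definition \ref{dfn:SQCLPGSS}, so that Theorem \ref{thm:SQCLP-AGSS:correctness} applies, and that $\mathcal{FCA}$ genuinely satisfies the two defining conditions of Definition \ref{dfn:flexrestr} making it a flexible restriction of $\mathcal{CA}$, so that Lemma \ref{lema:flexrestr} applies. Both facts are supplied directly by the corollary's own hypotheses, so the conclusion follows immediately by composing the two cited results.
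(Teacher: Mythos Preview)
Your proposal is correct and matches the paper's approach exactly: the paper states the corollary as ``an immediate consequence of Theorem \ref{thm:SQCLP-AGSS:correctness} and Lemma \ref{lema:flexrestr}'' with no further proof, and your two-step composition of those results is precisely that immediate consequence spelled out.
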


\section{A Practical Implementation}
\label{sec:practical}

This section is devoted to the more practical aspects of the SQCLP programming scheme.
We present a {\tt Prolog}-based prototype system that relies on the transformation techniques from Section \ref{sec:implemen} 
and supports several useful SQCLP instances.
The presentation is developed in three subsections. 
Subsection \ref{sec:practical:SQCLP} discusses in some detail how to bridge the gap between the abstract goal solving systems for SQCLP discussed in Subsection  \ref{sec:implemen:solving} and a practical {\tt Prolog}-based  implementation.
Subsection \ref{sec:practical:prototype} gives a user-oriented presentation of our prototype implementation, 
explaining how to write programs and how to solve goals.
Finally, in Subsection \ref{sec:practical:efficiency}  we study  the unavoidable overload caused by the implementation of  qualification and proximity relations in our system. The overload is shown in experimental results on the execution of some SQCLP programs which make only a trivial use of qualification and proximity. 

\subsection{SQCLP over a CLP Prolog System}
\label{sec:practical:SQCLP}

Our aim is to implement a goal solving system for SQCLP on top of an available  CLP {\tt Prolog} system,
taking the definitions and results from Subsection \ref{sec:implemen:solving} as a theoretical guideline.
Therefore, given a $\sqclp{\simrel}{\qdom}{\cdom}$-program $\Prog$ and a goal $G$ for $\Prog$,
the following steps should be carried out:
\begin{enumerate}
\item[(i)]
Apply the transformation elim$_\simrel$ specified in Definition \ref{def:sqclptransform}, obtaining the $\qclp{\qdom}{\cdom}$ program $\Prog' \!= \elimS{\Prog}$ and the $\qclp{\qdom}{\cdom}$ goal $G' \!= \elimS{G}$, where $G$ and $G'$ are as displayed at the beginning of Section \ref{sec:implemen:solving}, $\Prog'$ is of the form $EQ_\simrel \cup \hat{\Prog}_\simrel$, $EQ_\simrel$ is obtained following Definition \ref{def:EQ} and $\hat{\Prog}_\simrel$ is obtained following Definition \ref{def:sqclptransform}(3,2).
\item[(ii)]
Apply the transformation elim$_\qdom$ specified in Definition \ref{def:qclptransform}, obtaining the $\clp{\cdom}$-program $\Prog'' \!= \elimD{\Prog'}$ and the $\clp{\cdom}$-goal $G'' \!= \elimD{G'}$, where $G'$ and $G''$ (obtained from $G'$ by the goal transformation rules shown in Figure \ref{fig:transformation}) are as displayed at the beginning of Section \ref{sec:implemen:solving} and $\Prog''$ is built  according to Definition \ref{def:qclptransform}, by adding  the two clauses of the program $E_\qdom$ to the result of applying the program transformation rules shown in Figure \ref{fig:transformation} to the program $\Prog'$. In particular, $\Prog''$ includes as a subset the set $EQ_\simrel^{'}$ of $\clp{\cdom}$-clauses obtained by applying  the transformation rules from Figure \ref{fig:transformation}
to the set of $\qclp{\qdom}{\cdom}$-clauses $EQ_\simrel$.
\item[(iii)]
Use the available CLP {\tt Prolog} system to compute answers for the CLP goal $G''$ by executing the CLP program $\Prog''$.
\end{enumerate}
Following these steps literally would lead to a set of computed answers representing the behaviour of the abstract goal solving system $\mathcal{CA}$ from Definition \ref{dfn:SQCLPGSS}%
\footnote{Each answer $\langle \sigma',\Pi \rangle$ produced by the CLP system and shown to the user in step (iii)  serves as a compact representation of all answers of the form $\langle \sigma, \mu, \Pi \rangle \in \mathcal{CA}_{\Prog}(G)$, where $\theta = \sigma'\nu$, $\sigma = \theta{\upharpoonright}\varset{G}$, $\mu = \theta\imath^{-1}{\upharpoonright}\warset{G}$, and $\nu \in \Solc{\Pi}$ ranges over the solutions of $\Pi$.},
whose correctness has been proved in Theorem \ref{thm:SQCLP-AGSS:correctness}.
Therefore, the resulting implementation would be correct---i.e. both sound and weakly complete---in the sense of Definition \ref{dfn:goalsolsys}, except for the unavoidable failures in completeness due to {\tt Prolog}'s computation strategy and the incompleteness of the constraint solvers provided by practical CLP {\tt Prolog} systems.

However, our {\tt Prolog}-based implementation---presented in Subsection  \ref{sec:practical:prototype}---differs from the literal application of  step (ii)  in some aspects  concerning an optimized implementation of the CLP clauses in the sets $E_\qdom$ and $EQ_\simrel^{'}$.
In the rest of this subsection we  explain the optimizations and we discuss their influence on the correctness (i.e. soundness and weak completeness) of goal solving.
Subsections \ref{sec:practical:SQCLP:optimize1} and \ref{sec:practical:SQCLP:optimize2} below present some straightforward optimizations of the CLP clauses in $E_\qdom$ and $EQ_\simrel^{'}$, respectively, while Subsection \ref{sec:practical:SQCLP:eqsimrel} discusses three possible {\tt Prolog} implementations of the optimized set $EQ_\simrel^{'}$ obtained in Subsection \ref{sec:practical:SQCLP:optimize2}: a na\"ive one---called {\bf (A)}---that causes very inefficient computations and is not supported by our system;
and two optimized ones---called {\bf (B)} and {\bf (C)}---with a better computational behaviour, which are supported by our system.

\subsubsection{Optimization of the $E_\qdom$ clauses}
\label{sec:practical:SQCLP:optimize1}

Here we present a straightforward optimization of $E_\qdom$ that does not modify the set of computed answers, thus preserving correctness of goal solving.
As explained at the beginning of Section \ref{sec:implemen:QCLP2CLP}, the set $E_\qdom$ contains CLP clauses for
two predicates {\it qVal} (unary) and {\it qBound} (ternary) which allow to represent qualification values from $\qdom$ and the behaviour of $\qdom$'s attenuation operator $\circ$ by means of $\cdom$-constraints.
Recall Example \ref{exmp:eqdom-clauses}, showing the clauses in  $E_\qdom$ for three significative choices of $\qdom$,
namely $\U$, $\W$ and $\U{\otimes}\W$.

Our  prototype system for SQCLP programming  supports SQCLP  instances of the form
$\sqclp{\simrel}{\qdom}{\rdom}$, where $\rdom$ is the real constraint domain and $\qdom$ is any qualification domain that can be built from $\B$, $\U$ and $\W$ by means of the strict cartesian product operation $\otimes$.
Instead of using a different set $E_\qdom$ for each choice of $\qdom$ supported by the system,
our  implementation uses a single set of {\tt Prolog} clauses for  two predicates {\tt qVal} (binary) and {\tt qBound} (quaternary), whose additional argument w.r.t. {\it qVal} and {\it qBound} is used to encode a representation of $\qdom$
in the following way: $\B$, $\U$ and $\W$ are encoded as {\tt b}, {\tt u} and {\tt w}, respectively;
while $\qdom_1 \otimes \qdom_2$ is encoded as an ordered pair built from the encodings of  $\qdom_1$ and $\qdom_2$.
The set of  {\tt Prolog} clauses for {\tt qVal} and {\tt qBound} used in our implementation is as follows%
\footnote{The semantic correctness of these clauses is obvious from the definition of $\B$, $\U$, $\W$ and $\otimes$;
see \cite{RR10TR}  for details.}:
\begin{center}
\footnotesize\it
\renewcommand{\arraystretch}{1.5}
\begin{tabular}{rp{11cm}}
\tiny$E_1$&\tt qVal(b,1).\\
\tiny$E_2$&\tt qVal(u,X) :- \{ X > 0, X =< 1 \}.\\
\end{tabular}
\end{center}
\begin{center}
\footnotesize\it
\renewcommand{\arraystretch}{1.5}
\begin{tabular}{rp{11cm}}
\tiny$E_3$&\tt qVal(w,X) :- \{ X > 0 \}.\\
\tiny$E_4$&\tt qVal((D$_1$,D$_2$),(X$_1$,X$_2$)) :- qVal(D$_1$,X$_1$), qVal(D$_2$,X$_2$).\\[2mm]
\tiny$E_5$&\tt qBound(b,1,1,1).\\
\tiny$E_6$&\tt qBound(u,X,Y,Z) :- \{ X =< Y * Z \}.\\
\tiny$E_7$&\tt qBound(w,X,Y,Z) :- \{ X >= Y + Z \}.\\
\tiny$E_8$&\tt qBound((D$_1$,D$_2$),(X$_1$,X$_2$),(Y$_1$,Y$_2$),(Z$_1$,Z$_2$)) :- qBound(D$_1$,X$_1$,Y$_1$,Z$_1$),\\
&\tt$\qquad$ qBound(D$_2$,X$_2$,Y$_2$,Z$_2$).\\
\end{tabular}
\end{center}
Therefore, calls such as ${\it qVal}(X)$ and ${\it qBound}(X,Y,Z)$ to the $E_\qdom$ predicates are implemented
as ${\it qVal}(b,X)$ and ${\it qBound}(b,X,Y,Z)$, if $\qdom = \B$;
as ${\it qVal}(u,X)$ and ${\it qBound}(u,X,Y,Z)$, if $\qdom = \U$;
as ${\it qVal}(w,X)$ and  ${\it qBound}(w,X,Y,Z)$, if $\qdom = \W$;
as ${\it qVal}((u,w),X)$ and ${\it qBound}((u,w),X,Y, Z)$, if $\qdom = \U \otimes \W$; etc.

In order to simplify the presentation, in the rest of Subsection \ref{sec:practical:SQCLP} we will omit the optimization just discussed,  considering $E_\qdom$ as a set of CLP clauses for a unary predicate {\it qVal} and a ternary predicate {\it qBound}
corresponding to some fixed choice of $\qdom$.

\subsubsection{Optimization of the $EQ_\simrel^{'}$ clauses}
\label{sec:practical:SQCLP:optimize2}

Now we present a simple optimization of the CLP clauses in $EQ_\simrel^{'}$. Recall that $EQ_\simrel^{'}$ is the set of  $\clp{\cdom}$-clauses obtained by applying the transformation rules in Figure \ref{fig:transformation} to the set $EQ_\simrel$ of $\qclp{\qdom}{\cdom}$-clauses built according to Definition \ref{def:EQ}. Therefore, $EQ_\simrel^{'}$ consists of CLP clauses of the following forms:
\begin{center}
\footnotesize\it
\renewcommand{\arraystretch}{1.5}
\begin{tabular}{rp{11cm}}
\tiny EQ$_1$ & $\sim'$(X, Y, W) $\gets$ qVal(W), X==Y\\
\tiny EQ$_2$ & $\sim'$(u, u\/$'$\!, W) $\gets$ qVal(W), qVal(W\/$'$\!), qBound(W, $\tp$, \!W\/$'$\!), pay\/$'_\lambda$(W\/$'$\!) \\
\tiny EQ$_3$ & $\sim'$(c(\/$\ntup{X}{n}$), c\/$'$\!(\/$\ntup{Y}{\!n}$), W) $\gets$ qVal(W),\\
& $\quad$ qVal(W\/$'$\!), qBound(W, $\tp$, \!W\/$'$\!), pay\/$'_\lambda$(W\/$'$\!),\\
& $\quad$ qVal(W$_1$), qBound(W, $\tp$, \!W$_1$), $\sim'$(X$_1$, Y$_1$, \!W$_1$),\\
& $\quad$ \ldots\\
& $\quad$ qVal(W$_n$), qBound(W, $\tp$, \!W$_n$), $\sim'$(X$_n$, Y$_n$, \!W$_n$)\\
\tiny EQ$_4$ & pay\/$'_\lambda$(W) $\gets$ qVal(W), qBound(W, $\lambda$, $\tp$) \\
\end{tabular}
\end{center}
where clauses of the form $EQ_2$ are one for each $u,u' \in B_\cdom$ such that $\simrel(u,u') = \lambda \neq \bt$; $EQ_3$ are one for each $c,c' \in DC^n$ such that $\simrel(c,c') = \lambda \neq \tp$ (including the case $c=c'$\!, $\simrel(c,c') = \tp \neq \bt$); and $EQ_4$ are one for each $pay_\lambda$ such that there exist $x,y\in S$ satisfying $\simrel(x,y)=\lambda\neq\bt$.

By unfolding the calls to predicates {\it pay$_\lambda$} occurring in the bodies of clauses $EQ_2$ and $EQ_3$
with respect to the clauses $EQ_4$ defining {\it pay$_\lambda$}, all the occurrences of  {\it pay$_\lambda$}---including clauses $EQ_4$ themselves---can be removed. Moreover, the calls to the predicates
{\it qVal} and {\it qBound} occurring in the results of unfolding clauses $EQ_2$ and $EQ_3$ can be further simplified.
Let us illustrate this process with a clause of the form $EQ_2$. The original clause is:
\begin{flushleft}\small\it
$\quad$ $\sim'$(u, u\/$'$\!, \!W) $\gets$ qVal(W), qVal(W\/$'$\!), qBound(W,\,$\tp$,W\/$'$\!), pay\/$'_\lambda$(W\/$'$\!)
\end{flushleft}
which can be transformed into the equivalent clause:
\begin{flushleft}\small\it
$\quad$ $\sim'$(u, u\/$'$\!, \!W) $\gets$ qVal(W), qVal(W\/$'$\!), qBound(W,\,$\tp$,W\/$'$\!), qVal(W\/$'$\!), qBound(W\/$'$\!,\,$\lambda$,\,$\tp$)
\end{flushleft}
by unfolding the predicate call {\it pay\/$'_\lambda$(W\/$'$\!)\/} occurring in its body. Next, removing one of the two repeated predicate calls {\it qVal(W\/$'$\!)\/} in the new body yields the equivalent clause:
\begin{flushleft}\small\it
$\quad$ $\sim'$(u, u\/$'$\!, \!W) $\gets$ qVal(W), qVal(W\/$'$\!), qBound(W,\,$\tp$,W\/$'$\!), qBound(W\/$'$\!,\,$\lambda$,\,$\tp$)
\end{flushleft}
Observing the last clause we note:
\begin{itemize}
\item
The body is logically equivalent to the following formulation:
\begin{flushleft}\small\it
qVal(W) $\land$ $\exists$W\/$'$( qVal(W\/$'$\!) $\land$ qBound(W,\,$\tp$,W\/$'$\!) $\land$ qBound(W\/$'$\!,\,$\lambda$,\,$\tp$) )
\end{flushleft}
\item
The second conjunt above encodes the statement
\begin{flushleft}\small\it
$\exists$W\/$'$( $W' \in \aqdom$ $\land$ $W \dleq \tp \circ W'$ $\land$ $W' \dleq \lambda \circ \tp$ )
\end{flushleft}
Due to the transitivity of $\dleq$, this is equivalent to  $W \dleq \lambda$
and can be encoded as ${\it qBound}(W,\tp,\lambda)$.
\end{itemize}
Therefore, the last clause is equivalent to the following optimized form:
\begin{flushleft}\small\it
$\quad$ $\sim'$(u, u\/$'$\!, \!W) $\gets$ qVal(W), qBound(W,\,$\tp$,\,$\lambda$).
\end{flushleft}
Performing a similar transformation for clauses $EQ_3$ and removing clauses $EQ_4$ leads to an optimized version of the set $EQ_\simrel^{'}$ consisting of clauses of the following forms:
\begin{center}
\footnotesize\it
\renewcommand{\arraystretch}{1.5}
\begin{tabular}{rp{11cm}}
\tiny EQ$_1$ & $\sim'$(X, Y, W) $\gets$ qVal(W), X==Y\\
\tiny EQ$_2$ & $\sim'$(u, u\/$'$\!, W) $\gets$ qVal(W), qBound(W, $\tp$, $\lambda$) \\
\tiny EQ$_3$ & $\sim'$(c(\/$\ntup{X}{n}$), c\/$'$(\/$\ntup{Y}{\!n}$), W) $\gets$ qVal(W), qBound(W, $\tp$, $\lambda$),\\
& $\quad$ qVal(W$_1$), qBound(W, $\tp$, \!W$_1$), $\sim'$(X$_1$, Y$_1$, \!W$_1$),\\
& $\quad$ \ldots\\
& $\quad$ qVal(W$_n$), qBound(W, $\tp$, \!W$_n$), $\sim'$(X$_n$, Y$_n$, \!W$_n$)\\
\end{tabular}
\end{center}
Note that a similar optimization---unfolding of calls to predicates {\it pay$_\lambda$} followed by simplification of calls to predicates {\it qVal} and {\it qBound}---can be done for all those clauses in $\Prog''$ which include calls to predicates {\it pay}$_\lambda$ in their bodies.
The same is true for goals. All $\clp{\cdom}$-goals $G''$ occurring in subsequent examples will be displayed in the optimized form.

Clearly, the optimizations described in this subsection do not modify the set of computed answers.
Therefore, correctness of goal solving is preserved.

\subsubsection{{\tt Prolog} Implementation of the optimized $EQ_\simrel^{'}$ clauses}
\label{sec:practical:SQCLP:eqsimrel}

The optimized version of $EQ_\simrel^{'}$ displayed near the end of the previous subsection just consists of clauses for the predicate $\sim'$.
In the sequel, the notation $EQ_\simrel^{'}$ will refer to this optimized version.
We will consider in turn  three possible {\tt Prolog} implementations of the $EQ_\simrel^{'}$ clauses, called {\bf (A)},  {\bf (B)} and {\bf (C)}.
We will give reasons for discarding implementation {\bf (A)}---not supported by our prototype system---and we will discuss the properties of implementations {\bf (B)} and {\bf (C)}---both supported by our system---concerning correectness of goal solving.
At some points, our  discussion will refer to Example \ref{exmp:eqs-implementation}.

The {\tt Prolog} code displayed below is a na\"ive implementation of $EQ_\simrel^{'}$.
Its structure does not directly resemble the  clauses in the set $EQ'_S$, but it serves
as  a first step towards the more practical implementations {\bf (B)} and {\bf (C)} discussed below.
\begin{center}
\footnotesize\it
\renewcommand{\arraystretch}{1.4}
\begin{tabular}{rp{11cm}}
\multicolumn{2}{@{\hspace{0mm}}l}{\normalfont\normalsize {\bf (A)} Na\"ive implementation of $\sim'$.}\\
\tiny $S_1$&\tt $\sim'$(X,Y,W) :- var(X), var(Y), $\sim'_v$(X,Y,W).\\
\tiny $S_2$&\tt $\sim'$(X,Y,W) :- var(X), nonvar(Y), $\sim'_v$(X,Y,W).\\
\tiny $S_3$&\tt $\sim'$(X,Y,W) :- nonvar(X), var(Y), $\sim'_v$(X,Y,W).\\
\tiny $S_4$&\tt $\sim'$(X,Y,W) :- nonvar(X), nonvar(Y), $\sim'_c$(X,Y,W).\\
\tiny $V_1$&\tt $\sim'_v$(X,Y,W) :- qVal(W), X = Y.\\
\tiny $V_2$&\tt $\sim'_v$(X,Y,W) :- $\sim'_c$(X,Y,W).\\
\tiny $C_1$&\tt $\sim'_c$(u,u',W) :- qVal(W), qBound(W,$\tp$,$\lambda$).\\
\tiny $C_2$&\tt $\sim'_c$(c(X$_1$,..,X$_n$),c'(Y$_1$,..,Y$_n$),W) :- qVal(W), qBound(W,$\tp$,$\lambda$),\\
&\tt $\qquad$ qVal(W$_1$), qBound(W,$\tp$,W$_1$), $\sim'$(X$_1$,Y$_1$,W$_1$),\\
&\tt $\qquad$ \ldots\\
&\tt $\qquad$ qVal(W$_n$), qBound(W,$\tp$,W$_n$), $\sim'$(X$_n$,Y$_n$,W$_n$).\\
\end{tabular}
\end{center}
where clauses of the form $C_1$ are one for each $u,u'\in B_\cdom$ such that $\simrel(u,u') = \lambda\neq\bt$,
and clauses of the form $C_2$ are one for each $c,c' \in DC^n$ such that $\simrel(c,c') = \lambda\neq\bt$
(including the case $c=c'$\!, $\simrel(c,c')=\tp\neq\bt$).

We claim that both {\bf (A)} and  $EQ'_S$ compute the same  solutions. In order to understand
that, consider the behaviour of {\bf (A)} when an atom of the form {\tt $\sim'$(X,Y,W)} is to be solved. The {\tt Prolog} metapredicates {\tt var} and {\tt nonvar} are first used to distinguish four possible cases concerning {\tt X} and {\tt Y}.
If either {\tt X} or {\tt Y}, or both, is a variable---more precisely, it is bound to a variable at execution time---then a first answer is computed by clause $V_1$ by performing the normal {\tt Prolog} unification of {\tt X} and {\tt Y}, and clause $V_2$ can invoke clauses $C_1$ and $C_2$ in order to compute additional answers corresponding to non-syntactical unifiers of (the terms bound to) {\tt X} and {\tt Y} modulo the proximity relation $\simrel$.
If neither {\tt X} and {\tt Y} is (bound to) a variable,
then clauses $C_1$ and $C_2$ will compute answers corresponding to the unifiers of (the terms bound to) {\tt X} and {\tt Y} modulo $\simrel$.
Each computed answer also includes the appropriate constraints for the variable {\tt W}, thus representing a qualification level.

As far as permitted by {\tt Prolog}'s computation strategy---which solves goal atoms from left to right and tries to apply program clauses in their textual order---, the answers computed by {\bf (A)} are the same as those which would be computed by  $EQ_\simrel^{'}$.
Therefore, the na\"ive implementation guarantees soundness and weak completeness of goal solving---recall Definition \ref{dfn:goalsolsys}---except for failures in completeness due to {\tt Prolog}'s computation strategy.

As an illustration, let us show the behaviour of implementation {\bf (A)} when solving the unfication problem of
Example \ref{exmp:eqs-implementation}:

\begin{exmp}
\label{exmp:eqs-implementation-1}
Let $\langle \simrel, \U, \rdom \rangle$, $\Prog$ and $G$ be as in Example \ref{exmp:eqs-implementation}.
Then, $G''$ is the following $\clp{\rdom}$-goal:
\[
\begin{array}{l}
{\it qVal}(W_1),\ {\it qBound}(0.8,1,W_1),\ {\sim'}(Y,X,W_1), \\
{\it qVal}(W_2),\ {\it qBound}(0.8,1,W_2),\ {\sim'}(X,b,W_2), \\
{\it qVal}(W_3),\ {\it qBound}(0.8,1,W_3),\ {\sim'}(Y,c,W_3)
\end{array}
\]
In this simple example, the {\tt Prolog}'s computation strategy causes no loss of completeness, and the na\"ive {\tt Prolog} implementation of $\sim'$ allows to compute sol$_i$ ($i = 1,2,3$) as answers for $G''$.
\mathproofbox
\end{exmp}

However, {\tt Prolog}'s computation strategy leads in general to a very poor computational behaviour when executing the {\tt Prolog} code {\bf (A)} for predicate $\sim'$.
As justification for this claim, we argue as follows:

\begin{enumerate}
\item 
Solving a given $\sqclp{\simrel}{\qdom}{\cdom}$-goal $G$ yields to solving the translated $\clp{\cdom}$-goal $G''$.
As seen in Example \ref{exmp:eqs-implementation-1}, $G''$ may include subgoals such as \[(\star) \quad {\it qVal}(W),\ {\it qBound}(d,\tp,W),\ {\sim'}(X,Y,W)\] with $d \in \aqdom$.
Solving such a subgoal in a {\tt Prolog} system that relies on the na\"ive code {\bf (A)} for the predicate $\sim'$ may lead to compute infinitely many answers.
For instance, assuming a proximity relation $\simrel$ such that $\simrel(c,d) = \simrel(d,c) = \lambda$ with $c,d \in DC^1$, the {\tt Prolog} code {\bf (A)} will include, among others, the following clauses
\begin{center}
\footnotesize\it
\renewcommand{\arraystretch}{1.4}
\begin{tabular}{rp{11cm}}
\tiny $S_1$&\tt $\sim'$(X,Y,W) :- var(X), var(Y), $\sim'_v$(X,Y,W).\\
\tiny $V_1$&\tt $\sim'_v$(X,Y,W) :- qVal(W), X = Y.\\
\tiny $V_2$&\tt $\sim'_v$(X,Y,W) :- $\sim'_c$(X,Y,W).\\
\tiny $C_2$&\tt $\sim'_c$(c(X$_1$),c(Y$_1$),W) :- qVal(W), qBound(W,$\tp$,$\tp$), \\
&\tt$\quad$ qVal(W$_1$), qBound(W,$\tp$,W$_1$), $\sim'$(X$_1$,Y$_1$,W$_1$).\\
\end{tabular}
\end{center}
whose application, in the given textual order, yields to the computation of the following answers:
\begin{itemize}
\item $\langle \{Y \mapsto X\}, \{ W \mapsto \tp\}, \emptyset \rangle$
\item $\langle \{X\mapsto c(A),\ Y \mapsto c(A)\}, \{ W \mapsto \tp \}, \emptyset \rangle$
\item $\langle \{X\mapsto c(c(A)),\ Y \mapsto c(c(A))\}, \{ W \mapsto \tp \}, \emptyset \rangle$
\item \ldots
\end{itemize}
\item 
Due to the infinite sequence of {\tt Prolog} computed answers for the goal $(\star)$ shown in the previous item,
{\tt Prolog} never comes to computing other valid solutions for $(\star)$ involving data constructors other than $c$.
More concretely, due to $\simrel(c,d) = \simrel(d,c) = \lambda$, the {\tt Prolog} code {\bf (A)} must include clauses of the following form:
\begin{center}
\footnotesize\it
\renewcommand{\arraystretch}{1.4}
\begin{tabular}{rp{11cm}}
\tiny $C_{2.1}$&\tt $\sim'_c$(c(X$_1$),d(Y$_1$),W) :- qVal(W), qBound(W,$\tp$,$\lambda$), [...].\\
\tiny $C_{2.2}$&\tt $\sim'_c$(d(X$_1$),c(Y$_1$),W) :- qVal(W), qBound(W,$\tp$,$\lambda$), [...].\\
\tiny $C_{2.3}$&\tt $\sim'_c$(d(X$_1$),d(Y$_1$),W) :- qVal(W), qBound(W,$\tp$,$\tp$), [...].\\
\end{tabular}
\end{center}
If all these clauses happen to occur after the clause $C_2$ of item (1) in the textual order, {\tt Prolog}'s computation strategy will never come to the point of trying to apply them to compute answers for $(\star)$.
\end{enumerate}


Items (1) and (2) above show that the na\"ive implementation of $\sim'$ is inclined to go into infinite computations which may produce infinitely many computed answers of a certain shape,  while failing to compute some other answers needed for completeness. In situations  a bit more complex than the one considered in items (1) and (2) above, this unfortunate behaviour can lead to failure (i.e., compute no answer at all) for goals which do have solutions, as illustrated by the following example:

\begin{exmp}[Failure of the na\"ive implementation of $\sim'$]
\label{exmp:naive-loops}
Consider the admissible triple $\langle \simrel, \U, \rdom \rangle$ where $\simrel$ is a proximity relation such that: $\simrel(f,g) = \simrel(g,f) = 0.8$ and $\simrel(g,h) = \simrel(h,g) = 0.8$ where $f,g,h \in DC^1$.
Assume also a constant $a \in DC^0$.
Let $\Prog$ be the empty program and let $G$ be the following unification problem:
 \[\qat{(X==f(Y))}{W_1},\ \qat{(X==h(Z))}{W_2}  \sep W_1 \ge 0.5,\ W_2 \ge 0.5\]
 Then, using the na\"ive implementation {\bf (A)} of $\sim'$ leads to the following {\tt Prolog} code for the $\clp{\rdom}$-program $\Prog''$:
\begin{center}
\footnotesize\it
\renewcommand{\arraystretch}{1.4}
\begin{tabular}{rp{11cm}}
\tiny 1&\tt qVal(X) :- \{X > 0, X =< 1\}.\\
\tiny 2&\tt qBound(X,Y,Z) :- \{X =< Y * Z\}.\\
\tiny 3&\tt $\sim'$(X,Y,W) :- var(X), var(Y), $\sim'_v$(X,Y,Z).\\
\tiny 4&\tt $\sim'$(X,Y,W) :- var(X), nonvar(Y), $\sim'_v$(X,Y,Z).\\
\tiny 5&\tt $\sim'$(X,Y,W) :- nonvar(X), var(Y), $\sim'_v$(X,Y,Z).\\
\tiny 6&\tt $\sim'$(X,Y,W) :- nonvar(X), nonvar(Y), $\sim'_c$(X,Y,Z).\\
\tiny 7&\tt $\sim'_v$(X,Y,W) :- qVal(W), X = Y.\\
\tiny 8&\tt $\sim'_v$(X,Y,W) :- $\sim'_c$(X,Y,W).\\
\tiny 9&\tt $\sim'_c$(a,a,W) :- qVal(W), qBound(W,1,1).\\
\tiny 10&\tt $\sim'_c$(f(X),f(Y),W) :- qVal(W), qBound(W,1,1), [..]. \\
\tiny11&\tt $\sim'_c$(g(X),g(Y),W) :- qVal(W), qBound(W,1,1), [..].\\
\tiny12&\tt $\sim'_c$(h(X),h(Y),W) :- qVal(W), qBound(W,1,1), [..].\\
\tiny13&\tt $\sim'_c$(f(X),g(Y),W) :- qVal(W), qBound(W,1,0.8), [..].\\
\tiny14&\tt $\sim'_c$(g(X),f(Y),W) :- qVal(W), qBound(W,1,0.8), [..].\\
\tiny15&\tt $\sim'_c$(g(X),h(Y),W) :- qVal(W), qBound(W,1,0.8), [..].\\
\tiny16&\tt $\sim'_c$(h(X),g(Y),W) :- qVal(W), qBound(W,1,0.8), [..].\\
\end{tabular}
\end{center}
where the ellipsis ``{\small\tt [..]}'' stands for ``{\small\tt qVal(W$_1$), qBound(W,1,W$_1$), $\sim'$(X,Y,W$_1$)}''. Note that the definitions for the program transformations do not require any specific order for the final clauses.
On the other hand, $G''$ becomes the $\clp{\rdom}$-goal:
\begin{center}\it
\begin{tabular}{l}
$\quad$ qVal(W$_1$), qBound(0.5,1,W$_1$), $\sim'$(X,f(Y),W$_1$), \\
$\quad$ qVal(W$_2$), qBound(0.5,1,W$_2$), $\sim'$(X,h(Z),W$_2$) \\
\end{tabular}
\end{center}
When trying to solve $G''$ using the na\"ive implementation of $\sim'$, {\tt Prolog} successively computes infinitely many answers for the subgoal consisting of the first three atoms, none of which can be continued to a successful answer of the whole goal.
Therefore, the overall global computation fails. Since $G$ has valid solutions such as
\[\langle \{X\mapsto g(Y),\ Z\mapsto Y\},\ \{W_1\mapsto 0.8,\ W_2\mapsto 0.8\},\ \emptyset\rangle\]
and also valid ground solutions such as
\[\langle \{X\mapsto g(a),\ Y\mapsto a,\ Z\mapsto a\},\ \{W_1\mapsto 0.8,\ W_2\mapsto 0.8\},\ \emptyset\rangle\]
the incompleteness of {\tt Prolog}'s computation strategy causes weak completeness of SQCLP goal solving to fail in this example.
\mathproofbox
\end{exmp}

The problems just explained have a big impact concerning not only completeness, but also efficiency.
Therefore,  our {\tt Prolog}-based system for SQCLP programming discards the na\"ive implementation of the $EQ_\simrel^{'}$ clauses. Instead, the following {\tt Prolog} code for predicate $\sim'$ is used by our system:
\begin{center}
\footnotesize\it
\renewcommand{\arraystretch}{1.5}
\begin{tabular}{rp{11cm}}
\multicolumn{2}{@{\hspace{0mm}}l}{\normalfont\normalsize
{\bf (B)} Practical implementation of $\sim'$ intended for arbitrary proximity relations.}\\
\tiny $S_1$&\tt $\sim'$(X,Y,W) :- var(X), var(Y), $\sim'_v$(X,Y,W).\\
\tiny $S_2$&\tt $\sim'$(X,Y,W) :- var(X), nonvar(Y), $\sim'_c$(X,Y,W).\\
\tiny $S_3$&\tt $\sim'$(X,Y,W) :- nonvar(X), var(Y), $\sim'_c$(X,Y,W).\\
\tiny $S_4$&\tt $\sim'$(X,Y,W) :- nonvar(X), nonvar(Y), $\sim'_c$(X,Y,W).\\[2mm]
\tiny $V_1$&\tt $\sim'_v$(X,Y,W) :- qVal(W), X = Y.\\[2mm]
\tiny $C_1$&\tt $\sim'_c$(u,u',W) :- qVal(W), qBound(W,$\tp$,$\lambda$).\\
\tiny $C_2$&\tt $\sim'_c$(c(X$_1$,..,X$_n$),c'(Y$_1$,..,Y$_n$),W) :- qVal(W), qBound(W,$\tp$,$\lambda$),\\
&\tt $\qquad$ qVal(W$_1$), qBound(W,$\tp$,W$_1$), $\sim'$(X$_1$,Y$_1$,W$_1$),\\
&\tt $\qquad$ \ldots\\
&\tt $\qquad$ qVal(W$_n$), qBound(W,$\tp$,W$_n$), $\sim'$(X$_n$,Y$_n$,W$_n$).\\
\end{tabular}
\end{center}
where, again, clauses of the form $C_1$ are one for each $u,u'\in B_\cdom$ such that $\simrel(u,u') = \lambda\neq\bt$; and $C_2$ are one for each $c,c' \in DC^n$ such that $\simrel(c,c') = \lambda\neq\bt$
(including the case $c=c'$\!, $\simrel(c,c')=\tp\neq\bt$).

The difference between the implementation {\bf (B)} and the implementation {\bf (A)} is the use of the predicate call {\tt $\sim'_c$(X,Y,W)}
instead of {\tt $\sim'_v$(X,Y,W)} at the bodies of clauses $S_2$ and $S_3$ and the removal of clause $V_2$. These two changes have the effect of avoiding the enumeration of solutions when an equality between two variables is being solved.
For example, for the goal $(\star)$ shown above, the {\tt Prolog} code {\bf (B)} just computes the answer  $\langle \{Y \mapsto X\}, \{W\mapsto\tp\}, \emptyset \rangle$,
while the {\tt Prolog} code {\bf (A)} infinitely enumerates many computed answers, as explained before.
In general, answers computed by the implementation {\bf (B)} of $\sim'$ correspond to a more limited enumeration of solutions,
depending on the data constructor symbols present in the goal.
The following example illustrates the behaviour of implementaion {\bf (B)} in a more interesting case:
\begin{exmp}[Avoiding infinite computations]
\label{exmp:naive-loops-avoid}
Consider the admissible triple $\langle \simrel,\U,\rdom \rangle$ of Example \ref{exmp:naive-loops}, and let $\Prog$ be the empty program. Recall the goal $G''$ from Example \ref{exmp:naive-loops}:
\begin{center}\it
\begin{tabular}{l}
$\quad$ qVal(W$_1$), qBound(0.5,1,W$_1$), $\sim'$(X,f(Y),W$_1$), \\
$\quad$ qVal(W$_2$), qBound(0.5,1,W$_2$), $\sim'$(X,h(Z),W$_2$)
\end{tabular}
\end{center}
Then, for the subgoal goal consisting of the first three atoms of $G''$
the answers computed by {\tt Prolog} when the predicate $\sim'$ is implemented as in {\bf (B)} are:
\begin{center}
$\langle \{X\mapsto f(Y)\},\ \{W_1 \mapsto 1\},\ \emptyset \rangle$ and
$\langle \{X\mapsto g(Y)\},\ \{W_1 \mapsto 0.8\},\ \emptyset \rangle$.
\end{center}
And for the whole goal $G''$, the only computed answer is:
\[\langle \{X\mapsto g(Y),\ Z\mapsto Y\},\ \{W_1\mapsto 0.8,\ W_2\mapsto  0.8\},\ \emptyset \rangle. \mathproofbox\]
\end{exmp}

Note, however, that the optimization achieved by the move from {\bf (A)} to {\bf (B)}  has a trade-off to pay.
Soundness---in the sense of Definition \ref{dfn:goalsolsys}(1)---is preserved, because the set of computed answers for the implementation {\bf (B)} is a subset of the computed answers for the implementation {\bf (A)}.
However, weak completeness---in the sense of Definition \ref{dfn:goalsolsys}(2)---is not preserved in general, as shown by the following example.
\begin{exmp}
\label{exmp:eqs-implementation-2}
Let  $\langle \simrel, \U, \rdom \rangle$, $\Prog$ and $G$ be as in Example \ref{exmp:eqs-implementation}.
Remember that $G''$ is as shown in Example \ref{exmp:eqs-implementation-1}.
Then, considering the implementation {\bf (B)} of $\sim'$ for generic proximity relations, {\tt Prolog} only computes the answer sol$_1 = \langle \sigma_1, \mu_1, \emptyset \rangle$ for $G''$.
No computed answer subsumes the ground solutions
sol$_2$, sol$_3$ of $G$ shown in Example \ref{exmp:eqs-implementation}.
{\tt Prolog}'s computation strategy is not responsible for the lack of completeness in this case. \mathproofbox
\end{exmp}

Nevertheless, we conjecture that the implementation {\bf (B)} behaves as a flexible restriction of the goal solving system given by the implementation {\bf (A)}  in the sense of Definition \ref{dfn:flexrestr}.
Then, due to Lemma \ref{lema:flexrestr}, we conjecture correctness in the flexible sense for {\bf (B)},
In other words, we claim that our  {\tt Prolog}-based system for SQCLP using implementation {\bf (B)} of $\sim'$
is sound and we conjecture that it is also weakly complete in the flexible sense,
except for the unavoidable failures caused by {\tt Prolog}'s computation strategy.
This conjecture is confirmed as far as the Example \ref{exmp:eqs-implementation} is concerned,
because the computed answer sol$_1$ subsumes the other ground solutions sol$_2$ and sol$_3$ of $G$
in the flexible sense, as shown in the same example.


A further optimization of implementation {\bf (B)} is possible if the given proximity relation $\simrel$ is transitive --- i.e. a similarity.
In this case our prototype system implements $\sim'$ by means of the following {\tt Prolog} code:
\begin{center}
\footnotesize\it
\renewcommand{\arraystretch}{1.5}
\begin{tabular}{rp{11cm}}
\multicolumn{2}{@{\hspace{0mm}}l}{\normalfont\normalsize
{\bf (C)} Practical implementation of $\sim'$ intended for similarity relations.}\\
\tiny $S_1$&\tt $\sim'$(X,Y,W) :- var(X), var(Y), $\sim'_v$(X,Y,W).\\
\tiny $S_2$&\tt $\sim'$(X,Y,W) :- var(X), nonvar(Y), $\sim'_v$(X,Y,W).\\
\tiny $S_3$&\tt $\sim'$(X,Y,W) :- nonvar(X), var(Y), $\sim'_v$(X,Y,W).\\
\tiny $S_4$&\tt $\sim'$(X,Y,W) :- nonvar(X), nonvar(Y), $\sim'_c$(X,Y,W).\\[2mm]
\tiny $V_1$&\tt $\sim'_v$(X,Y,W) :- qVal(W), X = Y.\\[2mm]
\tiny $C_1$&\tt $\sim'_c$(u,u',W) :- qVal(W), qBound(W,$\tp$,$\lambda$).\\
\tiny $C_2$&\tt $\sim'_c$(c(X$_1$,..,X$_n$),c'(Y$_1$,..,Y$_n$),W) :- qVal(W), qBound(W,$\tp$,$\lambda$),\\
&\tt $\qquad$ qVal(W$_1$), qBound(W,$\tp$,W$_1$), $\sim'$(X$_1$,Y$_1$,W$_1$),\\
&\tt $\qquad$ \ldots\\
&\tt $\qquad$ qVal(W$_n$), qBound(W,$\tp$,W$_n$), $\sim'$(X$_n$,Y$_n$,W$_n$).\\
\end{tabular}
\end{center}
where the only difference w.r.t. implementation {\bf (B)} is that {\bf (C)} uses the predicate call {\tt $\sim'_v$(X,Y,W)}
instead of {\tt $\sim'_c$(X,Y,W)} at the bodies of clauses $S_2$ and $S_3$.

A useful way to understand the difference between {\bf (B)} and {\bf (C)}  is to think of both as different implementations of a unification algorithm modulo a given proximity relation $\simrel$.
In both cases, a predicate call {\tt $\sim'$(X,Y,W)} is intended to compute a unifier modulo $\simrel$ with qualification degree {\tt W} for {\tt X} and {\tt Y}---more precisely, for the terms  bound to {\tt X}  and {\tt Y} at run-time---and clauses $S_i$ $(i = 1,2,3,4)$ distinguish four possible cases in the same manner.
The two implementations differ only in the actions taken in each of these four cases.
The actions executed by implementation {\bf (B)} can be intuitively described as follows:
\begin{enumerate}
\item
{\em Case 1:} both {\tt X} and {\tt Y} are variables. \\
{\em Action:} just unify them (achieved by clause $V_1$).
\item
{\em Case 2:} 
{\tt X} is a variable and {\tt Y} is bound to a non-variable term. \\
{\em Actions:} Compute alternative solutions by binding {\tt X} to non-variable terms whose root symbol is
$\simrel$-close to the root symbol of the term bound to {\tt Y} (achieved by clauses $C_1$ and $C_2$). In particular one of these solutions will correspond to binding  {\tt X} to the term bound to {\tt Y}.

\item
{\em Case 3:} 
%
%
{\tt Y} is a variable and {\tt X} is bound to a non-variable term. \\
{\em Actions:} Compute alternative solutions by binding {\tt Y} to non-variable terms whose root symbol is
$\simrel$-close to the root symbol of the term bound to {\tt X} (achieved by clauses $C_1$ and $C_2$). In particular one of these solutions will correspond to binding  {\tt Y} to the term bound to {\tt X}.

\item
{\em Case 4:} both {\tt X} and {\tt Y} are bound to non-variable terms, both with root and $n$ children terms. \\
{\em Action:} first check that the root symbols of the terms bound to {\tt X} and {\tt Y} are $\simrel$-close;
then decompose these two terms and recursively proceed to unify the $i$-th child of the term bound to {\tt  X}
and the $i$-th child of the term bound to {\tt Y}, for $i = 1 \ldots n$ (achieved by clauses $C_1$ and $C_2$).
\end{enumerate}

On the other hand, an intuitive description of  implementation  {\bf (C)} is as follows:
\begin{enumerate}
\item
{\em Case 1:} both {\tt X} and {\tt Y} are variables. \\
{\em Action:} as in case (1) of implementation {\bf (B)}.
\item
{\em Case 2:} {\tt X} is a variable and {\tt Y} is bound to a non-variable term. \\
{\em Action:} just bind {\tt X} to the term bound to {\tt Y} (achieved by clause $V_1$).
\item
{\em Case 3:} {\tt Y} is a variable and {\tt X} is bound to a non-variable term. \\
{\em Action:} just bind {\tt Y} to the term bound to {\tt X} (achieved by clause $V_1$).
\item
{\em Case 4:} both {\tt X} and {\tt Y} are bound to non-variable terms, both with root and $n$ children terms. \\
{\em Action:} as in case (4) of implementation {\bf (B)}.
\end{enumerate}

Clearly, the difference between these two implementations  is limited to cases (2) and (3), where {\bf (B)}  enumerates a set of various alternative unifiers while {\bf (C)} behaves in a deterministic way, computing just one of these unifiers.
In fact, {\bf (C)} behaves as a {\tt Prolog} implementation of known unification algorithms modulo a given similarity relation $\simrel$,
as those presented in \cite{AF02,Ses02} (only for the qualification domain $\U$) and other related papers,
which are complete in the flexible sense for solving unification problems.
This is due to the fact that the substitution $\{X \mapsto t\}$ can be taken as the unique unifier computed for
a variable $X$ and a term $t$, that subsumes in the weak sense other possible unifiers thanks to the transitivity property of $\simrel$.

Concerning the behaviour of our {\tt Prolog}-based SQCLP system when  {\bf (C)} is used as the implementation of the $\sim'$ predicate,
we claim soundness for any choice of $\simrel$ (transitive or not), because all the computed answers can be also computed by {\bf (B)},
which is sound. In case that $\simrel$ is transitive, weak completeness in the flexible sense is the best behaviour that can be expected,
but more research is still needed to clarify this issue.
The example below shows that weak completeness in the flexible sense generally fails for unification problems
(and with more reason for general SQCLP goals), when $\simrel$ is not transitive.
In fact, the same example shows that transitivity of $\simrel$ is a necessary requirement for the completeness
(in the flexible sense) of unification algorithms modulo $\simrel$ of the kind presented in \cite{Ses02} and related papers.


\begin{exmp}
\label{exmp:eqs-implementation-3}
Consider for the last time the admissible triple $\langle \simrel, \U, \rdom \rangle$ of Example \ref{exmp:eqs-implementation},
the empty program, the goal $G$ shown in Example \ref{exmp:eqs-implementation-1},
and the CLP goal $G''$ obtained as translation of $G$ and shown in Example \ref{exmp:eqs-implementation-1}, which is:
\[
\begin{array}{l}
{\it qVal}(W_1),\ {\it qBound}(0.8,1,W_1),\ {\sim'}(Y,X,W_1), \\
{\it qVal}(W_2),\ {\it qBound}(0.8,1,W_2),\ {\sim'}(X,b,W_2), \\
{\it qVal}(W_3),\ {\it qBound}(0.8,1,W_3),\ {\sim'}(Y,c,W_3)
\end{array}
\]
Note that $G$ is a  unification problem modulo $\simrel$ with the three ground solutions shown in Example \ref{exmp:eqs-implementation}.
The proximity relation $\simrel$ is not transitive, because $\simrel(b,c) = 0.4 \not \geq 0.9 = \simrel(b,a) \sqcap \simrel(a,c)$.
The resolution of $G''$ by using the {\tt Prolog} code {\bf (C)} for $\sim'$ eventually reduces to solving a new goal of the form
$${\it qVal}(W_3),\ {\it qBound}(0.8,1,W_3),\ {\sim'}(b,c,W_3)$$
which fails, since $\simrel(b,c) = 0.4 \not \geq 0.8$.
In this example, {\tt Prolog}'s computation strategy is not responsible for the lack of completeness. \mathproofbox
%
\end{exmp}

We have just discussed three possible {\tt Prolog} implementations of the CLP clauses in the set $EQ_\simrel^{'}$, called {\bf (A)}, {\bf (B)} and {\bf (C)}. The {\tt Prolog}-based prototype system for SQCLP programming  presented in the next subsection  only supports implementations {\bf (B)} and {\bf (C)}, using two different predicates ${\it prox}/4$ and ${\it sim}/4$, respectively, to implement the behaviour of $\sim'$ appropriate in each case. By default, the system assumes implementation {\bf (B)}, and a program directive {\tt \#optimized\_unif} must be used in the case that  implementation {\bf (C)} is desired.

\subsection{{\tt (S)QCLP}: A Prototype System for SQCLP Programming}
\label{sec:practical:prototype}

The prototype implementation object of this subsection is publicly available, and can be found at:
\begin{center}
{\tt http://gpd.sip.ucm.es/cromdia/qclp}
\end{center}

The system currently requires the user to have installed either {\em SICStus Prolog} or {\em SWI-Prolog}, and it has been tested to work under Windows, Linux and MacOSX platforms.
The latest version available at the time of writing this paper is {\tt 0.6}.
If a latter version is available some things might have changed but in any case the main aspects of the system should remain the same. Please consult the {\em changelog} provided within the system itself for specific changes between versions.

SQCLP is a very general programming scheme and, as such, it supports different proximity relations, different qualification domains and different constraint domains when building specific instances of the scheme for any specific purpose.
As it would result impossible to provide an implementation for every admissible triple (or instance of the scheme), it becomes mandatory to decide in advance what specific instances will be available for writing programs in {\tt (S)QCLP}.
In essence:
\begin{enumerate}
\item
In its current state, the only available constraint domain is $\rdom$.
Thus, under both {\em SICStus Prolog} and {\em SWI-Prolog} the library {\tt clpr} will provide all the available primitives in {\tt (S)QCLP} programs.
\item
The available qualification domains are: `{\tt b}' for the domain $\B$; `{\tt u}' for the domain $\U$; `{\tt w}' for the domain $\W$; and any strict cartesian product of those, as e.g. `{\tt (u,w)}' for the product domain $\U{\otimes}\W$.
\item With respect to proximity relations, the user will have to provide, in addition to the two symbols and their proximity value, their {\em kind} (either predicate or constructor) and their {\em arity}. Both kind and arity must be the same for each pair of symbols having a proximity value different of $\bt$.
\end{enumerate}
Note, however, that when no specific proximity relation $\simrel$ is provided for a given program, $\sid$ is then assumed.
Under this circumstances, an obvious technical optimization consists in transforming the original program only with elim$_\qdom$, thus reducing the overload introduced in this case by elim$_\simrel$.
The reason behind this optimization is that for any given $\sqclp{\sid}{\qdom}{\cdom}$-program $\Prog$, it is also true that $\Prog$ is a $\qclp{\qdom}{\cdom}$-program, therefore $\elimD{\elimS{\Prog}}$ must semantically be equivalent to $\elimD{\Prog}$.
Nevertheless, $\elimD{\Prog}$ behaves more efficiently than $\elimD{\elimS{\Prog}}$ due to the reduced number of resulting clauses.
Thus, in order to improve the efficiency, the system will avoid the use of elim$_\simrel$ when no proximity relation is provided by the user.

The final available instances in the {\tt (S)QCLP} system are: $\sqclp{\simrel}{\tt b}{\tt clpr}$, $\sqclp{\simrel}{\tt u}{\tt clpr}$, $\sqclp{\simrel}{\tt w}{\tt clpr}$, $\sqclp{\simrel}{\tt (u,w)}{\tt clpr}$, \ldots{} and their counterparts in the QCLP scheme when $\simrel$ = $\sid$.

\subsubsection{Programming in {\tt (S)QCLP}}
\label{sec:practical:prototype:programming}

Programming in {\tt (S)QCLP} is straightforward if the user is accustomed to the Prolog programming style.
However, there are three syntactic differences with pure Prolog:
\begin{enumerate}
  \item Clauses implications are replaced by ``{\tt <-$d$-}'' where $d \in \aqdom$. If $d = \tp$, then the implication can become just ``{\tt <--}''. E.g. ``{\tt <-0.9-}'' is a valid implication in the domains $\U$ and $\W$; and ``{\tt <-(0.9,2)-}'' is a valid implication in the domain ${\U{\otimes}\W}$.
  \item Clauses in {\tt (S)QCLP} are not finished with a dot ({\tt .}). They are separated by layout, therefore all clauses in a {\tt (S)QCLP} program must start in the same column. Otherwise, the user will have to explicitly separate them by means of semicolons ({\tt ;}).
  \item After every body atom (even constraints) the user can provide a threshold condition using `{\tt \#}'. The notation `{\tt ?}' can also be used instead of some particular qualification value, but in this case the threshold condition `{\tt \#?}' can be omitted.
\end{enumerate}
Comments are as in Prolog:
\begin{verbatim}
% This is a line comment.
/* This is a multi-line comment, /* and they nest! */. */
\end{verbatim}
and the basic structure of a {\tt (S)QCLP} program is the following (line numbers are for reference):
\begin{center}
\small
\begin{tabular}{rp{11cm}}
\multicolumn{2}{@{\hspace{0mm}}l}{{\bf File:} {\em Peano.qclp}} \\[2mm]
\tiny 1 & \verb+% Directives...+ \\
\tiny 2 & \verb+# qdom w+ \\[2mm]
\tiny 3 & \verb+% Program clauses...+ \\
\tiny 4 & \verb+% num( ?Num )+ \\
\tiny 5 & \verb+num(z) <--+ \\
\tiny 6 & \verb+num(s(X)) <-1- num(X)+ \\
\end{tabular}
\end{center}
In the previous small program, lines {\tt 1}, {\tt 3} and {\tt 4} are line comments, line {\tt 2} is a program directive telling the compiler the specific qualification domain the program is written for, and lines {\tt 5} and {\tt 6} are program clauses defining the well-known Peano numbers.
As usual, comments can be written anywhere in the program as they will be completely ignored (remember that a line comment must necessarily end in a new line character, therefore the very last line of a file cannot contain a line comment),
and directives must be declared before any program clause.
There are three program directives in {\tt (S)QCLP}:
\begin{enumerate}
\item
The first one is ``{\tt \#qdom} {\em qdom}'' where {\em qdom} is any system available qualification domain, i.e. {\tt b}, {\tt u}, {\tt w}, {\tt (u,w)}\ldots{}
See line {\tt 2} in the previous program sample as an example.
This directive is mandatory because the user must tell the compiler for which particular qualification domain the program is written.
\item
The second one is ``{\tt \#prox} {\em file}'' where {\em file} is the name of a file (with extension {\tt .prox}) containing a proximity relation.
If the name of the file starts with a capital letter, or it contains spaces or any special character, {\em file} will have to be quoted with single quotes.
For example, assume that with our program file we have another file called {\em Proximity.prox}.
Then, we would have to write ``{\tt \#prox `Proximity'}'' to link the program with such proximity relation.
This directive is optional, and if omitted, the system assumes that the program is of an instance of the QCLP scheme.
\item
The third one is ``{\tt \#optimized\_unif}''. This directive tells the compiler that the program is intended to be used with the implementation {\bf (C)} for the predicate $\sim'$, as explained in Subsection \ref{sec:practical:SQCLP:eqsimrel}. 
\end{enumerate}

Proximity relations are defined in files of extension {\tt .prox} with the following form:
\begin{center}
\small
\begin{tabular}{rp{11cm}}
\multicolumn{2}{@{\hspace{0mm}}l}{{\bf File:} {\em Work.prox}} \\[2mm]
\tiny 1 & \verb+% Predicates: pprox( S1, S2, Arity, Value ).+ \\
\tiny 2 & \verb+pprox(wrote, authored, 2, (0.9,0)).+ \\[2mm]
\tiny 3 & \verb+% Constructors: cprox( S1, S2, Arity, Value ).+ \\
\tiny 4 & \verb+cprox(king_lear, king_liar, 0, (0.8,2)).+ \\
\end{tabular}
\end{center}
where the file can contain {\tt pprox/4} Prolog facts, for defining proximity between predicate symbols of any arity;
or {\tt cprox/4} Prolog facts, for defining proximity between constructor symbols of any arity.
The arguments of both {\tt pprox/4} and {\tt cprox/4} are: the two symbols, their arity and its proximity value.
Note that, although it is not made explicit the qualification domain this proximity relation is written for, all values in it must be of the same specific qualification domain, and this qualification domain must be the same declared in every program using the proximity relation.
Otherwise, the solving of equations may produce unexpected results or even fail.

Reflexive and symmetric closure is inferred by the system, therefore, there is no need for writing reflexive proximity facts, nor the symmetric variants of proximity facts already provided.
You can notice this in the previous sample file in which neither reflexive proximity facts, nor the symmetric proximity facts to those at lines {\tt 2} and {\tt 4} are provided.
In the case of being explicitly provided, additional (repeated) solutions might be computed for the same given goal, although soundness and weak completeness of the system should still be preserved.
Transitivity is neither checked nor inferred so the user will be responsible for ensuring it if desired.

As the reader would have already guessed, the file  {\tt Work.prox} implements the proximity relation $\simrel_r$ of Example \ref{exmp:pr} in {\tt (S)QCLP}. Finally, the program $\Prog_r$ of Example \ref{exmp:pr} can be represented in {\tt (S)QCLP} as follows:
\begin{center}
\small
\begin{tabular}{rp{11cm}}
\multicolumn{2}{@{\hspace{0mm}}l}{{\bf File:} {\em Work.qclp}} \\[2mm]
\tiny 1 & \verb+# qdom (u,w)+ \\
\tiny 2 & \verb+# prox 'Work'+ \\[2mm]
\tiny 3 & \verb+% famous( ?Author )+ \\
\tiny 4 & \verb+famous(shakespeare) <-(0.9,1)-+ \\[2mm]
\tiny 5 & \verb+% wrote( ?Author, ?Book )+ \\
\tiny 6 & \verb+wrote(shakespeare, king_lear) <-(1,1)-+ \\
\tiny 7 & \verb+wrote(shakespeare, hamlet) <-(1,1)-+ \\[2mm]
\tiny 8 & \verb+% good_work( ?Work )+ \\
\tiny 9 & \verb+good_work(X) <-(0.75,3)- famous(Y)#(0.5,100), authored(Y,X)+ \\
\end{tabular}
\end{center}
Note that, at line {\tt 1} the qualification domain $\U{\otimes}\W$ is declared, and at line {\tt 2} the proximity relation at {\tt Work.prox} is linked to the program.
In addition, observe that one threshold constraint is imposed for a body atom in the program clause at line {\tt 9}, effectively requiring to prove {\tt famous(Y)} for a qualification value of {\em at least} {\tt (0.5,100)} to be able to use this program clause.

Finally, we explain how constraints are written in {\tt (S)QCLP}.
As it has already been said, only $\rdom$ is available, thus both in {\em SICStus Prolog} and {\em SWI-Prolog} the library {\tt clpr} is the responsible for providing the available primitive predicates.
Given that constraints are primitive atoms of the form {\tt r($\ntup{{\tt t}}{n}$)} where {\tt r} $\in PP^n$ and {\tt t$_i$} are terms; primitive atoms share syntax with usual Prolog atoms.
At this point, and having that many of the primitive predicates are syntactically operators (hence not valid identifiers), the syntax for predicate symbols has been extended to include operators, therefore predicate symbols like $op_+ \in PP^3$, which codifies the operation {\tt +} in a 3-ary predicate, will let us to build constraints of the form {\tt +(A,B,C)}, that must be understood as in $A+B=C$ or $C=A+B$.
Similarly, predicate symbols like $cp_> \in PP^2$, which codifies the comparison operator {\tt >} in a binary predicate, will let us to build constraints of the form {\tt >(A,B)}, that must be understood as in $A > B$.
Any other primitive predicate such as {\em maximize} $\in PP^1$, will let us to build constraints like {\tt maximize(X)}.
Valid primitive predicate symbols include {\tt +}, {\tt -}, {\tt *}, {\tt /}, {\tt >}, {\tt >=}, {\tt =<}, {\tt <}, {\tt maximize}, {\tt minimize}, etc.

Threshold constraints can also be provided for primitive atoms in the body of clauses with the usual notation.
Note, however, that due the semantics of SQCLP, all primitive atoms can be trivially proved with $\tp$ if they ever succeeds---so threshold constraints become, in this case, of no use.

The syntax for constraints explained above follows the standard syntax for atoms.
Nonetheless, the system also allows to write these constraints in a more natural infix notation.
More precisely, {\tt +(A,B,C)} can be also written in the infix form {\tt A+B=C} or {\tt C=A+B}, and {\tt >(X,Y)} in the infix form {\tt X>Y}; and similarly for other $op$ and $cp$ constraints.
When using infix notation, threshold conditions can be set by (optionally) enclosing the primitive atom between parentheses, therefore becoming {\tt (A+B=C)\#$\tp$}, {\tt (C=A+B)\#$\tp$} or {\tt (X>Y)\#$\tp$} (or any other valid qualification value or `?').
Using parentheses is recommended to avoid understanding that the threshold condition is set only for the last term in the constraint, which would make no sense.
Note that even in infix notation, operators cannot be nested, that is, terms {\tt A}, {\tt B}, {\tt C}, {\tt X} and {\tt Y} cannot have operators as main symbols (neither in prefix nor in infix notation), so the infix notation is just a syntactic sugar of its corresponding prefix notation.

As a final example for constraints, one could write the predicate {\tt double/2} in {\tt (S)QCLP}, for computing the double of any given number, with just the clause \verb+double(N,D) <-- *(N,2,D)+, or \verb+double(N,D) <-- N*2=D+ for a clause with a more natural syntax.

\subsubsection{The interpreter for {\tt (S)QCLP}}

The interpreter for {\tt (S)QCLP} has been implemented on top of both {\em SICStus Prolog} and {\em SWI-Prolog}.
To load it, one must first load her desired (and supported) Prolog system and then load the main file of the interpreter---i.e. {\tt qclp.pl}---, that will be located in the main {\tt (S)QCLP} folder among other folders.
Once loaded, one will see the welcome message and will be ready to compile and load programs, and to execute goals.
\begin{verbatim}
WELCOME TO (S)QCLP 0.6
(S)QCLP is free software and comes with absolutely no warranty.
Support & Updates: http://gpd.sip.ucm.es/cromdia/qclp.

Type ':help.' for help.
yes
| ?-
\end{verbatim}

From the interpreter for {\tt (S)QCLP} one can, in addition to making use of any standard Prolog goals, use the specific {\tt (S)QCLP} commands required for both interacting with the {\tt (S)QCLP} system, and  for compiling/loading SQCLP programs.
All these commands take the form:
\begin{center}
\tt :command.
\end{center}
if they do not require arguments, or:
\begin{center}
\tt :command({\em Arg}$_1$, \ldots, {\em Arg}$_n$).
\end{center}
if they do; where each argument {\tt\em Arg}$_i$ must be a Prolog atom unless stated otherwise.
The most useful commands are:
\begin{itemize}
\item {\tt :cd(}{\em Folder}{\tt ).} \\
Changes the working directory to {\em Folder}. {\em Folder} can be an absolute or relative path.
\item {\tt :compile(}{\em Program}{\tt ).} \\
Compiles the {\tt (S)QCLP} program `{\em Program}{\tt .qclp}' producing the equivalent  Prolog program in the file `{\em Program}{\tt .pl}'.
\item {\tt :load(}{\em Program}{\tt ).} \\
Loads the already compiled {\tt (S)QCLP} program `{\em Program}{\tt .qclp}' (note that the file `{\em Program}{\tt .pl}' must exist for the program to correctly load).
\item {\tt :run(}{\em Program}{\tt ).} \\
Compiles the {\tt (S)QCLP} program `{\em Program}{\tt .qclp}' and loads it afterwards.
This command is equivalent to executing: {\tt :compile(}{\em Program}{\tt ), :load(}{\em Program}{\tt ).}
\end{itemize}

For illustration purposes, we will assume that you have the files {\tt Work.prox} and {\tt Work.qclp} (both as seen before) in the folder {\tt $\sim$/examples}.
Under these circumstances, after loading your preferred Prolog system and the interpreter for {\tt (S)QCLP}, one would only have to change the working directory to that where the files are located:
\begin{Verbatim}[commandchars=\\\{\}]
| ?- :cd('\prox/examples').
\end{Verbatim}
and run the program:
\begin{verbatim}
| ?- :run('Work').
\end{verbatim}
If no errors are encountered, one should see the output:
\begin{verbatim}
| ?- :run('Work').
<Work> Compiling...
<Work> QDom: 'u,w'.
<Work> Prox: 'Work'.
<Work> Translating to QCLP...
<Work> Translating to CLP...
<Work> Generating code...
<Work> Done.
<Work> Loaded.
yes
\end{verbatim}
and now everything is ready to execute goals for the program loaded.

\subsubsection{Executing SQCLP-Goals}
\label{sec:practical:prototype:exec}

Recall that goals have the form
$\qat{A_1}{W_1},\ \ldots,\ \qat{A_m}{W_m} \sep W_1 \!\dgeq^? \beta_1,\ \ldots,\ W_m \dgeq^? \!\beta_m$
which in actual {\tt (S)QCLP} syntax becomes:
\begin{verbatim}
| ?- A1#W1, ..., Am#Wm :: W1 >= B1, ..., Wm >= Bm.
\end{verbatim}
Note the following:
\begin{enumerate}
\item Goals must end in a dot ({\tt .}).
\item The symbol `$\sep$' is replaced by `\verb+::+'.
\item The symbol `${\dgeq}^?$' is replaced by `\verb+>=+' (and this is independent of the qualification domain in use, so that it may mean $\le$ in $\W$).
\item Conditions of the form $W \dgeq^?\ ?$ {\em must be omitted}, therefore $\qat{A_1}{W_1}, \qat{A_2}{W_2} \sep W_1 \dgeq^?\ ?, W_2 \dgeq^? \!\beta_2$ becomes ``\verb+A1#W1, A2#W2 :: W2 >= B2.+'',
and $\qat{A}{W}\sep W \dgeq^?\ ?$ becomes just ``\verb+A#W.+''.
\end{enumerate}

Assuming now that we have loaded the program {\tt Work.qclp} as explained before, we can execute the goal $\qat{good\_work(king\_liar)}{W}\sep W \dgeq^? (0.5,100)$:
\begin{verbatim}
| ?- good_work(king_liar)#W::W>=(0.5,10).
W = (0.6,5.0) ? ;
W = (0.675,4.0) ? ;
no
\end{verbatim}
Note that the system computes two answers, with different qualification values. In this simple example, the second computed answer provides a better qualification value. In general, different computed answers for the same goal come with different qualification values and it is not always the case that one of the answers provides the optimal qualification value.

\subsubsection{Examples}
\label{sec:practical:prototype:examples}

To finish this subsection, we are now showing some additional goal executions using the interpreter for {\tt (S)QCLP} and the programs displayed along the paper.

\paragraph{Peano.}
Consider the program {\tt Peano.qclp} as displayed at the beginning of Subsection \ref{sec:practical:prototype:programming}.
Qualifications in this program are intended as a cost measure for obtaining a given number in the Peano representation, assuming that each use of the clause at line {\tt 6} requires to pay {\em at least} {\tt 1}.
In essence, threshold conditions will impose an upper bound over the maximum number obtainable in goals containing the atom {\tt num(X)}.
Therefore if we ask for numbers {\em up to} a cost of {\tt 3} we get the following answers:
\begin{center}
\small
\begin{tabular}{rp{11cm}}
\tiny Goal & \verb+?- num(X)#W::W>=3.+ \\[2mm]
\tiny Sol$_1$ & \verb+W = 0.0, X = z ? ;+ \\
\tiny Sol$_2$ & \verb+W = 1.0, X = s(z) ? ;+ \\
\tiny Sol$_3$ & \verb+W = 2.0, X = s(s(z)) ? ;+ \\
\tiny Sol$_4$ & \verb+W = 3.0, X = s(s(s(z))) ? ;+ \\
\tiny & \verb+no+ \\
\end{tabular}
\end{center}

\paragraph{Work.}
Consider now the program {\tt Work.qclp} and the proximity relation {\tt Work.prox}, both as displayed in Subsection \ref{sec:practical:prototype:programming} above.
In this program, qualifications behave as the conjunction of the certainty degree of the user confidence about some particular atom, and a measure of the minimum cost to pay for proving such atom.
In these circumstances, we could ask---just for illustration purposes---for famous authors with a minimum certainty degree---for them being actually famous---of {\tt 0.5}, and with a proof cost of no more than {\tt 30} (think of an upper bound for possible searches in different databases).
Such a goal would have, in this very limited example, only the following solution:
\begin{center}
\small
\begin{tabular}{rp{11cm}}
\tiny Goal & \verb+?- famous(X)#W::W>=(0.5,30).+ \\[2mm]
\tiny Sol$_1$ & \verb+W = (0.9,1.0), X = shakespeare ? ;+ \\
\tiny & \verb+no+ \\
\end{tabular}
\end{center}
meaning that we can have a confidence of {\tt shakespeare} being famous of {\tt 0.9}, and that we can prove it with a cost of {\tt 1}.

Now, in a similar fashion we could try to obtain different works that can be considered as good works by using the last clause in the example.
Limiting the search to those works that can be considered good with a qualification value better or equal to {\tt (0.5,100)} produce the following result:
\begin{center}
\small
\begin{tabular}{rp{11cm}}
\tiny Goal & \verb+?- good_work(X)#W::W>=(0.5,100).+ \\[2mm]
\tiny Sol$_1$ & \verb+W = (0.675,4.0), X = king_lear ? ;+ \\
\tiny Sol$_2$ & \verb+W = (0.6,5.0), X = king_liar ? ;+ \\
\tiny & \verb+no+ \\
\end{tabular}
\end{center}
A valid ground answer for this goal is $gsol = \langle \eta, \rho, \emptyset \rangle$ where $\eta = \{X \mapsto {\it king\_liar}\}$ and $\rho = \{W \mapsto (0.675, 4)\}$ (which corresponds to the second computed answer for the ground goal displayed in Subsection \ref{sec:practical:prototype:exec}).
Note that the first computed answer shown above is $ans = \langle \sigma, \mu, \emptyset\rangle$ where $\sigma = \{X \mapsto {\it king\_lear}\}$ and $\mu = \{W \mapsto (0.675, 4)\}$ which subsumes $gsol$ in the flexible sense via $\nu = \varepsilon \in \Sol{\rdom}{\emptyset}$.

\paragraph{Library.}
Finally, consider the program $\Prog_s$ and the proximity relation $\simrel_s$, both as displayed in Figure \ref{fig:library} of Section \ref{sec:sqclp}.
As it has been said when this example was introduced, the predicate {\em guessRdrLvl} takes advantage of attenuation factors to encode heuristic rules to compute reader levels on the basis of vocabulary level and other book features.
As an illustration of use, consider the following goal:
\begin{center}
\small
\begin{tabular}{rp{11cm}}
\tiny Goal & \verb+?- guessRdrLvl(book(2, 'Dune', 'F. P. Herbert', english, sciFi,+\\
& \verb+   medium, 345), Level)#W.+ \\[2mm]
\tiny Sol$_1$ & \verb+W = 0.8, Level = intermediate ? ;+ \\
& $\cdots$ \\
\tiny Sol$_6$ & \verb+W = 0.7, Level = upper ?+ \\
\tiny & \verb+yes+ \\
\end{tabular}
\end{center}
Here we ask for possible ways of classifying the second book in the library according to reader levels.
We obtain as valid solutions, among others, {\tt intermediate} with a certainty factor of {\tt 0.8}; and {\tt upper} with a certainty factor of {\tt 0.7}.
These valid solutions show that the predicate {\em guessRdrLvl} tries with different levels for any certain book based on the heuristic implemented by the qualified clauses.

To conclude, consider now the goal proposed in Section \ref{sec:sqclp} for this program.
For such goal we obtain:
\begin{center}
\small
\begin{tabular}{rp{11cm}}
\tiny Goal & \verb+?- search(german, essay, intermediate, ID)#W::W>=0.65.+\\[2mm]
\tiny Sol$_1$ & \verb+W = 0.8, ID = 4 ?+ \\
\tiny & \verb+yes+ \\
\end{tabular}
\end{center}
What tells us that the forth book in the library is written in German, it can be considered to be an essay, and it is targeted for an intermediate reader level.
All this with a certainty degree of {\em at least} {\tt 0.8}.

\subsection{Efficiency}
\label{sec:practical:efficiency}
The minimum---and unavoidable---overload introduced by qualifications and proximity relations in the transformed programs manifests itself in the case of {\tt (S)QCLP} programs which use the identity proximity relation and have $\tp$ as the attenuation factor of all their clauses.
In order to measure this overload we have made some experiments using some program samples, taken from the {\em SICStus Prolog Benchmark} that can be found in:
\begin{center}
\tt http://www.sics.se/isl/sicstuswww/site/performance.html
\end{center}
and we have compared the time it took to repeatedly execute a significant number of times each program in both {\tt (S)QCLP} and {\em SICStus Prolog} making use of a {\em slightly} modified (to ensure a correct behaviour in both systems) version of the harness also provided in the same site.

From all the programs available in the aforementioned site, we selected the following four:
\begin{itemize}
\item {\em naivrev:} na\"{i}ve implementation of the predicate that reverses the contents of a list.
\item {\em deriv:} program for symbolic derivation.
\item {\em qsort:} implementation of the well-known sorting algorithm {\em Quicksort}.
\item {\em query:} obtaining the population density of different countries.
\end{itemize}
No other program could be used because they included impure features such as cuts which are not currently supported by our system. In order to adapt these Prolog programs to our setting
the following modifications were required:
\begin{enumerate}
\item All the program clause are assumed to have $\tp$ as attenuation factor.
      After including these attenuation factors, we obtain as results QCLP programs.
      More specifically we obtain two QCLP programs for each initial Prolog program,
      one using  the qualification domain $\B$ (because this domain uses trivial constraints), and another using  the qualification domain $\U$ (which uses $\rdom$-constraints).
\item We define an empty proximity relation, allowing us to obtain two additional
SQCLP-programs.
\item By means of the program directive ``{\tt \#optimized\_unif}'' defined in Subsection \ref{sec:practical:prototype:programming}, each SQCLP program can be also executed
    in this optimized mode. Therefore each original Prolog Program produces six {\tt (S)QCLP}
    programs, denoted as Q(b), Q(u), PQ(b), PQ(u), SQ(b) and SQ(u) in Table \ref{table:results}.
\end{enumerate}

Additionally some  minor modifications to the program samples have been introduced for compatibility reasons, i.e. additions using the predicate {\tt is/2} were replaced, both in the Prolog version of the benchmark and in the multiple {\tt (S)QCLP} versions, by {\tt clpr} constraints.
In any case, all the program samples used for this benchmarks in this subsection can be found in the folder {\tt benchmarks/} of the {\tt (S)QCLP} distribution.

Finally, we proceeded to solve the same goals for every version of the benchmark programs, both in {\em SICStus Prolog} and in {\tt (S)QCLP}.
The benchmark results can be found in Table \ref{table:results}. All the experiments were performed in a computer with a Intel(R) Core(TM)2 Duo CPU at 2.19GHz and with 3.5 GB RAM.

\begin{table}[ht]
\caption{Time overload factor with respect to Prolog}
\label{table:results}
\begin{minipage}{\textwidth}
\begin{tabular}{lrrrrrr}
\hline\hline
Program &
Q(b)\footnote{$\qclp{\B}{\rdom}$ version (i.e. the program does not have the {\tt \#prox} directive).} &
Q(u)\footnote{$\qclp{\U}{\rdom}$ version (i.e. the program does not have the {\tt \#prox} directive).} &
PQ(b)\footnote{$\sqclp{\sid}{\B}{\rdom}$ version.} &
PQ(u)\footnote{$\sqclp{\sid}{\U}{\rdom}$ version.} &
SQ(b)\footnote{$\sqclp{\sid}{\B}{\rdom}$ version with directive {\tt \#optimized\_unif}.} &
SQ(u)\footnote{$\sqclp{\sid}{\U}{\rdom}$ version with directive {\tt \#optimized\_unif}.} \\
\hline
naivrev & 1.80  & 10.71 & 4289.79 & 4415.11 & 56.22& 65.75\\
deriv   & 1.94  & 10.60 &  331.45 &  469.67 & 29.63 & 39.32 \\
qsort   & 1.05  & 1.11 & 135.59  & 136.98 & 2.51 & 2.83\\
query   & 1.02  & 1.12 & 7.17  & 7.13 & 3.80 & 3.88\\
\hline\hline
\end{tabular}
\vspace{-2\baselineskip}
\end{minipage}
\end{table}

The results in the table indicate the slowdown factor obtained for each version of each program.
For instance, the first column indicates that the time required for evaluating the goal corresponding to the sample program {\em naivrev} in $\qclp{\B}{\rdom}$ is about 1.80 times the required time for the evaluation of the same goal in Prolog. Next we discuss the results:
\begin{itemize}
\item {\em Influence of the qualification domain.}
      In general the difference between the slowdown factors obtained for the two considered
      qualification domains is not large. However, in the case of QCLP-programs {\em naivrev} and  {\em deriv}  the difference increases notably.
      This is due to the different ratios of the $\B$-constraints w.r.t. the program and
      $\U$-constraints w.r.t. the program. It must be noticed that the transformed programs
      are the same in both cases, but for the implementation of {\tt qval} and {\tt qbound}
      constraints, which is more complex for $\U$ as one can see in Subsection \ref{sec:practical:SQCLP}.
      In the case of {\em naivrev} and  {\em deriv} this makes a big difference because
      the number of computation steps directly required by the programs is much smaller
      than in the other cases. Thus the slowdown factor becomes noticeable for the qualification domain $\U$ in computations which require a large number of steps.

\item {\em Influence of the proximity relation.}
The introduction of a proximity relation---even the identity---is very significative,
since unification in the original {\tt Prolog} program is handled by calls to the predicate $\sim'$ in the SQCLP program.
This is particularly relevant  when the computation introduces large constructor terms, as in the case of {\em naivrev} which deals with
{\tt Prolog} lists. The efficient Prolog unification is replaced by an explicit term decomposition.

\item \label{review1:item30} {\em Influence of the optimized unification.} As seen in the table, the use of the program directive {\tt \#optimized\_unif} causes a clear increase in the efficiency of goal solving for these examples.
This is due to  the use of the implementation {\bf (C)} for the predicate $\sim'$ instead of the implementation {\bf (B)} (see Subsection \ref{sec:practical:SQCLP}).
The speed-up is especially noticeable when large data structures are involved in the unification as can be seen for the sample programs {\em naivev} and {\em deriv}. The reason is that the implementation {\bf (C)} avoids costly term decompositions required by the other implementation.
\end{itemize}

\section{Conclusions}
\label{sec:conclusions}


In our recent work \cite{RR10} we extended the classical CLP scheme to a new programming scheme SQCLP whose instances $\sqclp{\simrel}{\qdom}{\cdom}$ were parameterized by a proximity relation $\simrel$, a qualification domain $\qdom$ and a constraint domain $\cdom$.
This new scheme offered extra facilities for dealing with expert knowledge representation and flexible query answering.
In this paper we have set the basis for a practical use of SQCLP by providing
a prototype implementation on top CLP($\rdom$) systems like {\em SICStus Prolog} and {\em SWI-Prolog},
based on semantically correct program transformation techniques and supporting several  interesting instances of the scheme.

The transformation techniques presented in Section \ref{sec:implemen} work over programs and goals in two steps, formalized as the composition of two  transformations:  elim$_\simrel$ and elim$_\qdom$. Our mathematical results show that elim$_\simrel$ replaces the explicit use of a proximity relation by
using just qualification values and clause annotations, which are in turn replaced by purely CLP computations thanks to  elim$_\qdom$.
The composed effect of the two transformations ultimately enables to solve goals for SQCLP programs by applying any capable CLP goal solving system
to their CLP translations. 


The prototype implementation presented in Section \ref{sec:practical} relies on the transformation techniques, improved with some optimiztions.
It has finally allowed us to execute all the examples shown in this paper---and in previous ones---, and a series of benchmarks for measuring the overload actually introduced by proximity relations---or by similarity relations---and by clause annotations and qualifications.
While we are aware that the prototype implementation presented in this paper has to be considered a research tool (and as such, we  admit that it cannot be used for industrial applications), we think that it can contribute to the field as a quite 
solid implementation of an extension of CLP($\rdom$) with proximity relations and qualifications.

Some related implementation techniques and systems have been presented in the Introduction.
However, as far as we know, no other implementation in this field has ever provided simultaneous support for proximity (and similarity) relations, qualifications via clause annotations and CLP($\rdom$) style programming.
Moreover, the development of our prototype has used both semantically correct methods and careful optimizations, 
aiming at a balance between theoretical foundations  and a sound but practical system.



In the future, and taking advantage of the prototype system we have already developed, we plan to investigate possible applications which can profit from proximity relations and qualifications, such as in the area of flexible query answering.
In particular, we plan to investigate application related to flexible answering of queries to XML documents, in the line of \cite{CDG+09} and other related papers.

As support for practical applications, we also plan to increase the repertoire of constraint and qualification domains which can be used in the {\tt (S)QCLP} prototype, adding the constraint domain $\mathcal{FD}$ and the qualification domain $\W_d$ defined in Section 2.2.3 of \cite{RR10TR}.
On a more theoretical line, other possible lines of future work include:
a) investigation of unification modulo a given proximity relation $\simrel$, not assuming transitivity for $\simrel$ and proving soundness and completeness properties for the resulting unification algorithm;
b) building upon (a), extension of the SLD($\qdom$) resolution procedure presented in \cite{RR08} to a SQCLP goal solving procedure able to work with constraints and a proximity relation, including also soundness and completeness proofs;
and c) extension of the QCFLP ({\em qualified constraint functional logic programming}) scheme in \cite{CRR09} to work with a proximity relation and higher-order functions, as well as the implementation of the resulting scheme in the CFLP($\cdom$)-system {\sf Toy} \cite{toy}.

\section*{Acknowledgements}
The authors would like to thank to the anonymous reviewers, whose detailed and constructive comments and suggestions helped to revise and  improve the paper;
and to Jes\'us Almendros for pointing to bibliographic references in the area of flexible query answering.

\newpage

\bibliographystyle{acmtrans}

\end{document}